\newcommand{\subversion}[1]{}
\newtheorem{theorem}{Theorem}[section]
\newtheorem*{theorem*}{Theorem}
\newtheorem*{corollary*}{Corollary}
\newtheorem{conjecture}[theorem]{Conjecture}
\newtheorem{lemma}[theorem]{Lemma}
\newtheorem{claim}[theorem]{Claim}
\newtheorem{fact}[theorem]{Fact}
\newtheorem{corollary}[theorem]{Corollary}
\theoremstyle{remark}
\theoremstyle{definition}
\newtheorem{definition}[theorem]{Definition}
\numberwithin{equation}{section}
\newcommand\numberthis{\addtocounter{equation}{1}\tag{\theequation}}
\newtheorem*{rep@theorem}{\rep@title}
\newcommand{\newreptheorem}[2]{
\newenvironment{rep#1}[1]{
 \def\rep@title{#2 \ref{##1}}
 \begin{rep@theorem}}
 {\end{rep@theorem}}}
\newcommand{\bit}{\{0,1\}}
\newcommand{\reg}[1]{\mathsf{#1}}
\newcommand{\ignore}[1]{}
\newcommand{\proj}[1]{\ketbra{#1}{#1}}
\newcommand{\E}{\mathop{\mathbb{E}}}
\newcommand{\N}{\mathbb{N}}
\newcommand{\C}{\mathbb{C}}
\newcommand{\negl}{\mathrm{negl}}
\newcommand\id{\mathbbm{1}}
\newcommand{\linear}{\mathrm{L}}
\newcommand{\unitary}{\mathrm{U}}
\newcommand{\bx}{\bm{x}}
\newcommand{\by}{\bm{y}}
\newcommand{\ba}{\bm{a}}
\newcommand{\ind}{\mathbbm{1}}
\newcommand{\supp}{\mathrm{supp}}
\newcommand{\ot}{\otimes}
\newcommand{\poly}{\mathrm{poly}}
\newcommand{\eps}{\epsilon}
\newcommand{\Haar}{\mathrm{Haar}}
\newcommand{\HaarMeasure}{\Haar}
\DeclarePairedDelimiterX{\hsip}[2]{\langle}{\rangle_{\tiny{\mathtt{HS}}}\xspace}{#1, #2}
\newcommand\algo{\mathcal}
\newcommand{\mom}[3]{\cM_{#1}^{(#2)}\left( #3 \right)}
\newcommand{\momhaart}[1]{\mom{{\rm Haar}}{t}{#1}}
\newcommand{\momr}[1]{\mom{R}{t}{#1}}
\newcommand{\mompft}[1]{\mom{PF}{t}{#1}}
\newcommand{\mompfct}[1]{\mom{PFC}{t}{#1}}
\newcommand{\deq}{\coloneqq}
\newcommand{\distinct}{\mathrm{distinct}}
\newcommand{\cA}{\ensuremath{\mathcal{A}}}
\newcommand{\cE}{\ensuremath{\mathcal{E}}}
\newcommand{\cF}{\ensuremath{\mathcal{F}}}
\newcommand{\cH}{\ensuremath{\mathcal{H}}}
\newcommand{\cK}{\ensuremath{\mathcal{K}}}
\newcommand{\cM}{\ensuremath{\mathcal{M}}}
\newcommand{\cN}{\ensuremath{\mathcal{N}}}
\newcommand{\cV}{\ensuremath{\mathcal{V}}}
\newcommand{\cW}{\ensuremath{\mathcal{W}}}
\newcommand{\cX}{\ensuremath{\mathcal{X}}}
\newcommand{\bits}{\{0,1\}}
\title{Simple constructions of linear-depth $t$-designs \\ and pseudorandom unitaries}
\author[1]{Tony Metger}
\author[2]{Alexander Poremba}
\author[3]{Makrand Sinha}
\author[4]{Henry Yuen}
\affil[1]{ETH Zurich}
\affil[2]{Massachusetts Institute of Technology}
\affil[3]{University of Illinois, Urbana-Champaign}
\affil[4]{Columbia University}
\begin{document}
\date{\vspace{-5ex}}

\maketitle

\begin{abstract}
Uniformly random unitaries, i.e.~unitaries drawn from the Haar measure, have many useful properties, but cannot be implemented efficiently.
This has motivated a long line of research into random unitaries that ``look'' sufficiently Haar random while also being efficient to implement.
Two different notions of derandomisation have emerged: 
$t$-designs are random unitaries that information-theoretically reproduce the first $t$ moments of the Haar measure, and pseudorandom unitaries (PRUs) are random unitaries that are computationally indistinguishable from Haar random. 

In this work, we take a unified approach to constructing $t$-designs and PRUs. 
For this, we introduce and analyse the ``$PFC$ ensemble'', the product of a random computational basis permutation $P$, a random binary phase operator $F$, and a random Clifford unitary $C$.
We show that this ensemble reproduces exponentially high moments of the Haar measure.
We can then derandomise the $PFC$ ensemble to show the following: 
\begin{itemize}
\item \textbf{Linear-depth $t$-designs.} We give the first construction of a (diamond-error) approximate $t$-design with circuit depth linear in $t$.
This follows from the $PFC$ ensemble by replacing the random phase and permutation operators with their $2t$-wise independent counterparts.
\item \textbf{Non-adaptive PRUs.}  We give the first construction of PRUs with non-adaptive security, i.e.~we construct unitaries that are indistinguishable from Haar random to polynomial-time distinguishers that query the unitary in parallel on an arbitary state.
This follows from the $PFC$ ensemble by replacing the random phase and permutation operators with their pseudorandom counterparts.
\item \textbf{Adaptive pseudorandom isometries.} We show that if one considers isometries (rather than unitaries) from $n$ to $n + \omega(\log n)$ qubits, a small modification of our PRU construction achieves adaptive security, i.e.~even a distinguisher that can query the isometry adaptively in sequence cannot distinguish it from Haar random isometries. This gives the first construction of adaptive pseudorandom isometries. Under an additional conjecture, this proof also extends to adaptive PRUs.
\end{itemize}
\end{abstract}

\vspace{0.5cm}

\newpage

\section{Introduction} \label{sec:intro}

The Haar measure formalises the notion of a uniformly random unitary and plays an important role in quantum information and computation.
However, Haar random unitaries suffer from a significant practical drawback: they cannot be efficiently implemented.
This has prompted research into ensembles of unitaries that ``look'' sufficiently Haar random for applications while also being efficient to implement.

The same situation arises in classical computer science: here, random functions have many useful properties, but cannot be efficiently implemented.
There exist two different approaches to derandomising a fully random function: $t$-wise independent functions and pseudorandom functions.

$t$-wise independent functions are functions that information-theoretically reproduce the behaviour of uniformly random functions if one only evaluates the function $t$ times.
In other words, even with arbitrary computational power one cannot distinguish $t$ queries to a random $t$-wise independent function from $t$ queries to a uniformly random function.
The drawback of this approach is that one needs to know the number of queries $t$ ahead of time when constructing the functions: if a $t$-wise independent function is evaluated $t+1$ times, its output may look very different from that of a uniformly random function. 

In contrast, pseudorandom functions are indistinguishable from uniformly random functions to any \emph{polynomial-time}\footnote{Whenever we refer to polynomial-time or polynomial numbers of queries, we mean polynomial in the input length $n$ of the function.} distinguisher that can make an arbitrary polynomial number of queries to the function.
The advantage over $t$-wise independent functions is that there is no a priori bound on the number of allowed queries; a pseudorandom function is secure against any polynomial number of queries.
The drawback is that pseudorandom functions are only indistinguishable to computationally bounded distinguishers, whereas $t$-wise independent functions are information-theoretically secure.\footnote{Note that one cannot efficiently construct $t$-wise independent functions where $t$ is superpolynomial in $n$.}

One encounters the same situation when trying to derandomise Haar random unitaries.
The analogue to $t$-wise independent functions are called $t$-designs, which are random unitaries that reproduce the first $t$ moments of the Haar measure~\cite{dankert2009exact,gross2007evenly,ambainis2007quantum}.
The analogue to pseudorandom functions are pseudorandom unitaries (PRUs), which are indistinguishable from Haar random unitaries to any polynomial-time quantum distinguisher with query access to the unitary~\cite{ji2018pseudorandom}.\\

This paper makes three main contributions to the derandomisation of the Haar measure: 
\begin{itemize}
\item \textbf{Linear-depth $t$-designs.} 
It has been a long-standing open question to construct optimal $t$-designs, i.e.~$t$-designs with the smallest possible circuit complexity. 
We give the first construction of a (diamond-error) approximate $t$-design that only requires circuit depth linear in $t$ (and polynomial in the number of qubits, which can be made quasilinear using upcoming work~\cite{random_walks}).
The best prior constructions required depth that scaled quadratically in $t$~\cite{chen2024efficient,haah2024efficient}.

\item \textbf{Non-adaptive PRUs.}  We give the first construction of PRUs with non-adaptive security, i.e.~we construct unitaries that are computationally indistinguishable from Haar random to distinguishers that query the unitary \emph{in parallel} on an arbitrary input state.
Prior work on PRUs imposed severe limitations on the distinguisher, e.g.~only allowing it to query the unitary on i.i.d.~tensor product states~\cite{lu2023quantum}.
\item \textbf{Adaptive Pseudorandom Isometries (PRIs).} We show that if one considers isometries (rather than unitaries) from $n$ to $n + \omega(\log n)$ qubits, a small modification of our PRU construction achieves adaptive security, i.e.~even a distinguisher that can query the isometry adaptively in sequence cannot distinguish it from Haar random isometries. This gives the first construction of adaptive PRIs. Since the number of ancilla qubits introduced by the isometry is only $\omega(\log n)$, this construction may already be a good substitute for adaptive PRUs.
Furthermore, we show that under an additional concrete conjecture our proof also extends to full adaptive PRUs.
\end{itemize}
Our constructions for $t$-designs and PRUs/PRIs are very simple and follow from a more general result: we show that the product of a random Clifford unitary, a random binary phase operator, and a random permutation approximates the Haar measure up to extremely high moments.
The advantage of this random unitary over the Haar measure itself is that it can easily be derandomised:
by replacing the random phase and random permutation with their $t$-wise independent counterparts, we get a linear-depth $t$-design; and by replacing them with their pseudorandom counterparts, we get our PRU/PRI construction.

We now give more detailed background on prior results on $t$-designs and pseudorandom unitaries, before returning to our construction and security proof in \cref{sec:intro-pfc}.

\paragraph{$t$-designs.}
A unitary $t$-design is an ensemble of unitaries that (approximately) reproduces the first $t$ moments of the Haar measure~\cite{dankert2009exact,gross2007evenly,ambainis2007quantum}.
In contrast to Haar random unitaries, $t$-designs on $n$ qubits can be implemented in polynomial time (in $n$ and $t$), making them useful for applications ranging from randomised benchmarking~\cite{magesan2012characterizing,knill2008randomized} to quantum complexity growth~\cite{brandao2021models} and black hole physics~\cite{Hayden_2007}.

Unitary $t$-design constructions are typically approximate, i.e.,~they only reproduce the moments of the Haar measure approximately.
Various notions of approximation have been introduced (see e.g.~\cite{mele2023introduction} for an overview).
For our discussion, we need to consider two different notions of approximation:
diamond-error and relative-error. 
We refer to \cref{def:additive_error_design,def:relative-error-design} for the formal definitions. 
For our discussion here, it suffices to know that diamond-error corresponds to non-adaptive security (i.e.~the $t$-design looks like a Haar random ensemble to a distinguisher making $t$ parallel queries) and relative-error corresponds to (and is in fact stronger than) adaptive security (i.e.~the $t$-design still looks Haar random even if the distinguisher is allowed to make $t$ adaptive queries).
This means that relative-error is a stronger notion, and converting from diamond-error to relative-error incurs a multiplicative $2^{O(nt)}$ blow-up in the error parameter.
For our discussion below, we always require the approximation error (in either notion of approximation) to be negligible in $n$ for any $t = \poly(n)$ and do not write it explicitly.

It is a long-standing open question to construct $t$-designs as efficiently as possible; by ``efficient'' we mean with a minimal number of quantum gates or circuit depth, but randomness-efficient constructions have also been studied~\cite{o2023explicit}.
In a seminal result, Brandao, Harrow and Horodecki~\cite{brandao2016local} showed that random 2-local quantum circuits on $n$ qubits form relative-error $t$-designs in depth $O(n^2 t^{10})$, which was improved to depth $O(n t^{5+o(1)})$ by Haferkamp~\cite{haferkamp2022random}.
In addition to these results on random circuits, very recently two independent works used more structured circuits to achieve more efficient $t$-designs.
Chen, Docter, Xu, Bouland, and Hayden~\cite{chen2024efficient} achieve diamond-error $t$-designs in depth $\tilde O(n^2 t^2)$ using an intricate analysis of exponentiated Gaussian Unitary Ensembles.
Haah, Liu, and Tan~\cite{haah2024efficient} use random Pauli rotations to achieve relative-error $t$-designs in  depth $\tilde O(n t^2)$.

Our construction is very simple and is the first to achieve linear scaling in $t$: we construct diamond-error $t$-designs with circuit depth $O(\poly(n) t)$.
We can also amplify our construction to achieve \emph{relative-error} $t$-designs with circuit depth $O(\poly(n) t^2)$.
The exact polynomial dependence in $n$ depends on the details of a construction of $O(t)$-wise independent permutations due to Kassabov~\cite{Kassabov05}, which is analysed explicitly in an upcoming work~\cite{random_walks}, showing that the depth can also be made quasilinear in $n$.

\paragraph{Pseudorandom unitaries.}

Pseudorandom unitaries (PRUs) are ensembles\footnote{Strictly speaking, a PRU ensemble is an infinite sequence $\mathcal{U} = \{\mathcal{U}_n\}_{n \in \N}$ of $n$-qubit unitary ensembles $\mathcal{U}_n = \{U_k\}_{k \in \mathcal{K}}$, where $n \in \N$ serves as the security parameter (see \Cref{def:PRU} for a formal statement).} $\{U_k\}_{k \in \mathcal{K}_n}$ of unitaries that are efficient to implement, but that look indistinguishable from Haar random unitaries to any polynomial-time distinguisher.
This means that no polynomial-time distinguisher with oracle access to either a Haar random unitary or a unitary chosen uniformly from the PRU ensemble $\{U_k\}$ can tell the two cases apart.
This makes PRUs the natural quantum analogue to pseudorandom functions (PRFs).

The concept of PRUs was introduced by Ji, Liu, and Song~\cite{ji2018pseudorandom}. 
Their paper gave a conjectured construction of PRUs, but only proved security (assuming quantum-secure one-way functions) for a much weaker primitive called \emph{pseudorandom states} (PRSs).
A pseudorandom state ensemble is a set of states (rather than unitaries) that look indistinguishable from Haar random states (even with access to polynomially many copies of the state).
Since their introduction, PRSs have become an influential concept with  applications in quantum cryptography~\cite{ananth2022cryptography,morimae2022quantum}, lower bounds in quantum learning theory~\cite{huang2022quantum}, and even connections to quantum gravity~\cite{aaronson2024quantum}.
However, proving security for a pseudorandom \emph{unitary} construction has remained an open problem, and~\cite{haug2023pseudorandom} even explores restrictions on possible PRU constructions.

Given the difficulty of proving security for PRUs, recent work has considered intermediate steps between PRSs and PRUs.
Lu, Qin, Song, Yao and Zhao~\cite{lu2023quantum} introduced the notion of a pseudorandom state scrambler (PRSS).
A PRSS ensemble is a set of unitaries $\{U_k\}_{k \in \cK}$ such that for all states $\ket{\phi}$, the state family $\{U_k \ket{\phi}\}_{k \in \cK}$ is a PRS.
In other words, a PRSS is a PRU that is only secure when queried on i.i.d.~product states $\ket{\phi}^{\ot \poly(n)}$.
If we restrict the state $\ket{\phi}$ to the all-0 state $\ket{0}$, then we recover PRSs as a special case.
\cite{lu2023quantum} showed how to construct a PRSS ensemble (assuming quantum-secure one-way functions) by an intricate analysis of Kac's random walk.
Ananth, Gulati, Kaleoglu, and Lin~\cite{ananth2023pseudorandom} define the notion of pseudorandom isometries (PRIs), which are like PRUs except that the operation can introduce extra qubits, i.e.~it is an isometry, not a unitary. Towards constructing a PRI ensemble, \cite{ananth2023pseudorandom} show that a candidate construction is computationally indistinguishable from Haar random isometries when applied to certain types of tensor product states (including tensor powers of the same state and i.i.d.~Haar random states).~\cite{ananth2023pseudorandom} also explore the cryptographic applications of PRIs, which we briefly discuss in \cref{sec:discussion}.  

In this work, we give the first construction of PRUs with non-adaptive security and PRIs with adaptive security. For the non-adaptive PRUs, we allow the polynomial-time distinguisher to query the PRU polynomially many times \emph{in parallel} on an \emph{arbitrary entangled state} (rather than the restricted classes of input states considered in prior work).
For the adaptive PRIs, we show that appending a $\omega(\log n)$-qubit auxiliary state to the input (which is what makes this construction an isometry) and applying our PRU construction achieves security even against adaptive adversaries, i.e.~the ensemble of isometries looks computationally indistinguishable from Haar random isometries even to distinguishers that can perform arbitrary sequential adaptive queries to the isometry.

\subsection{A unified approach to $t$-designs and PRUs: the $PFC$ ensemble} \label{sec:intro-pfc}

$t$-designs and PRUs are intimately related and our work takes a unified approach to constructing both.
The key insight of our proofs is that while the Haar measure is difficult to derandomise directly, we can construct a different ensemble of unitaries, which we call the $PFC$ ensemble, that matches exponential-order moments of the Haar measure. In other words, this ensemble is a (diamond-error) $t$-design even when $t$ is exponential in the number of qubits.

The $PFC$ ensemble is not efficient to implement either, but it has a key advantage over the Haar measure: it is easy to derandomise, both in an information-theoretic manner to get $t$-designs and in a computational manner to get PRUs. As such, a significant portion of our paper is concerned with analysing the properties of the $PFC$ ensemble itself, and the results on $t$-designs and (non-adaptive) PRUs follow straightforwardly from classical results on $t$-wise independence and pseudorandom functions. 

The $PFC$ ensemble is the following random ensemble of unitary matrices.

\begin{definition}[$PFC$ ensemble] \label{def:pfc-ensemble}
The $n$-qubit (or $2^n$-dimensional) $PFC$ ensemble is given by the product $PFC$ where $P$ is a uniformly random permutation matrix on $n$-qubit computational basis states, $F: \ket{x} \to (-1)^{f(x)}\ket{x}$ is a diagonal unitary with a uniformly random function $f: \bit^n \to \bit$, and $C$ is a uniformly random $n$-qubit Clifford, all sampled independently. 
\end{definition}

Our main technical result says that the $PFC$ ensemble is a $t$-design for exponential $t$ with negligible diamond distance error.
We refer to \cref{sec:pfc_proof_sketch} for a proof sketch and to \cref{thm:pfc-ensemble} for the formal statement and proof.

\begin{theorem*}[Informal] \label{thm:intro-pfc-ensemble}
For $d = 2^n$, the $d$-dimensional $PFC$ ensemble is a diamond $\epsilon$-approximate $t$-design for $\epsilon = O(t/\sqrt{d})$.
This means that for all $t$ and all states $\ket{\psi}_{\reg{AE}}$, where $\reg{A}$ is an $nt$-qubit register (on which the unitary acts) and $\reg E$ is an arbitrary ancilla register, 
\begin{align*}
\norm{\E_{U\sim \Haar} U^{\ot t}_{\reg A} \proj{\psi} U^{\ot t,\dagger}_{\reg A} - \E_{PFC} (PFC)^{\ot t}_{\reg A} \proj{\psi} (PFC)^{\ot t,\dagger}_{\reg A}}_1 \leq O(t/\sqrt{d})\,.
\end{align*}
\end{theorem*}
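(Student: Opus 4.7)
The plan is to compare $\rho_{\Haar}:=\E_U[U^{\otimes t}\proj{\psi} U^{\otimes t,\dagger}]$ and $\rho_{PFC}:=\E_{P,F,C}[(PFC)^{\otimes t}\proj{\psi}(PFC)^{\otimes t,\dagger}]$ by showing that both agree, up to negligible trace distance, with a common state supported on the distinct-tuple subspace $\mathcal{H}_{\mathrm{dist}}:=\mathrm{span}\{\ket{y_1\cdots y_t}:y_i \text{ pairwise distinct}\}$. The $O(t/\sqrt d)$ error reflects the birthday bound: $t$ values drawn uniformly from $[d]$ collide with probability $O(t^2/d)$, and taking square roots (via Fuchs--van de Graaf or $\|\rho-\Pi\rho\Pi\|_1 \le 2\sqrt{\mathrm{Tr}((I-\Pi)\rho)}$) converts this probability into trace distance.

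I will first unravel $\rho_{PFC}$ in the computational basis. Writing $\ket{\psi}=\sum_{\vec x, e}\alpha_{\vec x,e}\ket{\vec x,e}$ and noting $(PFC)_{z,x}=(-1)^{f(P^{-1}z)}C_{P^{-1}z,x}$, I take the averages over $F$, then $P$, then $C$. The $F$-average acts as a parity check and kills every matrix element $\ket{\vec z}\bra{\vec z'}$ in which some value appears an odd total number of times in the multiset $\vec z\cup\vec z'$; on internally distinct $\vec z,\vec z'$ this forces $\vec z'=\sigma\vec z$ for some $\sigma\in S_t$. The subsequent $P$-average uniformizes the output over distinct tuples, so the restriction of $\rho_{PFC}$ to $\mathcal{H}_{\mathrm{dist}}$ collapses to the form $\frac{1}{(d)_t}\sum_{\sigma\in S_t}R(\sigma)\,M_\sigma$, where $(d)_t=d(d-1)\cdots(d-t+1)$ and $M_\sigma$ (acting on $\reg E$) is determined by Clifford moments of the type $\E_C\prod_i C_{w_i,x_i}\overline{C_{w_{\sigma(i)},x'_i}}$ evaluated on all-distinct $\vec w$.

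I will then show that $\rho_{\Haar}$ admits an essentially identical decomposition on $\mathcal{H}_{\mathrm{dist}}$. From the Schur--Weyl/Weingarten formula, $\rho_{\Haar}$ is a sum of terms $\mathrm{Wg}(\sigma\tau^{-1},d)\,R(\sigma)_{A}\otimes \mathrm{Tr}_A(R(\tau^{-1})\proj\psi)$, which places $1-O(t^2/d)$ of its mass on $\mathcal{H}_{\mathrm{dist}}$ (the dominant Weingarten coefficient is the identity, mimicking i.i.d.\ uniform samples) and whose restriction to $\mathcal{H}_{\mathrm{dist}}$ matches the $\frac{1}{(d)_t}\sum_\sigma R(\sigma)M_\sigma$ form found for $\rho_{PFC}$, provided the Clifford moment above equals the Haar one on the all-distinct locus. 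A triangle inequality combining this matching with two $O(t/\sqrt d)$ collision-mass bounds then delivers $\|\rho_{\Haar}-\rho_{PFC}\|_1\le O(t/\sqrt d)$.

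The main obstacle is justifying that the Clifford $(t,t)$-moments agree with Haar on all-distinct indices, since for $t>3$ the Clifford group is not a $t$-design and its commutant strictly contains $\mathrm{span}\{R(\sigma):\sigma\in S_t\}$. The resolution I expect is to use the explicit structural description of the Clifford commutant, showing that every element outside this span is supported on stabilizer-type configurations that vanish once the indices $\vec w$ are forced to be all distinct; equivalently, the $F$- and $P$-averages already constrain the integrand so that only pairwise (and thus Clifford-$2$-design-controlled) moments of $C$ survive. A secondary task is correctly tracking the ancilla register $\reg E$ through every averaging step to obtain the diamond-style bound (uniform in $\ket\psi$) rather than a bound for a particular input.
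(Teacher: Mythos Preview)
Your high-level plan—restrict to the distinct-tuple subspace via a birthday bound and convert the $O(t^2/d)$ collision mass into an $O(t/\sqrt d)$ trace-distance error by gentle measurement—matches the paper. But there is a genuine gap exactly where you flag the ``main obstacle'', and neither of your proposed resolutions works as stated. By averaging in the order $F$, then $P$, then $C$, your $M_\sigma$ becomes (after summing over distinct $\vec w$) the matrix of $\E_C\big[C^{\otimes t,\dagger}\,R_\sigma\Lambda\,C^{\otimes t}\big]$, which is a genuine $t$-th Clifford moment: it is the projection of $R_\sigma\Lambda$ onto the Clifford commutant. For $t\ge 4$ that commutant strictly contains $\mathrm{span}\{R_\tau\}$, and its extra elements (indexed by stochastic Lagrangian subspaces that are not permutation graphs) do \emph{not} in general vanish on the distinct-index locus, so your first resolution does not go through. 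Your second—that after the $F,P$ averages only pairwise Clifford moments survive—amounts to claiming that this $t$-th order expression collapses to second order, but nothing in the $F,P$ averages effects such a collapse; the integrand is still $C^{\otimes t,\dagger}(\cdot)C^{\otimes t}$.

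The paper sidesteps this entirely by reversing the order of analysis. It applies the Clifford twirl to the \emph{input} first and uses Haar invariance to write $\momhaart{\proj\psi}=\momhaart{\xi}$ for $\xi=\E_C C^{\otimes t}\proj\psi C^{\otimes t,\dagger}$, so that both the $PFC$-twirl and the Haar-twirl act on the same state $\xi$. A union bound over pairs of registers—using only the Clifford $2$-design property—gives $\Tr[\Lambda\,\xi]\ge 1-O(t^2/d)$; after a gentle-measurement projection the task reduces to comparing the $PF$-twirl and the Haar-twirl on a state $\ket\phi$ already supported on the distinct subspace. That comparison is done via Schur--Weyl duality rather than Weingarten: both twirls take the form $\sum_{\lambda}\rho_\lambda\otimes\Tr_{W_\lambda}[\ind_{P_\lambda}\proj\phi\ind_{P_\lambda}]$, where $\rho_\lambda$ is maximally mixed on $W_\lambda$ for Haar and maximally mixed on a subspace of $W_\lambda$ of relative dimension $1-O(t^2/d)$ for $PF$. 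No Clifford moment beyond order two is ever needed, and the obstacle you identified simply does not arise.
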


This means that even for exponential $t$, e.g.~$t = 2^{n/4}$, the $PFC$ ensemble is a $t$-design with exponentially small (in $n$) diamond distance error. These parameters turn out to be strong enough to give a unified construction of efficient $t$-designs and non-adaptive PRUs, as we will see next. Before doing so, we remark that the only property of a random Clifford $C$ that we require is that it is an exact $2$-design. As such, we can take $C$ to be any exact $2$-design (or even an approximate one with small enough error) in the construction given in \Cref{def:pfc-ensemble}.

\paragraph{Information-theoretic derandomisation: $t$-designs with linear depth.} \label{sec:intro_t_designs}

While \cref{thm:intro-pfc-ensemble} shows that the $PFC$ ensemble is a $t$-design with error $O(t/\sqrt{d})$, this result is not immediately useful as it is not possible to efficiently implement a uniformly random phase operator $F$ or a uniformly random permutation $P$. However, we can replace the (classical) uniform functions and permutations underlying the phase and permutation operators by their $O(t)$-wise independent variants (for a formal definition of $t$-wise independence, see \Cref{sec:linear-t-design}) and show that the resulting random unitary remains a $t$-design (see \Cref{thm:t-designs}).

\begin{theorem*}[Efficient unitary t-design, informal] \label{thm:intro-t-designs}
Let $d=2^n$ and let $\nu$ be the $n$-qubit $PFC$ ensemble where $F$ is instantiated with a random $2t$-wise independent Boolean function, $P$ is instantiated with a random (approximately) $t$-wise independent permutation, and $C$ is a random $n$-qubit Clifford. Then, $\nu$ is a diamond error $\eps$-approximate $t$-design for $\eps = O(t/\sqrt{d})$.
\end{theorem*}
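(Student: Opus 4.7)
The plan is to invoke \Cref{thm:intro-pfc-ensemble}: I would argue that its proof only queries the random function $f$ and the random permutation $\pi$ through correlations of order $O(t)$, so the same argument extends verbatim to any ensemble whose $f$ and $\pi$ match these low-order correlations, while $C$ is left unchanged.

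First I would unwind the $t$-th moment operator $\E_{P,F,C}(PFC)^{\otimes t}_{\reg A} \proj{\psi} (C^\dagger F^\dagger P^\dagger)^{\otimes t}_{\reg A}$ and pull the $F$ average inside. Expanding a generic computational-basis term $\ket{x_1\dots x_t}\bra{y_1\dots y_t}$, the $f$-dependence is $(-1)^{f(x_1)+\cdots+f(x_t)+f(y_1)+\cdots+f(y_t)}$, which only involves the values of $f$ at the $2t$ points $x_1,\dots,x_t,y_1,\dots,y_t$. Hence replacing $f$ by a $2t$-wise independent Boolean function leaves this expectation unchanged. Moreover, the $F$-average is nonzero only when every value in the combined list $x_1,\dots,x_t,y_1,\dots,y_t$ appears with even multiplicity, so any surviving term has at most $t$ distinct computational-basis labels.

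Next I would analyse the $P$-average. On any surviving term, $P^{\otimes t}\ket{\vec x}\bra{\vec y}(P^\dagger)^{\otimes t}=\ket{\pi(x_1)\dots\pi(x_t)}\bra{\pi(y_1)\dots\pi(y_t)}$; since the distinct labels form a set of size at most $t$, the $P$-expectation depends only on the joint distribution of $\pi$ at these (at most $t$) points. Consequently, an approximately $t$-wise independent permutation reproduces the $P$-average up to a total-variation error $\delta_\pi$, which can be made $\negl(n)$ using, e.g., Kassabov's construction~\cite{Kassabov05} with sufficiently small bias.

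Finally, the Clifford step is unchanged: $\nu$ still uses a uniformly random Clifford $C$ (an exact $2$-design), so whatever use the proof of \Cref{thm:intro-pfc-ensemble} makes of $C$ carries over verbatim. Combining these three observations, the $t$-th moment operator produced by $\nu$ matches the one produced by the fully random $PFC$ ensemble up to an additive trace-distance error of $O(\delta_\pi)=\negl(n)$. Together with the diamond-error bound $O(t/\sqrt d)$ from \Cref{thm:intro-pfc-ensemble} and a supremum over input states $\ket{\psi}_{\reg{AE}}$, this gives the desired $\epsilon$-approximate $t$-design property with $\epsilon=O(t/\sqrt d)$. The main subtlety is to rigorously verify the ``only $O(t)$-wise correlations are needed'' claim by tracking $f$ and $\pi$ through every step of the proof of \Cref{thm:intro-pfc-ensemble}, and to propagate the approximate $t$-wise independence error through the remaining Clifford average and the partial trace over $\reg E$; since all intermediate operators are unitary and act as the identity on $\reg E$, this last step should reduce to a routine triangle-inequality argument.
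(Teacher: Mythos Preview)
Your proposal is correct and follows essentially the same strategy as the paper: both argue that the proof of \Cref{thm:intro-pfc-ensemble} only uses $O(t)$-wise correlations of $f$ and $\pi$, so the derandomised ensemble reproduces the same moment operator. The paper makes this precise by pinpointing the single lemma (\Cref{lem:dis_string_to_perm}) where $f$ and $\pi$ appear and re-proving it verbatim for $2t$-wise independent $f$ and exactly $t$-wise independent $\pi$, whereas you argue the same thing via a direct computational-basis expansion; your observation that post-$F$-twirl survivors have at most $t$ distinct labels is a slightly slicker way to see why $t$-wise independence of $\pi$ suffices even outside the distinct subspace.

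The one place the paper is more careful than your sketch is the approximate-to-exact step for the permutation: ``$\delta$-approximate $t$-wise independent'' in the paper's sense is a pointwise bound on each $t$-tuple probability, not a total-variation bound on the distribution, so your claim that this directly yields ``total-variation error $\delta_\pi$'' needs justification. The paper handles this by invoking the Alon--Lovett theorem to replace the approximate distribution with an exactly $t$-wise independent one at total-variation cost $O(\delta\cdot d^{4t})$, and then chooses $\delta=O(t/d^{4t+1/2})$ so that this cost is $O(t/\sqrt d)$.
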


By using efficient constructions of $2t$-wise independent functions and approximate $t$-wise independent permutations~\cite{Kassabov05,v009a015,caprace2023tame,random_walks}, we obtain $t$-designs with circuit size and depth $O(t \; \poly(n))$. 
By repeating the construction multiple times in sequence, we can make the error $\eps$ decay exponentially and also obtain relative-error $t$-designs with circuit size and depth $O(t^2 \cdot \poly(n))$.
More precisely, we get the following (see \cref{cor:t-designs-linear,lem:error_amplified} for the formal statement):
\begin{corollary*}[informal] For any number of qubits $n$, tolerated error $\eps > 0$, and $t \leq 2^{n/4}$, there exists a diamond $\eps$-approximate $t$-design with circuit size and depth $O( t \cdot \poly(n) + t \log1/\eps)$ and a relative-error $\eps$-approximate $t$-design with circuit size and depth $O(t^2 \cdot \poly(n) + t^2 \log1/\eps)$.
\end{corollary*}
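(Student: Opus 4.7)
The plan is to deduce both bounds from the preceding informal theorem on the derandomised $PFC$ ensemble by a sequential composition argument that amplifies the error geometrically.

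I would begin by taking a ``base'' design: instantiating the derandomised $PFC$ construction with a $2t$-wise independent Boolean function, an (approximately) $t$-wise independent permutation from~\cite{Kassabov05, random_walks}, and a uniformly random Clifford yields, by the previous theorem, a diamond $\eps_0$-approximate $t$-design $\nu_0$ with $\eps_0 = O(t/\sqrt{d}) \le O(2^{-n/4})$ (using $t \le 2^{n/4}$) and circuit size and depth $D_0 = O(t \cdot \poly(n))$.

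Next I would prove the key amplification lemma: for independent diamond $\eps_i$-approximate $t$-designs $\nu_1, \nu_2$, the product ensemble $\nu_1 \cdot \nu_2$ obtained by sampling $U_i \sim \nu_i$ and outputting $U_1 U_2$ is a diamond $\eps_1 \eps_2$-approximate $t$-design. Writing $\Phi_\nu^{(t)}(\rho) \coloneqq \E_{U \sim \nu}\, U^{\ot t} \rho (U^{\ot t})^\dagger$ for the $t$-fold twirl, the identity
\[
\Phi_{\nu_1 \cdot \nu_2}^{(t)} - \Phi_{\Haar}^{(t)} \;=\; \bigl(\Phi_{\nu_1}^{(t)} - \Phi_{\Haar}^{(t)}\bigr) \circ \bigl(\Phi_{\nu_2}^{(t)} - \Phi_{\Haar}^{(t)}\bigr)
\]
follows from $\Phi_{\nu_1 \cdot \nu_2}^{(t)} = \Phi_{\nu_1}^{(t)} \circ \Phi_{\nu_2}^{(t)}$ together with the fact that the Haar twirl is idempotent and absorbs any other twirl on either side. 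Submultiplicativity of the diamond norm under composition then gives the bound. Iterating, $k$ independent copies of $\nu_0$ composed in sequence form a diamond $\eps_0^k$-approximate $t$-design with circuit size and depth $k D_0$.

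To conclude I would choose $k$ appropriately. For the diamond-error claim, take $k = \lceil \log(1/\eps)/\log(1/\eps_0)\rceil = O(1 + \log(1/\eps)/n)$; the resulting depth is $O(t \cdot \poly(n) + (\poly(n)/n)\cdot t \log(1/\eps)) = O(t \cdot \poly(n) + t \log(1/\eps))$, once the per-round polynomial factor in $n$ is taken to be (quasi-)linear via~\cite{random_walks}. For the relative-error claim, I would use the fact already noted in the text that a diamond $\eps_\diamond$-approximate $t$-design is automatically a relative $(2^{O(nt)} \eps_\diamond)$-approximate one, so I need $\eps_0^k \le \eps \cdot 2^{-O(nt)}$, which forces $k = O(t + \log(1/\eps)/n)$ and produces circuit size and depth $O(t^2 \cdot \poly(n) + t^2 \log(1/\eps))$. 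The main obstacle is precisely the \emph{multiplicative} error composition in the amplification lemma: an additive triangle-inequality bound would give no amplification at all and could not reach the $2^{-O(nt)}$ diamond error needed for relative error with only quadratic (in $t$) overhead. The clean cancellation in the twirl identity above is what makes the argument work; the rest is bookkeeping of $k\cdot D_0$ against the target $\eps$.
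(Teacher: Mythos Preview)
Your proposal is correct and follows essentially the same approach as the paper: the paper proves the identical multiplicative amplification identity $(\cM^{(t)}_X - \cM^{(t)}_{\Haar})^{\circ m} = (\cM^{(t)}_X)^{\circ m} - \cM^{(t)}_{\Haar}$ via Haar-absorption (their \cref{lem:haar_collapse,lem:diamond_amp}), applies diamond-norm submultiplicativity, and then chooses $m = \max\{1, O(\log(1/\eps)/n)\}$ for the diamond case and uses \cref{lem:diamond-to-rel} with the extra $2^{O(nt)}$ factor for the relative-error case, exactly as you do. Your remark that the clean $t\log(1/\eps)$ term relies on the per-round depth being (quasi-)linear in $n$ is apt; the paper handles this by parameterising the base design explicitly as $(s,r)$-explicit in its formal \cref{lem:error_amplified}, but the substance is the same.
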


In fact, as mentioned before, the upcoming work \cite{random_walks} gives explicit circuits for the $O(t)$-wise independent permutations from~\cite{Kassabov05} that also achieve quasilinear depth in $n$, which implies $t$-designs in depth $\tilde O(tn)$ (for diamond distance error $\eps = e^{-\Omega(n)}$).

\paragraph{Computational derandomisation: non-adaptive pseudorandom unitaries.} \label{sec:intro_pru}

Replacing the uniformly random function and permutation in the $PFC$ ensemble with their pseudorandom counterparts, we immediately get a PRU ensemble with non-adaptive security. More concretely, we require:

\begin{itemize}
    \item An ensemble of (quantum-secure) pseudorandom permutations (PRPs)~\cite{zhandry2016note}. Broadly speaking, this is a family $\{\pi_{k_1} : \{0,1\}^n \to \{0,1\}^n \}_{k_1 \in \mathcal{K}_1}$ of permutations with the property that, for a randomly chosen key $k_1 \sim \mathcal{K}_1$, the permutation $\pi_{k_1}$ is computationally indistinguishable from a perfectly random permutation.
    For a given $\pi_{k_1}$, we let $P_{k_1}$ be the corresponding $n$-qubit permutation matrix.
    \item An ensemble of (quantum-secure) pseudorandom functions (PRFs)~\cite{zhandry2021construct}. More formally, this is a family $\{f_{k_2} : \{0,1\}^n \to \{0,1\} \}_{k_2 \in \mathcal{K}_2}$ with the property that, for a randomly chosen key $k_2 \sim \mathcal{K}_2$, the function $f_{k_2}$ is computationally indistinguishable from a perfectly random Boolean function. For a given $f_{k_2}$, we let $F_{k_2}: \ket{x} \to (-1)^{f_{k_2}(x)} \ket{x}$ be the $n$-qubit phase oracle implementing $f_{k_2}$.
\end{itemize}
Then, indexing the $n$-qubit Clifford group as $\{C_{k_3} \}_{k_3 \in \mathcal{K}_{3}}$, our PRU is the $n$-qubit ensemble $\{U_k\}_{k \in \mathcal{K}}$ which is specified by a key $k = (k_1, k_2, k_3) \in \cK_1 \times \cK_3 \times \cK_3 \eqqcolon \cK$, where
\begin{equation}
    \label{eq:intro-pru}
    U_k = P_{k_1} F_{k_2} C_{k_3}.
\end{equation}
Note that, the Clifford group on $n$-qubits has size $2^{O(n^2)}$.
Thus assuming the PRP and PRF scheme both have a key space consisting of $\mathcal{K}_1 = \mathcal{K}_2 = \bit^n$, our PRU ensemble has a key length\footnote{We note that one can apply a classical pseudorandom random generator to the key to reduce the key length.} of $|k| = n + n + O(n^2) = O(n^2)$, where $n \in \N$ is the security parameter. Because an ensemble of (quantum-secure) PRFs and PRPs can  efficiently be constructed from (quantum-secure) one-way functions~\cite{zhandry2016note,zhandry2021construct}, the security of our PRU ensemble also only relies on the existence of quantum-secure one-way functions. Furthermore, since random Cliffords can be efficiently sampled~\cite{berg2021simple}, there is a polynomial-time quantum algorithm that implements the PRU.

From \cref{thm:intro-pfc-ensemble}, it is not hard to show that the PRU ensemble is secure against any distinguisher that makes polynomially many \emph{parallel} queries to the PRU, i.e.~the distinguisher is allowed a single query to $U^{\ot \poly(n)}$ on \emph{any input state}, rather than just the restricted classes of input states allowed in PRSs and PRSSs.
We give the the formal statement and proof in \cref{sec:pru_formal}.

\begin{theorem*}[Non-adaptive PRUs, informal] 
    Assuming the existence of quantum-secure one-way functions, the ensemble described in~\Cref{eq:intro-pru} satisfies non-adaptive PRU security. 
\end{theorem*}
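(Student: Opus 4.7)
The plan is to prove security by a three-term hybrid argument that reduces non-adaptive PRU indistinguishability to (i) the $PFC$ ensemble theorem recalled above and (ii) the post-quantum security of the PRF and PRP used in the construction. Fix a QPT distinguisher $\cA$ that prepares an arbitrary state $\rho_{\reg{AE}}$ on an $nt$-qubit challenge register $\reg{A}$ (with $t = \poly(n)$) and an auxiliary register $\reg{E}$, receives the challenge unitary applied as $U^{\otimes t}$ to $\reg{A}$, and outputs a bit. I consider three hybrids: $H_0$ with $U \sim \Haar$; $H_1$ with $U = PFC$ for independent truly random $P$, $F$, and $C$; and $H_2$ with $U = P_{k_1} F_{k_2} C_{k_3}$ for a uniformly random key.

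The step $H_0 \approx H_1$ follows directly from the $PFC$ ensemble theorem: the trace distance between the challenge states in $H_0$ and $H_1$ is at most $O(t/\sqrt{2^n})$, which is negligible whenever $t = \poly(n)$ and in fact holds even against an unbounded distinguisher.

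For $H_1 \approx H_2$, I would introduce an intermediate hybrid $H_{1.5}$ in which $P$ is replaced by the PRP $P_{k_1}$ but $F$ is still uniformly random. The step $H_{1.5} \approx H_2$ is a textbook PRF reduction: an adversary $\cB$ with quantum oracle access to either a uniformly random $f$ or to $f_{k_2}$ samples a fresh PRP key $k_1$ and a random Clifford $C$ (both efficient), implements $(P_{k_1} F^{\cO} C)^{\otimes t}$ with $t$ parallel calls to its oracle $\cO$, runs $\cA$, and inherits $\cA$'s distinguishing advantage, which must therefore be negligible. The step $H_1 \approx H_{1.5}$ is a PRP reduction with one subtlety: a reduction $\cB'$ with oracle access to $\pi$ or $\pi_{k_1}$ must also implement a uniformly random phase oracle $F$, which it cannot do in QPT. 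I resolve this using Zhandry's theorem that a $2t$-wise independent Boolean function is perfectly indistinguishable from a uniformly random function against any quantum adversary making at most $t$ queries. Since the simulation makes only $t$ parallel queries to $F$, $\cB'$ can replace $F$ in both $H_1$ and $H_{1.5}$ by an efficiently sampleable $2t$-wise independent surrogate without changing $\cA$'s advantage, reducing $H_1 \approx H_{1.5}$ to PRP security.

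The main obstacle is precisely this PRP-reduction step: the reduction must internally instantiate the truly random phase oracle $F$ while remaining QPT. It is circumvented because the non-adaptive structure bounds the total number of quantum queries to $F$ by $t$, which is exactly what the information-theoretic $2t$-wise independence argument requires. Combining the three hybrid steps yields $|\Pr[\cA = 1 \mid H_0] - \Pr[\cA = 1 \mid H_2]| \leq \negl(n)$, establishing non-adaptive PRU security under the assumed existence of quantum-secure one-way functions.
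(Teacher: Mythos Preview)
Your proposal is correct and takes essentially the same hybrid approach as the paper: swap the PRF and PRP for their truly random counterparts using computational security, then invoke the $PFC$-ensemble theorem to pass to the Haar measure. The paper's proof simply asserts that the first swap ``follows immediately'' from post-quantum PRF/PRP security, whereas you are more careful in identifying (and correctly resolving via Zhandry's $2t$-wise independence result) the subtlety that the PRP reduction must efficiently simulate the truly random phase oracle.
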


\paragraph{Pseudorandom isometries with adaptive security.}

We conjecture that our PRU construction is also adaptively secure, but so far we are not able to prove adaptive security for the case of unitaries. However, we can show adaptive security under a small relaxation: if instead of PRUs, we consider PRIs that map $n$ qubits to $n + \omega(\log n)$ qubits, then we can show adaptive security. Our PRI construction is the same as our PRU construction, with two minor differences: we fix part of the input state to the $\ket{+}$-state (which is what makes this an isometry) and we do not require the random Clifford.

\begin{definition}[PRI construction] \label{def:pri_construction}
Let $\{P_{k_1}\}_{k_1 \in \cK_1}$ and $\{F_{k_2}\}_{k_2 \in \cK_2}$ be the pseudorandom permutation and binary phase operators on $n$ qubits from \cref{eq:intro-pru}.
Choose any function $s(n) = \omega(\log n)$.
We then define an ensemble of isometries $\{V_k\}_{k \in \cK_1 \times \cK_2}$ from $n - s(n)$ to $n$ qubits given by 
\begin{align*}
V_{k_1, k_2} \ket{\psi} = P_{k_1} F_{k_2} (\ket{\psi} \otimes \ket{+}^{\ot s(n)}) \,.
\end{align*}
\end{definition}

We show that no polynomial-time quantum algorithm can distinguish such isometries from a Haar random isometry (see \Cref{sec:towards-adaptive} for a definition). We prove this formally in \cref{sec:pri_adaptive_security} and sketch the proof in \cref{sec:intro_adaptive_pri}.

\begin{theorem*}[Adaptive PRIs, informal] \label{thm:pri_adaptive_intro}
 Assuming the existence of quantum-secure one-way functions, the isometries defined in \Cref{def:pri_construction} are pseudorandom isometries with adaptive security. 
\end{theorem*}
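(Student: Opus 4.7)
The plan is to first perform a standard hybrid reduction to eliminate the computational assumptions, and then to establish the remaining statistical claim via a path-recording argument that leverages the $\ket{+}^{\ot s(n)}$ ancilla to handle adaptivity. By the quantum security of the underlying PRP and PRF, any polynomial-time adaptive distinguisher has only negligible advantage between $V_{k_1,k_2}\ket{\psi} = P_{k_1} F_{k_2}(\ket{\psi}\ot\ket{+}^{\ot s(n)})$ and the ``truly random'' isometry $W\ket{\psi} = PF(\ket{\psi}\ot\ket{+}^{\ot s(n)})$, where $P$ and $F$ are a uniformly random permutation and a uniformly random binary phase operator. This reduces the task to proving that $W$ is statistically indistinguishable from a Haar random isometry $\Phi$ from $n - s(n)$ to $n$ qubits against an arbitrary, possibly inefficient, adversary making $t = \poly(n)$ adaptive queries.

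The reason the ancilla matters is that the effective input $\ket{\psi}\ot\ket{+}^{\ot s(n)}$ has magnitude at most $2^{-s(n)/2}$ on every computational basis state, regardless of how $\ket{\psi}$ is chosen. Since $2^{s(n)} = n^{\omega(1)}$, this per-basis-state amplitude is super-polynomially small, and it is the key quantitative input for the collision bound below.

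Next, I would introduce a path-recording oracle $W_{\mathrm{PR}}$ that answers each query by lazily sampling fresh random phases for $F$ and fresh permutation image points for $P$ on the basis strings it has not yet ``seen'', responding consistently on previously recorded strings. Two claims would then be established: (i) $W$ and $W_{\mathrm{PR}}$ are statistically indistinguishable to an adaptive $t$-query distinguisher, because the quantum probability that a query concentrates on a previously recorded basis string is $O(t^2 / 2^{s(n)}) = \negl(n)$, and conditioned on the non-collision event the two oracles produce identical joint states; (ii) $\Phi$ and $W_{\mathrm{PR}}$ are likewise statistically indistinguishable, using the fact that a Haar random isometry can itself be generated sequentially by sampling a uniformly random output vector in the orthogonal complement of the images already fixed, again controlled by the same $2^{s(n)}$-spread bound. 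Combining (i) and (ii) by the triangle inequality yields the theorem.

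The main obstacle is making the collision bound in (i) rigorous: the adversary's queries are genuine superpositions rather than basis states, so ``collision'' must be formalised as a quantum event and cannot be handled by a naive union bound. I expect the right framework is a variant of Zhandry's compressed oracle, adapted simultaneously to a random function and a random permutation; the key estimate is that, after absorbing the ancilla, each adversarial query has operator-norm overlap at most $2^{-s(n)}$ with the span of previously touched basis strings, which is what drives the quantum collision bound. Once this compressed/path-recording formalism is in place, claims (i) and (ii) reduce to essentially the same amplitude-spreading estimate, and the moment calculation underlying the $PFC$ ensemble theorem supplies the matching between $W_{\mathrm{PR}}$ and $\Phi$ on each individual query.
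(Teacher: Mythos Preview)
Your approach is genuinely different from the paper's and has a real gap. The paper uses no compressed-oracle or path-recording machinery at all. Its key observation is that the $t$ ancilla registers $\reg{B_1},\dots,\reg{B_t}$ (one per query) are \emph{disjoint}, so projecting the fixed product state $\ket{+}^{\ot s(n)}_{\reg{B_1}}\ot\cdots\ot\ket{+}^{\ot s(n)}_{\reg{B_t}}$ onto the distinct-string subspace of $\reg{B_1\cdots B_t}$ is a \emph{static} operation that does not interact with the adversary's adaptive choices; the collision bound is then just the classical birthday bound $O(t^2/2^{s(n)})$ on a fixed product state. After this static projection, the paper uses gate teleportation (\cref{lem:general_gate_teleportation}) to rewrite the adaptive computation as a \emph{parallel} application of $U^{\ot t}$ to a resource state supported on the distinct subspace, and then invokes a \emph{one-sided relative-error} bound for the $PF$ twirl on that subspace (\cref{lem:rel_error_distinct}). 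The relative-error inequality is exactly what survives the post-selection blowup introduced by teleportation; diamond error would not.

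Your proposal has two concrete gaps. First, your claim (ii)---that the path-recording oracle matches a Haar isometry---is not supported by the $PFC$ ensemble theorem you cite: that theorem gives only diamond error, which controls a single parallel application, not a sequence of adaptive queries. Applying it ``on each individual query'' is precisely the step that fails, because additive errors on intermediate states do not compose under adaptive processing; this is why the paper needs the relative-error upgrade in \cref{lem:rel_error_distinct} together with teleportation. Second, you correctly flag that the collision bound in (i) requires a compressed-oracle treatment for a random permutation applied \emph{in place}, but this is substantially harder than the random-function case and you have not supplied it. The paper's static-ancilla trick sidesteps this difficulty entirely, which is exactly what the extra $\omega(\log n)$ qubits buy.
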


Since the number of additional output qubits $\omega(\log n)$ is quite small, it seems likely that this PRI construction already suffices in many situations where one would like to use adaptively secure PRUs. Furthermore, we can also extend the proof of PRI security to full adaptive PRU security assuming a concrete conjecture (see \Cref{sec:towards-adaptive}).

Note that when written as matrices, the isometries $V_k$ from \cref{thm:pri_adaptive_intro} have only real entries. In contrast, it is known that PRUs with only real entries cannot exist~\cite{haug2023pseudorandom}. The fact that (even adaptively-secure) PRIs with real entries are possible may seem surprising at first sight.
However, we note that already in our PRU construction, the only source of complex numbers were the random Cliffords, and their only purpose in our construction is to ensure that the initial input state is sufficiently scrambled so that it has a large overlap with a certain subspace that we call the \emph{distinct subspace} (see \Cref{sec:overview}). Our analysis for \cref{thm:pru_security}  in fact shows that on the distinct subspace, the (real-valued) $PF$-ensemble (with pseudorandom permutations and functions) forms a non-adaptive PRU. This is also closely related to the very recent work of Brakerski and Magrafta~\cite{brakerski2024real}, who construct real-valued unitaries that look Haar random on any polynomial-sized set of orthogonal input states.

\subsection{Proof Overview}
\label{sec:overview}

We break this section into two parts: the first focuses on analyzing the $PFC$ ensemble and the second focuses on adaptive security for PRIs.

\subsubsection{$t$-design property of the $PFC$ ensemble} \label{sec:pfc_proof_sketch}

As mentioned before, showing that the $PFC$ ensemble is a diamond distance $t$-design corresponds to showing that any non-adaptive algorithm (possibly inefficient) that makes $t$ queries cannot distinguish the $PFC$ ensemble from a Haar random ensemble. Such an algorithm starts with an initial state $\ket{\psi}_{\reg{A}_1 \cdots \reg{A}_t \reg{B}}$ where registers $\reg{A}_1,\ldots,\reg{A}_t$ are each on $n$ qubits, and $\reg{B}$ is an arbitrary workspace register. The algorithm then applies the unitary $U^{\otimes t}$ on the registers $\reg{A}_1, \cdots, \reg{A}_t$ (where $U$ is either $PFC$ or a Haar random matrix) and performs a measurement afterwards. In order to show that no such algorithm can distinguish the $PFC$ ensemble from a Haar random ensemble, it suffices to show that for all initial states $\ket{\psi}_{\reg{A}_1 \cdots \reg{A}_t \reg{B}}$, the following two density matrices are close in trace distance:
\begin{equation}
    \label{eqn:security}
        \E_{PFC} ((PFC)^{\ot t} \ot \id) \proj{\psi} ((PFC)^{\ot t} \ot \id)^\dagger \approx \E_{U \sim \Haar} (U^{\ot t} \ot \id) \proj{\psi} (U^{\ot t} \ot \id)^\dagger~.
\end{equation}

Here, $\ket{\psi}$ is a completely arbitrary state: the registers $\reg{A}_1,\ldots,\reg{A}_t$ of $\ket{\psi}$ may all be entangled with each other, the reduced states on each register may be different from each other, and finally the distinguisher is allowed access to the purification of the input state to the unitaries. 

For this proof overview, we assume that our algorithm has no workspace, i.e.~the initial state is $\ket{\psi}_{\reg{A}_1\cdots\reg{A}_t}$. This is merely to simplify the notation, the proof works exactly in the same way even in the case of an entangled workspace. It then suffices to show that the following two density matrices are close in trace distance:
\begin{equation}
    \label{eqn:inf-security}
      \E_{PFC} (PFC)^{\ot t} \proj{\psi} ((PFC)^{\ot t})^\dagger \approx \E_{U \sim \Haar} U^{\ot t}  \proj{\psi} (U^{\ot t})^\dagger~.
\end{equation}

For this, we leverage Schur-Weyl duality to compute what these two density matrices explicitly look like. Let $d=2^n$ be the dimension.
According to Schur-Weyl duality:
\begin{enumerate}[(1)]
    \item the space $(\C^d)^{\ot t}$ can be decomposed as a direct sum\footnote{Here $\lambda \vdash t$ means that $\lambda$ is a partition of $[t]$.} $\bigoplus_{\lambda \vdash t} P_\lambda$ where $P_\lambda = {W_\lambda \ot V_\lambda}$ is a tensor product of two spaces, and
    \item any unitary $U^{\otimes t}$ only acts non-trivially on the subspaces $W_{\lambda}$ and any unitary $R_\pi$ that permutes the $t$ subsystems according to a permutation $\pi \in S_t$ acts non-trivially only on the subspaces $V_{\lambda}$.
\end{enumerate}

Using this, we show that applying a \emph{$t$-wise Haar twirl} to $\ket{\psi}$ to obtain the state on the right hand side of \cref{eqn:inf-security} results in the following state:

\begin{equation*}
    \E_{U \sim \Haar} U^{\ot t}  \proj{\psi} (U^{\ot t})^\dagger = \sum_{\lambda \vdash t} \frac{\ind_{W_\lambda}}{\Tr[\ind_{W_\lambda}]} \otimes \Tr_{W_\lambda}[\ind_{P_\lambda} \ketbra{\psi}{\psi}\ind_{P_\lambda}]\,,
\end{equation*}
where $\ind_{W_{\lambda}}$ is the identity on the subspace $W_\lambda$, i.e.~the orthogonal projection onto that subspace.
In particular, the state is a direct sum of tensor product states where the state on the subspaces $W_{\lambda}$ is maximally mixed. 

Next, we would like to show that applying a \emph{$t$-wise $PFC$ twirl} to $\ket{\psi}$ to obtain the left hand side state in \cref{eqn:inf-security} results in a state that is close to the above. For technical reasons, the computations here are easier if the state $\ket{\psi}$ is supported only on the subspace of distinct computational basis states in the registers $\reg{A}_1, \ldots \reg{A}_t$, i.e., the subspace spanned by $\ket{x_1,\ldots,x_t}$ where $x_1,\ldots, x_t \in \bits^n$ are all distinct. We show that applying a random Clifford ensures this by showing that 
\[\Tr[\Lambda \E_{C} C^{\ot t} \ketbra{\psi}{\psi} C^{\ot t,\dagger}] \geq 1 - O(t^2/d)\,,\]
where $\Lambda$ denotes the projector on the \emph{distinct subspace}.
In other words, after applying the random Clifford on the arbitrary input state $\ket{\psi}$, the resulting state has so little weight on the non-distinct subspace that we only need to deal with the distinct subspace.

For states $\ket{\psi}$ in the distinct subspace, we show that a $t$-wise $PF$ twirl results in
\begin{equation*}
    \E_{PF} (PF)^{\ot t}  \proj{\psi} ((PF)^{\ot t})^\dagger = \sum_{\lambda \vdash t} \frac{\ind_{\Lambda_\lambda}}{\Tr[\ind_{\Lambda_{\lambda}}]} \otimes \Tr_{W_\lambda}[\ind_{P_\lambda} \ketbra{\psi}{\psi}\ind_{P_\lambda}]\,,
\end{equation*}
where $\Lambda_\lambda$ is a subspace of $W_{\lambda}$ that comes from decomposing the distinct subspace projector $\Lambda$ in terms of the Schur-Weyl subspaces. 
We show that $\Lambda_{\lambda}$ fills most of $W_\lambda$, i.e.~the dimension of the subspace $\Lambda_\lambda$ is close to the dimension of $W_\lambda$:
$$\frac{\Tr[\ind_{\Lambda_{\lambda}}]}{\Tr[\id_{W_\lambda}]} = 1- O\left(\frac{t^2}{d}\right).$$
This implies that the mixed state on $\Lambda_\lambda$ is close in trace distance to the maximally mixed state on $W_{\lambda}$: 
$$\left\|\frac{\ind_{\Lambda_\lambda}}{\Tr[\ind_{\Lambda_{\lambda}}]} - \frac{\ind_{W_\lambda}}{\Tr[\ind_{W_\lambda}]}\right\|_1 = O \Big(\frac{t^2}d \Big)\,.$$

Since this is true for each $\lambda$, putting all the above together, we get that the left and right hand sides in \cref{eqn:inf-security} have trace distance\footnote{We lose a square-root factor in the analysis because of an application of the gentle measurement lemma.} at most $\sqrt{t}/d$, which is exponentially small since $t= \poly(n)$ and $d=2^n$.

\subsubsection{Pseudorandom isometries with adaptive security} \label{sec:intro_adaptive_pri}

Before explaining how we can construct adaptively secure PRIs, let us briefly discuss the obstacles to extending the proof of \Cref{thm:pfc-ensemble} to adaptive security.
If we were able to prove that the $PFC$ ensemble is a \emph{relative-error} design, adaptive security for our PRU construction would follow using known techniques (see e.g.~\cite[Section 3]{kretschmer2021quantum}). In particular, one can convert an adaptive algorithm that makes $t$-queries to a non-adaptive $t$-query algorithm with post-selection. Relative error designs allow one to obtain multiplicative approximations to the acceptance probabilities, allowing the whole argument to go through. 

However, we can only prove that the $PFC$ ensemble is a \emph{diamond-error} $t$-design, and  diamond error only gives additive approximations which are too weak to make this post-selection argument work. As outlined in first part of the proof overview, to prove that the $PFC$ ensemble is a superpolynomial design, we first use the random Clifford unitary to restrict our attention to the distinct-string subspace, and then analyse the $PF$ twirl on the distinct subspace. It turns out that a small modification of our proof of \cref{thm:pru_security} can show that the $PF$ twirl behaves like a (one-sided) \emph{relative-error} design \emph{on the distinct subspace} (\cref{lem:rel_error_distinct}). The problem is that the first step, the reduction to the distinct subspace, does not seem to work in the relative-error setting.
If one were able to perform the reduction to the distinct subspace for adaptive algorithms as conjectured in \cref{conj:scrambling}, adaptive PRU security would follow.

Fortunately the relative-error design property of the $PF$ ensemble on the distinct subspace is already useful by itself.
In the usual conversion from relative-error designs to adaptive security, one starts from the resource state $U^{\ot t} \proj{\Omega} U^{\ot t, \dagger}$, where $\ket{\Omega}$ is the maximally entangled state between $(\C^d)^{\ot t} \ot (\C^d)^{\ot t}$, and uses gate teleportation on this resource state to simulate adaptive queries to the unitary $U$.
This gate teleportation trick essentially reduces adaptive to non-adaptive security, since we now only need to analyse the application of $U$ in parallel on the resource state.
However, the gate teleportation trick introduces a large ``post-selection factor'', which can only be handled with relative-error designs.

Since we have a relative-error design (the $PF$ ensemble) on the distinct subspace, a natural idea is to perform the same gate teleportation trick, except using $U^{\ot t} \proj{\Omega_\Lambda} U^{\ot t, \dagger}$ as the resource state.
Here, $\ket{\Omega_\Lambda}$ is the unnormalised maximally entangled state on the distinct subspace of $(\C^d)^{\ot t}$, i.e.
\begin{align*}
\ket{\Omega_\Lambda} = \sum_{x_1, \dots, x_t \in [d] \text{ distinct}} \ket{x_1, \dots, x_t} \ket{x_1, \dots, x_t} \,.
\end{align*}

Of course $U^{\ot t} \proj{\Omega_\Lambda} U^{\ot t, \dagger}$ is not the ``correct'' resource state for performing gate teleportation; consequently, the resulting state is not simply the state with $U$ applied adaptively in the right places (which is what we get when using the correct resource state $U^{\ot t} \proj{\Omega} U^{\ot t, \dagger}$).
Instead, we get a slightly different state.

The challenge then is to show that the outputs of gate teleportation with the ``correct'' and ``distinct'' resource states are close.
This is the step that we are not able to prove for the general case of unitaries (although it seems very plausible that it can be proven using similar techniques).
However, if we fix $\omega(\log n)$ of the $n$ qubits of the input to the unitary, then we are able to show that the outputs of the gate teleportation for the two different resource states $U^{\ot t} \proj{\Omega} U^{\ot t, \dagger}$ and $U^{\ot t} \proj{\Omega_\Lambda} U^{\ot t, \dagger}$ are indeed close.
Fixing parts of the input to a unitary (to some universal state independent from the other input -- in our case the fixed input qubits are simply in a $\ket{+}$-state) turns the unitary into an isometry.
This is how we achieve PRIs with adaptive security.

Seeing how fixing part of the input to the unitary is helpful requires an explicit expression for the output state of the gate teleportation, so we defer this discussion to \cref{sec:pri_adaptive_security}.
We also remark that it seems likely that this proof strategy can be extended to adaptive PRUs: the only reason we need to relax our construction to PRIs is in order to ensure that the adaptive queries are in some adaptive version of the distinct ``subspace''.
If one could instead use query complexity arguments to prove this property for some unitary ensemble, one could simply replace our random Clifford unitary by this ensemble and achieve adaptively secure PRUs.
We formalise this idea in \cref{sec:towards-adaptive}.

\subsection{Discussion and future directions} \label{sec:discussion}

In this work we have taken a unified approach to constructing $t$-designs and PRUs by introducing the $PFC$ ensemble and showing that it mimicks the Haar measure extremely well.
Derandomising the $PFC$ ensemble with existing classical primitives such as $t$-wise independent functions or pseudorandom functions yields surprisingly simple constructions of diamond-error $t$-designs with circuit depth linear in $t$ and PRUs with non-adaptive security.

There are several ways in which one would like to improve upon these results and we hope that our techniques can be extended accordingly.
Our $t$-design construction only achieves small diamond error (corresponding to non-adaptive security) with linear scaling in $t$, but requires quadratic scaling in $t$ to achieve small relative error.
Constructing a relative-error $t$-design with linear scaling in $t$ remains an interesting open problem. 
It may well be that the $PFC$ ensemble is a relative-error $t$-design with linear scaling in $t$, but our current analysis does not show this because we do not know how to analyse the $PFC$ ensemble on the non-distinct subspace.
We can extend the analysis to a subspace with a constant number of collisions, but going beyond that seems to require new ideas.
One can also optimise the dependence in the number of qubits $n$; for this, we refer to~\cite{random_walks}.

Similarly, our PRU construction only achieves non-adaptive security, and in contrast to $t$-designs we cannot ``amplify'' the PRU construction to relative/adaptive security because that amplification requires $t$ to be known ahead of time. Looking forward, there are three directions in which one would like to extend the security guarantee of our PRU construction: allowing adaptive queries, allowing inverse queries, and allowing controlled queries to the unitary.
Our construction plausibly has adaptive  security and we discuss an approach to extending our current proof to the adaptive setting in \cref{sec:towards-adaptive}.
In contrast, the $PFC$ ensemble is \emph{not} indistinguishable from Haar random unitaries with inverse queries: this is because applying $C^\dagger F^\dagger P^\dagger$ to $\ket{0}$ always yields a stabiliser state, but applying a Haar random unitary to $\ket{0}$ does not, and this difference can be tested efficiently.
However, if one simply adds another independent Clifford at the end (i.e.~considers $C'PFC$), the construction is plausibly secure against inverse queries, but we do not know how to analyse this.
Proving security with controlled access to the unitary is similarly unclear.

Aside from constructing PRUs (or PRIs), their applications are also largely unexplored. Given the utility of PRFs in classical computer science, PRUs appear to be a fundamental primitive for quantum computer science, but relatively few concrete applications have been proposed. Below we briefly discuss some applications that have been mentioned in the literature and suggest some new ones.

One natural area of application is quantum cryptography. 
For example,~\cite{lu2023quantum,ananth2023pseudorandom} showed that PRSSs and PRIs can be useful for multi-copy quantum cryptography:
in most encryption schemes for quantum messages (e.g., the quantum one-time pad and its variants), if we wish to encrypt multiple identical copies of the same quantum state, we need to sample fresh keys for each new ciphertext.~\cite{lu2023quantum,ananth2023pseudorandom} observed that Haar (pseudo)randomness allows one to perform such a task in a compact manner with only a single key (for an arbitrary polynomial amount of identical copies). If the state to be encrypted is guaranteed to be unentangled with the environment, the PRSSs and PRIs from~\cite{lu2023quantum,ananth2023pseudorandom} suffice; in the general case which allows for entanglement with an auxiliary system, non-adaptively PRUs as constructed in our work seem necessary. We remark that \cite{ananth2023pseudorandom} also use PRIs (rather than PRUs) to construct succinct quantum state commitments, many-copy unforgeable quantum message authentication schemes, as well as to increase the length of pseudorandom quantum states.
As another example, PRUs might be useful in the context of unclonable cryptography. Many constructions, such as those for unclonable encryption~\cite{broadbent2020uncloneable} or quantum copy-protection~\cite{coladangelo2020quantum,coladangelo2021hidden}, make use of either Wiesner states or subspace coset states---both of which are completely broken once identical copies become available. It seems plausible that one could use PRUs to construct multi-copy secure unclonable encryption schemes, and even multi-copy secure quantum copy-protection schemes. 

Another area of application concerns the time evolution of chaotic quantum systems. In the past few years, a series of works has proposed using Haar random unitaries as ``perfect scramblers''~\cite{Hayden_2007,brown2012scrambling,susskind2016computational} to model such dynamics. However, a more recent line of work has instead shifted towards quantum pseudorandomness in order to model such phenomena in terms of \emph{efficient} processes.
For example, Kim and Preskill~\cite{Kim_2023} use PRUs to model the internal dynamics of a black hole, whereas Engelhardt et al.~\cite{engelhardt2024cryptographic} use PRUs to model the time evolution operator of a holographic conformal field theory. Due to their scrambling properties, it is conceivable that PRUs will find more applications in theoretical physics.

\paragraph{Related independent work.}
After the initial announcement of our results on pseudorandom unitaries on the arXiv~\cite{metger2024pseudorandom} (which this present paper supersedes), we were made aware of independent work by Chen, Bouland, Brandao, Docter,  Hayden, and Xu~\cite{chen2024prus}, who achieve similar results using a different ensemble of unitaries. In another related independent work, Brakerski and Magrafta~\cite{brakerski2024real} construct real-valued unitaries that look Haar random on any polynomial-sized set of orthogonal input states; they use an ensemble that consists of the product of a Hadamard, a random permutation, and a binary phase operator.

\paragraph{Acknowledgments.} We thank Prabhanjan Ananth, Adam Bouland, Chi-Fang Chen, Tudor Giurgica-Tiron, Jonas Haferkamp, Robert Huang, Isaac Kim, Fermi Ma, Xinyu Tan, and John Wright for helpful discussions. We also thank John for suggesting a simplified proof of \Cref{lem:K_simplified} and Jonas for pointing out a proof of \Cref{lem:diamond_amp}.
We thank the Simons Institute for the Theory of Computing, where some of this work was conducted.
TM acknowledges support from the ETH Zurich Quantum Center, the SNSF QuantERA project (grant 20QT21\_187724), the AFOSR grant FA9550-19-1-0202, and an ETH Doc.Mobility Fellowship. AP is supported by the National Science Foundation (NSF) under Grant No. CCF-1729369. MS acknowledges support from the NSF award QCIS-FF: Quantum Computing \& Information Science Faculty Fellow at the University of Illinois Urbana-Champaign (NSF 1955032). 
HY is supported by AFOSR awards
FA9550-21-1-0040 and FA9550-23-1-0363, NSF CAREER award CCF-2144219, NSF award CCF-2329939, and the Sloan Foundation.

\section{Preliminaries}

\subsection{Notation and basic definitions}
\label{sec:notation}

For $N\in \N$, we use $[N] = \{1,2,\dots,N\}$ to denote the set of integers up to $N$. We oftentimes identify elements $x \in [N]$ with bit strings $x \in \bit^n$ via their binary representation whenever $N=2^n$ and $n \in \N$. We use $\delta_{i,j}$ to denote the delta function which is $1$ iff $i=j$ and 0 otherwise. The notation $x \sim X$, for a set $X$, describes that an element $x$ is drawn uniformly at random from $X$. Similarly, if $\algo D$ is a distribution, we let
$x \sim \algo D$ denote that $x$ is sampled according to $\algo D$. The statistical distance between two distributions $\mathcal{D}$ and $\mathcal{D}'$ over a set $X$ is defined as $\| \mathcal{D} - \mathcal{D}'\|_1 = \sum_{x \in X} |\mathcal{D}(x)-  \mathcal{D}'(x)|$.

\paragraph{Linear maps.} For a Hilbert space $\cH$, we denote by $\linear(\cH)$ linear operators on $\cH$.
A map $\cM: \linear(\cH) \to \linear(\cH)$ is called a quantum channel if it is completely positive and trace-preserving.
If $\cH'$ is an additional Hilbert space and $O \in \linear(\cH \ot \cH')$ is an operator on a larger space, we write $\cM(O)$ to mean $\cM$ applied to the $\cH$-subsystem, i.e.~$\cM(O)$ is shorthand for $(\cM \ot \id_{\cH'})(O)$ where $\id_{\cH'}: \linear(\cH') \to \linear(\cH')$ denotes the identity map. For a subspace $W$ of a vector space $V$, we also denote by $\ind_{W}$ the orthogonal projection onto $W$ (i.e. the identity on the subspace $W$). We use the notation $\mathrm{U}(d)$ and $\mathrm{C}(d)$ to denote the $d$-dimensional unitary group and Clifford group, respectively.

\paragraph{Distance measures.}
For a linear operator $A \in \linear(\cH)$, we use $\norm{A}_1 = \Tr[(A^\dagger A)^{1/2}]$ to denote the Schatten 1-norm (also called trace norm) and $\norm{A}_\infty$ to denote the Schatten $\infty$-norm (also called operator norm), which is defined to be the largest singular value of $A$.
The trace distance between two operators $A, B \in \linear(\cH)$ is defined as $\norm{A - B}_1$.

For two quantum channels $\cM, \cN: \linear(\cH) \to \linear(\cH')$, we define the diamond distance
\begin{align*}
\norm{\cM - \cN}_{\Diamond} = \max_{\ket{\psi} \in \cH \ot \cH} \norm{(\cM \ot \id_{\cH})(\proj{\psi}) - (\cN \ot \id_{\cH})(\proj{\psi})}_1 \,,
\end{align*}
where $\id_{\cH}: \linear(\cH) \to \linear(\cH)$ denotes the identity channel.
The diamond norm is submultiplicative, i.e.~for two non necessarily completely positive superoperators $\cE, \cF$ (e.g.~$\cE = \cF = \cM - \cN$ could be the difference between channels $\cM$ and $\cN$), it holds that
\begin{align}
\norm{\cE \circ \cF}_{\Diamond} \leq \norm{\cE}_{\Diamond} \cdot \norm{\cF}_{\Diamond}\,. \label{eqn:diamond_submult}
\end{align}

\paragraph{Distinct tuples.} We use bold faced fonts to denote tuples. For a tuple $\bx = (x_1,\dots,x_t) \in [d]^t$ and a permutation $\sigma \in S_t$, we write $\bx_{\sigma} = (x_{\sigma(1)}, \cdots,x_{\sigma(t)})$ for the tuple where the indices are permuted according to $\sigma.$ We call a tuple $\bx \in [d]^t$ distinct if $x_i \neq x_j$ for all $i \neq j$. We denote the set of distinct tuples in $[d]^t$ by $\distinct(d,t)$.
We also define the projector onto the subspace of distinct tuples 
\begin{equation}\label{def:distinct}
    \ \Lambda_{d,t} = \sum_{\substack{\bx \in \distinct(d,t)}} \proj{\bx} \,.
\end{equation}
We will frequently drop the indices $d,t$ and just write $\Lambda$, with $d$ and $t$ being clear from context.

In the proof of adaptive security for our PRI construction, we will also make use of the maximally entangled state between two copies of the distict subspace, which we denote by 
\begin{align*}
\ket{\Omega_{\Lambda_{d,t}}} = \frac{1}{\sqrt{|\distinct(d,t)|}} \sum_{\bx \in \distinct(d,t)} \ket{\bx} \ket{\bx} \,.
\end{align*}
As for the projector onto the distinct string subspace, we will frequently drop the indices $d, t$ and simply write $\ket{\Omega_\Lambda}$

\paragraph{Permutation  operators.} \label{sec:prelim_phase_perm}

We define the following permutation operator on $\C^d$.

\begin{definition}[Permutation operator on $\C^d$]
Define the permutation operator $P_\pi$ on $\C^d$ for $\pi \in S_d$ to be the linear map
\begin{align}\label{def:permutationop}
P_\pi: \ket{x} \mapsto \ket{\pi(x)}\,.
\end{align}
\end{definition}

We will frequently consider uniformly random permutation operators on $\C^d$. We will suppress the dependence on $\pi$ and write the random operator as $P$.

\paragraph{Symmetric group and representations.}

Unitary representations of a group allow us to represent the elements of the group as unitary matrices over a vector space in a way that the group operation is represented by matrix multiplication. We consider the following representation of the symmetric group which permutes the tensor factors.

\begin{lemma}[Representation of $S_t$ on tensor product spaces]
For any fixed $d$, define the permutation operator $R_\pi$ on $(\C^d)^{\ot t}$ for $\pi \in S_t$ to be the map
\[
    R_\pi: \ket{\ba} \mapsto \ket{\ba_{\pi^{-1}}}~.
\]
Then $((\C^d)^{\ot t}, R_\pi$ forms a unitary representation of $S_t$.
Note that we leave the dependence of $R_\pi$ on the choice of $d$ implicit.
\end{lemma}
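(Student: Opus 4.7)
The plan is to verify directly the three defining properties of a unitary representation: $R_\pi$ is a unitary operator on $(\C^d)^{\ot t}$ for each $\pi \in S_t$, the identity permutation is mapped to the identity operator, and the group law is preserved, i.e., $R_{\pi\sigma} = R_\pi R_\sigma$ for all $\pi,\sigma \in S_t$.

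First I would extend $R_\pi$ linearly from its definition on the computational basis $\{\ket{\ba} : \ba \in [d]^t\}$. Since $\pi \in S_t$ is a bijection on $[t]$, the map $\ba \mapsto \ba_{\pi^{-1}}$ is a bijection on $[d]^t$, and therefore $R_\pi$ permutes an orthonormal basis of $(\C^d)^{\ot t}$. Hence $R_\pi$ is an orthogonal permutation matrix, which immediately gives unitarity. The identity case $R_e = \id$ is clear since $\ba_{e^{-1}} = \ba$.

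The only nontrivial step is the homomorphism property, and this is where the inverse in the definition matters (without it one would obtain an anti-homomorphism). Concretely, I would compute
\begin{align*}
R_\pi R_\sigma \ket{\ba} = R_\pi \ket{\ba_{\sigma^{-1}}},
\end{align*}
and then check that if $\bb = \ba_{\sigma^{-1}}$, i.e.~$b_i = a_{\sigma^{-1}(i)}$, then $(\bb_{\pi^{-1}})_i = b_{\pi^{-1}(i)} = a_{\sigma^{-1}(\pi^{-1}(i))} = a_{(\pi\sigma)^{-1}(i)}$, so that $R_\pi R_\sigma \ket{\ba} = \ket{\ba_{(\pi\sigma)^{-1}}} = R_{\pi\sigma}\ket{\ba}$. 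Extending by linearity yields $R_{\pi\sigma} = R_\pi R_\sigma$ on all of $(\C^d)^{\ot t}$.

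There is no real obstacle here; the statement is a standard fact and the entire proof is a short index-chasing calculation. The only subtle point worth flagging explicitly is the presence of $\pi^{-1}$ in the definition, which is precisely what makes the assignment $\pi \mapsto R_\pi$ a left action of $S_t$ rather than a right action.
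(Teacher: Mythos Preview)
Your proof is correct. The paper states this lemma without proof, treating it as a standard fact in the preliminaries; your direct verification of unitarity, the identity, and the homomorphism property (with the key observation that the $\pi^{-1}$ in the definition is exactly what yields a left action) is the natural way to justify it.
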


We note that the representation $((\C^d)^{\ot t}, R_{(\cdot)})$ can be decomposed into (isotypic) copies of the irreducible representations (or irreps) of the symmetric group $S_t$, which we denote by $\{(V_{\lambda}, R^{\lambda}_{(\cdot)})\}_{\lambda}$, where $\lambda \vdash t$ is a partition of $t$ (or Young diagram with at most $t$ boxes) and $V_\lambda$ are vector spaces called Specht modules.  

\paragraph{Binary phase operators.} For a function $f: [d] \to \bit$, we define the binary phase operator 
\begin{align}\label{def:binaryphaseop}
F_f: \ket{x} \mapsto (-1)^{f(x)} \ket{x}.
\end{align} 
We will frequently consider uniformly random binary phase operators, which we will just write as $F$.

\subsection{Haar measure and unitary designs}

We recall the definition of the Haar measure and $t$-wise twirl.

\begin{definition}[Haar measure]
The Haar measure is the unique left- and right-invariant probability measure on the unitary group $\unitary(d)$.
Throughout this paper, we denote sampling from the Haar measure over $\unitary(d)$ by $U \sim \HaarMeasure(d)$. If the dimension $d$ is clear from the context, we simply write $U \sim \HaarMeasure$.
\end{definition}

\begin{definition}[$t$-wise twirl] \label{def:twirl}
Let $\nu$ be an ensemble of unitary operators in $\mathrm{U}(d)$. Then, the $t$-wise twirl (also called $t$-th moment operator) with respect to $\nu$ is defined as the operator
\begin{align*}
\mathcal{M}_\nu^{(t)}(\cdot) = \E_{R\sim \nu} R^{\ot t} (\cdot) R^{\ot t, \dagger} \,.
\end{align*}
\end{definition}
If $R$ is a random unitary matrix whose distribution is clear in context, we oftentimes use the shorthand notation $\momr{\cdot}$. For example, we frequently consider the $t$-wise $\mompft{\cdot}$ twirl, where $R = PF$ is a product of a random permutation operator $P$ and a binary phase operator $F$, which are sampled independently.
We call this the permutation-phase twirl.
Similarly, we use $\mompfct{\cdot}$ to denote the ``$PFC$-twirl'', where $R = P F C$ for $P$ and $F$ as before and $C$ a uniformly random Clifford unitary. Finally, we denote by $\mathcal{M}_{\Haar}^{(t)}(\cdot)$ the $t$-wise twirl over unitaries which are sampled according to the Haar measure.

Just like $t$-wise independent functions ``simulate'' uniformly random functions if one only considers $t$-th moments, there is a notion of simulating the Haar measure up to the $t$-th moment, called a \emph{$t$-design}.
Various notions of approximation exist for $t$-designs and we refer to \cite{mele2023introduction} for an overview.
Here, we will use the following two notions of approximate $t$-designs.
\begin{definition}[Diamond $\eps$-approximate $t$-design] \label{def:additive_error_design}
An ensemble $\nu$ of unitary operators in $\mathrm{U}(d)$ is called a diamond $\eps$-approximate $t$-design, if
$$
\left\| \mathcal{M}_{\nu}^{(t)} - \mathcal{M}_{\Haar}^{(t)} \right\|_\Diamond \leq \eps.
$$  
\end{definition}

\begin{definition}[Relative-error $\eps$-approximate $t$-design] \label{def:relative-error-design}
An ensemble $\nu$ of unitaries acting on a Hilbert space $\cH$ is a relative-error $\eps$-approximate $t$-design if, for all states $\ket{\psi} \in \cH \ot \cH$, the following operator inequality holds
\begin{align*}
(1- \eps) \mathcal{M}_{\nu}^{(t)}(\proj{\psi}) \leq \momhaart{\proj\psi} \leq (1+ \eps) \mathcal{M}_{\nu}^{(t)}(\proj{\psi}) \,.
\end{align*}
We call a random unitary a left-one-sided or right-one-sided relative-error $\eps$-approximate $t$-design if only the left or right side of these inequalities holds.
\end{definition}

Every diamond error $t$-design is also a relative-error $t$-design, but this conversion incurs a large blowup in the error.
\begin{lemma}[{\cite[Lemma 3]{brandao2016local}}] \label{lem:diamond-to-rel}
Suppose a random unitary $R$ in $d$ dimensions is a diamond $\eps$-approximate $t$-design.
Then $R$ is also a relative-error $(\eps \cdot d^{2t})$-approximate $t$-design.
\end{lemma}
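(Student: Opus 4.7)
The plan is to pass to the Choi--Jamio\l kowski representation, where the diamond-distance hypothesis becomes a trace-distance bound on two Choi operators, and then to use the Schur--Weyl block structure of the Haar Choi operator to promote that additive bound into the claimed multiplicative one.

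Let $\ket{\Omega} = \sum_{\bm{i}} \ket{\bm{i}}\ket{\bm{i}}$ be the unnormalized maximally entangled vector on $\cH^{\ot t}\ot\cH^{\ot t}$ with $\cH=\C^d$, and define the Choi operators $J_R = (\momr{\cdot}\ot \id)(\proj{\Omega})$ and $J_H = (\momhaart{\cdot}\ot\id)(\proj{\Omega})$. Applying the diamond-norm hypothesis to the normalized maximally entangled state $\proj{\Omega}/d^t$ yields $\norm{J_R-J_H}_1 \leq \eps\,d^t$ and hence also $\norm{J_R-J_H}_\infty \leq \eps\,d^t$. Moreover, each pure component $(R^{\ot t}\ot\id)\proj{\Omega}(R^{\ot t,\dagger}\ot\id)$ in the mixture defining $J_R$ lies in the subspace $\mathrm{span}\{(U^{\ot t}\ot\id)\ket{\Omega} : U \in \unitary(d)\} = \supp(J_H)$, so $\supp(J_R)\subseteq \supp(J_H)$. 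Letting $\Pi$ denote the projector onto $\supp(J_H)$, we therefore obtain the two-sided operator bound
\[-\eps\,d^t\,\Pi \preceq J_R - J_H \preceq \eps\,d^t\,\Pi.\]

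The crucial step is to show that $J_H \succeq d^{-t}\,\Pi$. By Schur--Weyl duality, $\cH^{\ot t}\cong \bigoplus_{\lambda\vdash t} W_\lambda\ot V_\lambda$, and the identity operator on $\cH^{\ot t}$, viewed as the vector $\ket{\Omega}$, decomposes (after reordering tensor factors) as $\ket\Omega = \sum_\lambda \ket{\Omega^W_\lambda}\ot \ket{\Omega^V_\lambda}$, where $\ket{\Omega^W_\lambda}$ and $\ket{\Omega^V_\lambda}$ are the unnormalized maximally entangled vectors on two copies of $W_\lambda$ and $V_\lambda$, respectively. Since $\unitary(d)$ acts irreducibly on each $W_\lambda$, Schur's lemma then yields
\[J_H = \bigoplus_{\lambda\vdash t}\frac{\id_{W_\lambda}\ot\id_{W_\lambda}}{\dim W_\lambda}\ot \proj{\Omega^V_\lambda},\]
whose nonzero eigenvalues are $\dim V_\lambda/\dim W_\lambda$ (each with multiplicity $(\dim W_\lambda)^2$). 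Using $\dim V_\lambda \geq 1$ and $\dim W_\lambda \leq d^t$ for every $\lambda\vdash t$, this gives $J_H \succeq d^{-t}\,\Pi$ as required.

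Combining the two bounds, $J_R - J_H \preceq \eps\,d^t\,\Pi \preceq \eps\,d^{2t}\,J_H$, and symmetrically $J_H - J_R \preceq \eps\,d^{2t}\,J_H$. Since $\mathcal{M}(\rho) = \Tr_2[(\id\ot\rho^T)\,J_{\mathcal{M}}]$ and partial tracing against a positive operator preserves operator inequalities, these bounds transfer to $\momr{\rho} \preceq (1+\eps\,d^{2t})\momhaart{\rho}$ and $\momhaart{\rho}\preceq (1+\eps\,d^{2t})\momr{\rho}$ for every state $\rho$, in particular for $\rho = \proj{\psi}$, which is the desired relative-error inequality. The main obstacle is the eigenvalue bound on $J_H$: while the Schur--Weyl decomposition of $\ket\Omega$ is standard, correctly tracking the tensor-factor reorderings and reading off the smallest nonzero eigenvalue from the $\dim V_\lambda/\dim W_\lambda$ ratio (in particular the dimension estimate $\dim W_\lambda\leq d^t$) is the only non-routine piece of the argument; a direct attempt to argue on the channel-output side fails because the minimum nonzero eigenvalue of $\momhaart{\proj\psi}$ is not bounded below by $d^{-2t}$ in general when $\ket\psi$ carries an entangled ancilla.
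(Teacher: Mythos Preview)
The paper does not prove this lemma; it is quoted verbatim from \cite[Lemma 3]{brandao2016local} and used as a black box in \cref{lem:error_amplified}. So there is no ``paper's own proof'' to compare against. Your argument is essentially the standard one from \cite{brandao2016local}: pass to Choi operators, use $\supp(J_R)\subseteq\supp(J_H)$ to confine the additive error to $\Pi$, and then lower-bound the smallest nonzero eigenvalue of $J_H$ via the Schur--Weyl block decomposition. The computation $J_H=\bigoplus_\lambda (\id_{W_\lambda}\ot\id_{W_\lambda})/\dim W_\lambda \ot \proj{\Omega^V_\lambda}$ and the bound $\dim V_\lambda/\dim W_\lambda\geq d^{-t}$ are correct.

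One small wrinkle: your eigenvalue bound is for $J_H$, so what you actually obtain is $(1-\eps d^{2t})J_H\preceq J_R\preceq(1+\eps d^{2t})J_H$, i.e.\ the sandwich with $\cM_{\Haar}$ on the \emph{outside}. The paper's \cref{def:relative-error-design} instead places $\cM_\nu$ on the outside. The lower side converts cleanly (since $1/(1+x)\geq 1-x$), but the upper side $(1-\eps d^{2t})J_H\preceq J_R$ only yields $J_H\preceq \tfrac{1}{1-\eps d^{2t}}J_R$, not $J_H\preceq(1+\eps d^{2t})J_R$ as you write; you lose a factor of roughly $2$ (and need $\eps d^{2t}<1$, which is the only interesting regime anyway). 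This is purely cosmetic---the lemma is only ever used to track the $d^{2t}$ blow-up, and a constant in front of $\eps d^{2t}$ changes nothing in \cref{lem:error_amplified}---but your final sentence overstates what the Choi inequality gives.
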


\subsection{Schur-Weyl duality}

Consider the representation $R_\pi$  of the symmetric group $S_t$ and the representation $U^{\ot t}$ of the unitary group $\mathrm{U}(d)$ both on the vector space $(\C^d)^{\ot t}$. Schur-Weyl duality says that the irreducible subrepresentations of both these representations fit together nicely. 

\begin{lemma}[{Schur-Weyl duality, see e.g.~\cite[Theorem 1.10]{christandl2006structure}}] \label{lem:schur-weyl}
The tensor product space $(\C^d)^{\ot t}$ can be decomposed as
\[
    (\C^d)^{\ot t} \cong \bigoplus_{\lambda \vdash t} P_{\lambda}~ \text{ with } P_{\lambda} = W_{\lambda} \ot V_{\lambda}
\]
where $\lambda \vdash t$ indexes partitions of $\{1, \dots, t\}$, which are commonly represented by Young diagrams.\footnote{We note that throughout this paper $d \gg t$, otherwise the Young diagrams need to be restricted to $d$ rows.}

The \emph{Weyl modules} $W_\lambda$ are irreducible subspaces for the unitary group $\mathrm{U}(d)$ and the \emph{Specht modules} $V_\lambda$ are irreducible subspaces for the symmetric group $S_t$. Consequently, the action of the product group $S_t \times U_d$ on $(\C^d)^{\ot t}$ decomposes as
\begin{align*}
    R_\pi  = \sum_{\lambda \vdash t} \ind_{W_{\lambda}} \ot R_\pi^{(\lambda)} \text{\hphantom{t} and \hphantom{t}} &  U^{\ot t}  = \sum_{\lambda \vdash t} U^{(\lambda)} \ot \ind_{V_{\lambda}}~,
\end{align*}
where $(W_\lambda, U^{(\lambda)})$ and $(V_\lambda, R_\pi^{(\lambda)})$ are irreducible representations of the unitary group $\unitary(d)$ and the symmetric group $S_t$, respectively. 
\end{lemma}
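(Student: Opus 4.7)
The plan is to prove this classical statement via the double commutant theorem. Let $\cA \subseteq \mathrm{End}((\C^d)^{\ot t})$ denote the unital $*$-subalgebra spanned by $\{U^{\ot t}\}_{U \in \unitary(d)}$, and $\cB$ the one spanned by $\{R_\pi\}_{\pi \in S_t}$. Pointwise commutation $[\cA, \cB] = 0$ is immediate since permuting tensor factors commutes with applying the same operator on each factor. The goal is to establish that $\cA$ and $\cB$ are each other's full commutants inside $\mathrm{End}((\C^d)^{\ot t})$, i.e.\ $\cA' = \cB$ and $\cB' = \cA$. Once this is shown, the structure theorem for a pair of mutually commutant semisimple $*$-subalgebras of a finite-dimensional matrix algebra gives a canonical decomposition $(\C^d)^{\ot t} \cong \bigoplus_\lambda W_\lambda \ot V_\lambda$, in which the $W_\lambda$ (respectively $V_\lambda$) range over the distinct irreducibles of $\cA$ (respectively $\cB$) that occur, and the two displayed formulas for $R_\pi$ and $U^{\ot t}$ drop out automatically from the fact that $\cA$ acts only on the first tensor factor of each summand and $\cB$ only on the second.

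The heart of the argument is the density statement $\cA' \subseteq \cB$: every operator commuting with all $U^{\ot t}$ is a linear combination of permutation operators. I would approach this through polarization. Using Weyl's unitary trick to replace $\unitary(d)$-invariance by $\mathrm{GL}(d)$-invariance, and the natural identification $\mathrm{End}((\C^d)^{\ot t}) \cong (\C^d)^{\ot t} \ot ((\C^d)^{\ot t})^*$ as a $\mathrm{GL}(d)$-representation with diagonal action, the target reduces to showing that the $\mathrm{GL}(d)$-invariants on this space are spanned by the $R_\pi$. This in turn follows from the polarization identity, which writes any symmetrized elementary tensor in $\mathrm{Sym}^t(\C^d)$ as an explicit linear combination of pure powers $v^{\ot t}$ via inverting a Vandermonde-style system in the coefficients; applying this observation in both tensor factors pins down the invariants as precisely the image of $\C[S_t] \to \mathrm{End}((\C^d)^{\ot t})$, $\pi \mapsto R_\pi$. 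The reverse inclusion $\cB \subseteq \cA'$ is the trivial direction, and $\cB' = \cA$ then follows from the double commutant theorem $\cA'' = \cA$ for unital $*$-subalgebras of matrix algebras.

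To finish, one identifies the index set with partitions $\lambda \vdash t$ (having at most $d$ rows) and matches $W_\lambda$ and $V_\lambda$ with the Weyl and Specht modules respectively. Since the irreducible $S_t$-representations are exhaustively classified by the Specht modules $\{V_\mu\}_{\mu \vdash t}$, every $V_\lambda$ appearing is a Specht module labelled by some $\lambda \vdash t$; the corresponding $W_\lambda$ is then by definition its multiplicity space, which one verifies is the irreducible $\unitary(d)$-module of highest weight $\lambda$ either by a character argument (the Schur polynomial $s_\lambda$ in $d$ variables vanishes precisely when $\lambda$ has more than $d$ rows, which controls which $\lambda$ actually occur) or by constructing an explicit highest-weight vector via a Young symmetrizer applied to a standard tableau. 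I expect the main obstacle to be the polarization step establishing $\cA' = \cB$; everything else is general representation-theoretic machinery once that density statement is in hand. An alternative route would be a purely character-theoretic computation using Weyl's integration formula and Frobenius's formula for $S_t$ characters, but the commutant approach yields the module structure more directly.
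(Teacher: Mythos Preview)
The paper does not give its own proof of this lemma; it is stated as a background result with a citation to \cite[Theorem 1.10]{christandl2006structure}. So there is nothing to compare against on the paper's side.

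Your plan via the double commutant theorem is the standard and correct route. One small point of logic is worth flagging: the polarization argument you describe most naturally establishes $\cB' = \cA$ rather than $\cA' \subseteq \cB$. Concretely, polarization shows that $\mathrm{span}\{X^{\ot t} : X \in \mathrm{End}(\C^d)\} = \mathrm{Sym}^t(\mathrm{End}(\C^d))$, and under the identification $\mathrm{End}((\C^d)^{\ot t}) \cong (\mathrm{End}(\C^d))^{\ot t}$ the symmetric tensors are exactly the operators fixed by conjugation by every $R_\pi$, i.e.\ the commutant $\cB'$. Together with the unitary trick this gives $\cA = \cB'$. The statement $\cA' = \cB$ then follows by taking commutants: $\cA' = \cB'' = \cB$, using that $\cB$ is a $*$-subalgebra (or the image of the semisimple group algebra $\C[S_t]$) and hence equals its own bicommutant. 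Your outline has these two directions interchanged, claiming polarization gives $\cA' \subseteq \cB$ directly; that direction is the First Fundamental Theorem for $\mathrm{GL}(d)$ and is not what polarization hands you without further work. All of your ingredients are correct, though, and simply reordering the two steps makes the argument go through cleanly.
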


We denote the specific basis which block-diagonalizes all the above operations as the Schur-Weyl basis.

\begin{definition}[Schur-Weyl basis]
Let $\{\ket{w_{\lambda, i}}\}_{i}$ and $\{\ket{v_{\lambda, j}}\}_j$ be orthonormal bases of $W_\lambda$ and $V_\lambda$, respectively.
Then we call 
\begin{align*}
\{\ket{w_{\lambda, i}} \ot \ket{v_{\lambda, j}}\}_{\lambda, i, j}
\end{align*}
a Schur-Weyl basis of $(\C^d)^{\ot t}$ (where we interpret each vector $\ket{w_{\lambda, i}} \ot \ket{v_{\lambda, j}} \in P_\lambda$ as a $(\C^d)^{\ot t}$-vector in the natural way).
\end{definition}

The following decomposition of the distinct subspace projector $\Lambda$ easily follows from Schur-Weyl duality: since $\Lambda$ is invariant under permutation of the tensor factors (i.e.~$R_{\pi} \Lambda = \Lambda R_{\pi}$ for all $\pi \in S_t$), it acts as an identity on the Specht modules $V_{\lambda}$ by Schur's lemma. The same decomposition also holds for any permutation-invariant operator but we shall only need the following.

\begin{lemma}[Decomposition of the distinct subspace projector] \label{lem:perm_inv_sw}
Let $\Lambda \in \linear((\C^d)^{\ot t})$ be the projector on the tuples of distinct strings defined in \Cref{def:distinct}. 
Then, 
\begin{align*}
\Lambda = \sum_{\lambda \vdash t} \Lambda^{(\lambda)}_{W_\lambda} \ot \id_{V_\lambda} \,,
\end{align*}
with each $\Lambda^{(\lambda)}_{W_{\lambda}}$ a projector on a subspace of $W_\lambda$. 
\end{lemma}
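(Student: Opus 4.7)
The plan is to derive the decomposition directly from Schur's lemma applied to the Schur--Weyl decomposition of $(\C^d)^{\otimes t}$, with the key input being that $\Lambda$ commutes with the $S_t$-action $R_\pi$.

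First I would establish permutation invariance of $\Lambda$. By definition, $R_\pi \ket{\bx} = \ket{\bx_{\pi^{-1}}}$, and the set $\distinct(d,t)$ is closed under permuting coordinates (a tuple has distinct entries if and only if any reordering of it does). Therefore
\[
R_\pi \Lambda R_\pi^{-1} \;=\; \sum_{\bx \in \distinct(d,t)} \proj{\bx_{\pi^{-1}}} \;=\; \Lambda
\]
for every $\pi \in S_t$. Equivalently, $[\Lambda, R_\pi] = 0$ for all $\pi$.

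Next I would invoke \Cref{lem:schur-weyl}. Under the decomposition $(\C^d)^{\otimes t} \cong \bigoplus_{\lambda \vdash t} W_\lambda \otimes V_\lambda$, the symmetric group acts as $R_\pi = \sum_\lambda \ind_{W_\lambda} \otimes R^{(\lambda)}_\pi$, with each $(V_\lambda, R^{(\lambda)}_{(\cdot)})$ an irreducible representation of $S_t$. Any operator that commutes with the whole $S_t$-action preserves each isotypic component $W_\lambda \otimes V_\lambda$, and by Schur's lemma its restriction to that component must have the form $A_\lambda \otimes \ind_{V_\lambda}$ for some $A_\lambda \in \linear(W_\lambda)$. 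Applying this to $\Lambda$ yields
\[
\Lambda \;=\; \sum_{\lambda \vdash t} A_\lambda \otimes \ind_{V_\lambda}
\]
for a collection of operators $\{A_\lambda\}_{\lambda \vdash t}$.

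Finally, I would promote each $A_\lambda$ to an orthogonal projector. Because the blocks $W_\lambda \otimes V_\lambda$ are mutually orthogonal, the identities $\Lambda^\dagger = \Lambda$ and $\Lambda^2 = \Lambda$ decouple across $\lambda$ and give $A_\lambda^\dagger = A_\lambda$ and $A_\lambda^2 = A_\lambda$. Hence $A_\lambda$ is an orthogonal projection onto some subspace of $W_\lambda$, which we rename $\Lambda^{(\lambda)}_{W_\lambda}$, yielding exactly the claimed decomposition. There is essentially no hard step here: once permutation invariance is noted, the statement is a direct application of Schur's lemma, and the projector properties of the $A_\lambda$ follow from block orthogonality.
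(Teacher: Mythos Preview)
Your proof is correct and follows essentially the same approach as the paper: the paper simply notes that $\Lambda$ commutes with every $R_\pi$ and invokes Schur's lemma to conclude that $\Lambda$ acts as the identity on each Specht module $V_\lambda$. Your write-up is slightly more explicit in verifying that each block $A_\lambda$ inherits the projector property from $\Lambda^2=\Lambda$ and $\Lambda^\dagger=\Lambda$, which the paper leaves implicit.
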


We will also need the following relation between the dimensions of the Weyl and Specht modules.
\begin{lemma}[\cite{christandl2006structure}, Theorem 1.16] \label{lem:dim_weyl_specht}
The dimensions of the Weyl and Specht modules $W_\lambda$ and $V_\lambda$ satisfy
\[
    \dim(W_\lambda) = \frac{\dim(V_\lambda)}{t!} \prod_{(i,j) \in \lambda} (d + j - i) \,,
\]
where $(i,j)$ denotes the row and column number of a box in the Young diagram corresponding to $\lambda$. 
\end{lemma}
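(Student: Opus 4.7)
The plan is to obtain the identity by combining two classical dimension formulas from representation theory: Weyl's dimension formula for the irreducible representations of $\mathrm{U}(d)$, and the Frame--Robinson--Thrall hook length formula for the irreducible representations of $S_t$. The claim is essentially the ratio of these two formulas, rewritten in a symmetric form.

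First, I would recall Weyl's dimension formula, which gives, for a partition $\lambda = (\lambda_1, \ldots, \lambda_d)$ (padded with zeros if $\lambda$ has fewer than $d$ parts),
\begin{align*}
\dim W_\lambda = \prod_{1 \leq i < j \leq d} \frac{\lambda_i - \lambda_j + j - i}{j - i} \,.
\end{align*}
Equivalently, this is the principal specialization $s_\lambda(1,1,\ldots,1)$ of the Schur polynomial in $d$ variables. Next, I would invoke the hook length formula for Specht modules,
\begin{align*}
\dim V_\lambda = \frac{t!}{\prod_{(i,j) \in \lambda} h(i,j)} \,,
\end{align*}
where $h(i,j)$ is the hook length of the box in row $i$ and column $j$ of the Young diagram of $\lambda$.

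The key combinatorial step is to rewrite Weyl's formula in ``content--hook'' form, namely
\begin{align*}
\prod_{1 \leq i < j \leq d} \frac{\lambda_i - \lambda_j + j - i}{j - i} = \prod_{(i,j) \in \lambda} \frac{d + j - i}{h(i,j)} \,,
\end{align*}
where $j - i$ is the content of the box $(i,j)$. This identity is standard (see e.g.~Fulton--Harris or Stanley's EC2) and follows from a careful bijective bookkeeping: one matches factors in the denominator of the left-hand side with hook lengths on the right, and factors in the numerator with the quantities $d + j - i$ attached to boxes. Combining the three displayed equations yields
\begin{align*}
\dim W_\lambda = \frac{\prod_{(i,j) \in \lambda} (d + j - i)}{\prod_{(i,j) \in \lambda} h(i,j)} = \frac{\dim V_\lambda}{t!} \prod_{(i,j) \in \lambda} (d + j - i) \,,
\end{align*}
which is the claimed identity.

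The main obstacle is the combinatorial rewriting step, which is not entirely obvious despite being classical; however, since both the hook length formula and Weyl's dimension formula in its content--hook form are standard results, I would simply cite them rather than reprove them. As an alternative route, one could instead deduce the identity directly from Schur--Weyl duality together with the character-theoretic identity $s_\lambda(1^d) = \dim W_\lambda$ and the fact that $\chi_\lambda(e) = \dim V_\lambda$, avoiding Weyl's formula entirely but at the cost of invoking the Frobenius character formula for $S_t$.
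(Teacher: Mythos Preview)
Your proposal is correct and follows the standard route to this identity via Weyl's dimension formula in content--hook form combined with the Frame--Robinson--Thrall hook length formula. However, note that the paper does not actually prove this lemma: it is stated with a citation to \cite[Theorem 1.16]{christandl2006structure} and used as a black box, so there is no proof in the paper to compare against. Your sketch is a perfectly reasonable way to justify the result if one wanted to include a proof.
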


The following lemma follows from the standard formula for computing the projector on isotypical copies of an irreducible subspace (see~\cite[Section 2.4]{fulton2013representation}). Here we apply it to the representation of the symmetric group $S_t$  over $(\C^d)^{\ot t}$, where Schur-Weyl duality implies that the subspace of all isotypic copies of the Specht modules $V_{\lambda}$ is exactly $P_\lambda = W_\lambda \ot V_\lambda$.

\begin{lemma}\label{lem:P_characters}
The projection onto the subspace $P_\lambda$ is given by 
\begin{equation}
\label{eq:isotypical-projection}
    \ind_{P_\lambda} = \frac{\dim(V_\lambda)}{t!} \sum_{\pi \in S_t} \chi_\lambda(\pi^{-1}) R_\pi \,,
\end{equation}
where $\chi_\lambda(\cdot) = \Tr[R^{\lambda}_{(\cdot)}]$ is the character corresponding to the irrep $(V_{\lambda}, R^{(\lambda)}_{(\cdot)})$.
\end{lemma}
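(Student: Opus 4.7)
The plan is to derive the identity directly from the standard character-theoretic projector formula for finite-group representations, applied to the representation $((\C^d)^{\ot t}, R_{(\cdot)})$ of $S_t$. Recall that for any finite group $G$ acting on a space $\cH$ via a representation $\rho$, and for an irrep $\sigma$ of $G$, the orthogonal projection onto the $\sigma$-isotypic component of $\cH$ (i.e.\ the subspace spanned by all copies of $\sigma$ sitting inside $\cH$) is given by
\begin{equation*}
\Pi_\sigma = \frac{\dim(\sigma)}{|G|} \sum_{g \in G} \chi_\sigma(g^{-1})\, \rho(g)\,,
\end{equation*}
where $\chi_\sigma$ is the character of $\sigma$. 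This is a textbook consequence of the orthogonality relations for matrix coefficients (see e.g.\ \cite[Section 2.4]{fulton2013representation}).

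Second, I would specialize to $G = S_t$, $\rho(\pi) = R_\pi$, and $\sigma = (V_\lambda, R^{(\lambda)}_{(\cdot)})$, so $|G| = t!$, $\dim(\sigma) = \dim(V_\lambda)$, and $\chi_\sigma(\pi) = \chi_\lambda(\pi)$. Plugging in gives exactly the right-hand side of \eqref{eq:isotypical-projection}. It remains to identify the $V_\lambda$-isotypic component of $(\C^d)^{\ot t}$ under the $R_\pi$-action with the subspace $P_\lambda = W_\lambda \ot V_\lambda$ appearing in Schur-Weyl duality.

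For this identification I would invoke \Cref{lem:schur-weyl}: the decomposition $R_\pi = \sum_\lambda \ind_{W_\lambda} \ot R^{(\lambda)}_\pi$ shows that, as an $S_t$-representation, $P_\lambda$ is isomorphic to $V_\lambda^{\oplus \dim(W_\lambda)}$, and the distinct $V_\lambda$'s are pairwise inequivalent irreps of $S_t$. Thus $P_\lambda$ is precisely the $V_\lambda$-isotypic subspace of $(\C^d)^{\ot t}$, so $\Pi_{V_\lambda} = \ind_{P_\lambda}$, and the lemma follows.

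There is no real obstacle here: the only content beyond the classical projection formula is the correct bookkeeping of multiplicities via Schur-Weyl duality, which is already recorded in \Cref{lem:schur-weyl}. The main thing to be careful about is to distinguish between the character of the abstract irrep $V_\lambda$ (which is $\chi_\lambda$) and the trace of $R_\pi$ on the full space $P_\lambda = W_\lambda \ot V_\lambda$ (which is $\dim(W_\lambda)\,\chi_\lambda(\pi)$); using the abstract character $\chi_\lambda$ as in the statement is what makes the coefficients come out correctly.
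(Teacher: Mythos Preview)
Your proposal is correct and matches the paper's approach exactly: the paper does not give a detailed proof but simply cites the standard isotypical projector formula from \cite[Section 2.4]{fulton2013representation} and notes that Schur-Weyl duality identifies the $V_\lambda$-isotypic component with $P_\lambda = W_\lambda \ot V_\lambda$. Your write-up is in fact more explicit than the paper's own justification.
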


We will also need (a special case of) the standard Schur orthogonality relations for matrix coefficients (see~\cite{Bump}, Theorems 2.3 and 2.4). These relations say that if we express unitary irreducible representations of a group in any basis, then the different matrix entries are orthogonal under an inner product obtained by averaging over the group. Here we specialize the above to the Schur-Weyl basis and the unitary irreducible subrepresentations $R^{\lambda}_{(\cdot)}$ of $R_{(\cdot)}$ as given in \Cref{lem:schur-weyl}.
\begin{lemma}[Schur orthogonality relations] \label{lem:orthogonality}
    Let $(V_{\lambda}, R^{\lambda}_{(\cdot)}), (V_{\lambda'}, R^{\lambda'}_{(\cdot)})$ be two irreducible representations of the symmetric group $S_t$. Then,
    \[ \E_{\pi \in S_t}\left[ \bra{v_{\lambda, i}} R^{\lambda}_{\pi}  \ket{v_{\lambda, j}}  \overline{\bra{v_{\lambda', k}} R^{\lambda'}_{\pi}  \ket{v_{\lambda', \ell}}} \right] = \frac{1}{\dim(V_{\lambda})} \delta_{\lambda, \lambda'} \delta_{i,k} \delta_{j,\ell}\,. \]
\end{lemma}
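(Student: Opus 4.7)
The plan is to prove this via the standard Schur's lemma argument: build an intertwiner by averaging a fixed rank-one operator over the group, apply Schur's lemma to pin down its form, and then read off the desired matrix elements. Concretely, for fixed indices $j, \ell$ I would introduce the linear map
\[
T_{j,\ell} \;\coloneqq\; \E_{\pi \in S_t}\Bigl[ R^{\lambda}_{\pi} \,\ketbra{v_{\lambda,j}}{v_{\lambda',\ell}}\, (R^{\lambda'}_{\pi})^{-1} \Bigr] \;:\; V_{\lambda'} \to V_{\lambda} \,.
\]
A one-line change of variable $\pi \mapsto \sigma^{-1}\pi$ in the expectation, together with the homomorphism property $R^{\mu}_{\sigma \pi} = R^{\mu}_{\sigma} R^{\mu}_{\pi}$, shows that $R^{\lambda}_{\sigma} T_{j,\ell} = T_{j,\ell} R^{\lambda'}_{\sigma}$ for every $\sigma \in S_t$, so $T_{j,\ell}$ is an intertwiner between the two irreps.

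Next I would invoke Schur's lemma. If $\lambda \neq \lambda'$, then $T_{j,\ell} = 0$, which immediately yields the factor $\delta_{\lambda,\lambda'}$ once we take the matrix element $\bra{v_{\lambda,i}} T_{j,\ell} \ket{v_{\lambda',k}}$ and verify (using unitarity, i.e.\ $(R^{\lambda'}_{\pi})^{-1} = (R^{\lambda'}_{\pi})^{\dagger}$, so that $\bra{v_{\lambda',\ell}}(R^{\lambda'}_{\pi})^{-1}\ket{v_{\lambda',k}} = \overline{\bra{v_{\lambda',k}} R^{\lambda'}_{\pi} \ket{v_{\lambda',\ell}}}$) that this matrix element equals the left-hand side of the lemma. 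If $\lambda = \lambda'$, Schur's lemma gives $T_{j,\ell} = c_{j,\ell}\, \ind_{V_\lambda}$ for some scalar $c_{j,\ell}$, which I would determine by taking the trace:
\[
\dim(V_\lambda) \cdot c_{j,\ell} \;=\; \Tr(T_{j,\ell}) \;=\; \E_{\pi}\Tr\!\left[ R^{\lambda}_{\pi} \ketbra{v_{\lambda,j}}{v_{\lambda,\ell}} (R^{\lambda}_{\pi})^{-1}\right] \;=\; \Tr\!\left[\ketbra{v_{\lambda,j}}{v_{\lambda,\ell}}\right] \;=\; \delta_{j,\ell} \,,
\]
using cyclicity of the trace. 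Hence $c_{j,\ell} = \delta_{j,\ell}/\dim(V_\lambda)$, and the matrix element $\bra{v_{\lambda,i}} T_{j,\ell} \ket{v_{\lambda,k}}$ equals $\delta_{j,\ell}\,\delta_{i,k}/\dim(V_\lambda)$, which is exactly the right-hand side.

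This proof has essentially no obstacle: the representations $R^{\lambda}$ are irreducible unitary representations (given by the decomposition in \Cref{lem:schur-weyl}), so Schur's lemma applies directly, and the only mild bookkeeping is the unitarity step to convert $(R^{\lambda'}_{\pi})^{-1}$ into a complex conjugate of a matrix entry so that the averaged operator matches the expression in the lemma statement. The argument is completely standard (see e.g.\ Bump, Theorems 2.3--2.4, as cited), and I would present it in exactly the form above with a short remark that the only place unitarity is used is in this last conjugation step.
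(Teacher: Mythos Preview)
Your proof is correct and is exactly the standard argument for the Schur orthogonality relations. The paper itself does not prove this lemma: it is stated without proof and simply cited as a standard fact from Bump, Theorems 2.3--2.4. Your Schur's-lemma-plus-trace argument is precisely the proof one finds in such references, so there is nothing to compare.
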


\section{Analysis of the $PFC$ ensemble}
\label{sec:pru_security}

Our main technical result is to introduce a new ensemble of random unitaries, the concatenation of a random Clifford, a random binary phase, and a random basis permutation, and show that this ensemble is a $t$-design even for superpolynomially large $t$.
More formally, we prove the following.

\begin{theorem}\label{thm:pfc-ensemble}
Let $t \in \N$. For $d = 2^n$, the $d$-dimensional $PFC$ random unitary ensemble (see \Cref{def:pfc-ensemble}) is a diamond $\eps$-approximate $t$-design for $\eps = O(t/\sqrt{d})$.
\end{theorem}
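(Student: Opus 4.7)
The plan is to follow the outline given in Section 1.4.1. The goal is to bound, for an arbitrary input state $\ket{\psi}_{\reg{A}_1 \cdots \reg{A}_t \reg{B}}$ (with the unitary acting on $\reg{A}_1 \cdots \reg{A}_t$), the trace distance between $\mompfct{\proj{\psi}}$ and $\momhaart{\proj{\psi}}$. Since both twirls act trivially on $\reg{B}$, I would treat $\reg{B}$ as a classical label: it suffices to produce a bound that is uniform in $\ket{\psi}$, so without loss of generality I will sketch the argument assuming $\reg{B}$ is trivial and note that the same computation carries through for arbitrary purifications.

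The first step is to compute the Haar twirl exactly using Schur-Weyl duality (\Cref{lem:schur-weyl}). Expanding $\id = \sum_{\lambda \vdash t} \ind_{P_\lambda}$ and using that the Haar twirl of $U^{\otimes t}$ is a projection onto the commutant of $U^{\otimes t}$, which by Schur-Weyl acts as the identity on each Weyl module $W_\lambda$, I would derive the closed form
\begin{equation*}
\momhaart{\proj{\psi}} = \sum_{\lambda \vdash t} \frac{\ind_{W_\lambda}}{\dim(W_\lambda)} \otimes \Tr_{W_\lambda}\bigl[\ind_{P_\lambda} \proj{\psi} \ind_{P_\lambda}\bigr] ,
\end{equation*}
interpreting each summand inside $P_\lambda = W_\lambda \otimes V_\lambda$. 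The second step is to use the random Clifford $C$ to restrict attention to the distinct subspace. Since $C$ is a $2$-design, a direct calculation of $\E_C \Tr[(\id - \Lambda) C^{\otimes t} \proj{\psi} C^{\otimes t,\dagger}]$ via $\E_C C^{\otimes 2} X C^{\otimes 2,\dagger}$ on the $(i,j)$ collision pairs, followed by a union bound over $\binom{t}{2}$ pairs, gives $\Tr[\Lambda \E_C C^{\otimes t} \proj{\psi} C^{\otimes t,\dagger}] \geq 1 - O(t^2/d)$. By the gentle measurement lemma, replacing the post-Clifford state by its projection onto the distinct subspace introduces an $O(t/\sqrt{d})$ error in trace distance, which accounts for the square-root loss in the final bound.

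The heart of the proof is then the $PF$ twirl on states supported on the distinct subspace. For such a state, I would compute $\E_{P,F}(PF)^{\otimes t} \ket{\bx}\bra{\by} (PF)^{\otimes t,\dagger}$ in the computational basis. The phase average $\E_F (-1)^{f(x_1) + \cdots + f(x_t) - f(y_1) - \cdots - f(y_t)}$ forces the multisets $\{x_i\}$ and $\{y_i\}$ to agree (on distinct tuples this means $\by = \bx_\pi$ for some $\pi \in S_t$), and the permutation average $\E_P$ then turns $\ket{\bx}\bra{\bx_\pi}$ into a uniform sum over distinct tuples, i.e.\ proportional to $\Lambda R_\pi$. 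Substituting the isotypical projector formula from \Cref{lem:P_characters} and collecting terms by Young diagram $\lambda$, this yields
\begin{equation*}
\mompft{\proj{\psi}} = \sum_{\lambda \vdash t} \frac{\Lambda^{(\lambda)}_{W_\lambda}}{\Tr[\Lambda^{(\lambda)}_{W_\lambda}]} \otimes \Tr_{W_\lambda}\bigl[\ind_{P_\lambda} \proj{\psi} \ind_{P_\lambda}\bigr] ,
\end{equation*}
using \Cref{lem:perm_inv_sw} to identify the permutation-invariant $\Lambda$ with its action on each $W_\lambda$.

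The final step is to compare these two expressions. Matching summands $\lambda$-by-$\lambda$, the only difference is $\ind_{W_\lambda}/\dim(W_\lambda)$ versus $\Lambda^{(\lambda)}_{W_\lambda}/\Tr[\Lambda^{(\lambda)}_{W_\lambda}]$, so I need to show $\dim(\Lambda^{(\lambda)}_{W_\lambda})/\dim(W_\lambda) = 1 - O(t^2/d)$. I would get this by computing $\Tr[\Lambda] = d!/(d-t)! = d^t (1 - O(t^2/d))$ using the definition of $\Lambda$, comparing to $\sum_\lambda \dim(V_\lambda) \dim(W_\lambda) = d^t$ from \Cref{lem:dim_weyl_specht}, and arguing that the $O(t^2/d)$ loss is distributed across $\lambda$ so that each ratio is at least $1 - O(t^2/d)$ (a convexity/Markov-style step that is the most delicate accounting in the argument). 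The dimension ratio then translates to a trace-distance bound of $O(t^2/d)$ per block, and summing contributions plus the $O(t/\sqrt{d})$ cost from the distinct-subspace reduction yields the claimed $O(t/\sqrt{d})$ bound. The main obstacle I expect is precisely this last dimension-counting step for each $\lambda$: the gentle measurement / Cauchy-Schwarz loss may in fact be what forces the $\sqrt{d}$ in the denominator rather than $d$, and making the Schur-Weyl bookkeeping clean enough to avoid extra factors of $t!$ or $\dim(V_\lambda)$ is where the care is needed.
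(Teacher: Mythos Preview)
Your overall strategy matches the paper's closely: reduce to the distinct subspace via the Clifford $2$-design property and gentle measurement (this is exactly where the $\sqrt{d}$ enters), compute the Haar and $PF$ twirls block-by-block in the Schur--Weyl decomposition, and compare. The formulas you write down for the two twirls are correct and agree with Lemmas~\ref{lem:haar-twirl} and~\ref{lem:pf-twirl}.

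The genuine gap is your final dimension-counting step. You propose to show $\dim(\Lambda^{(\lambda)}_{W_\lambda})/\dim(W_\lambda) = 1 - O(t^2/d)$ by comparing the aggregate $\Tr[\Lambda] = d^t(1 - O(t^2/d))$ to $\sum_\lambda \dim(V_\lambda)\dim(W_\lambda) = d^t$ and then invoking a ``convexity/Markov-style step'' to conclude the bound holds for each $\lambda$. This cannot work: a Markov-type argument only controls a weighted average of the ratios, with weights proportional to $\dim(V_\lambda)\dim(W_\lambda)$. But the trace distance you need to bound is $\sum_\lambda \bigl(1 - r_\lambda\bigr)\,\Tr[\ind_{P_\lambda}\proj{\phi}]$, and the adversary is free to choose $\ket{\phi}$ supported entirely on a single block $P_\lambda$. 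So you genuinely need $\max_\lambda(1 - r_\lambda) \leq O(t^2/d)$, not just an averaged bound.

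The paper closes this gap by computing $\Tr[\Lambda^{(\lambda)}_{W_\lambda}]$ \emph{exactly}. The key observation (Claim~\ref{clm:distinct-trace}) is that plugging the character formula from \Cref{lem:P_characters} into $\Tr[\Lambda\,\ind_{P_\lambda}]$ and using $\Tr[\Lambda R_\pi] = 0$ for $\pi \neq e$ (distinct tuples have no fixed points under nontrivial permutations) gives $\Tr[\Lambda^{(\lambda)}_{W_\lambda}] = \frac{\dim(V_\lambda)}{t!}\Tr[\Lambda]$. Combined with the exact formula $\dim(W_\lambda) = \frac{\dim(V_\lambda)}{t!}\prod_{(i,j)\in\lambda}(d+j-i)$ from \Cref{lem:dim_weyl_specht}, the $\dim(V_\lambda)/t!$ factors cancel and the ratio becomes $\Tr[\Lambda]\big/\prod_{(i,j)}(d+j-i)$, which is at least $(d-t)^t/(d+t)^t = 1 - O(t^2/d)$ uniformly in $\lambda$. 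You already cite both \Cref{lem:P_characters} and \Cref{lem:dim_weyl_specht}, so you have all the ingredients; you just need to combine them per block rather than in aggregate.
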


\begin{proof} 
Let $P$ denote a uniformly random permutation operator on $n$-qubits as defined in \Cref{def:permutationop}, let $F$ be a uniformly random binary phase operator as in \Cref{def:binaryphaseop} and let $C$ be a uniformly random $n$-qubit Clifford, all sampled independently. 
Let $\reg{A}$ be an $n$-qubit quantum register, $\reg E$ another quantum register isomorphic to $A$, and consider an arbitrary state $\ket{\psi}_{\reg{AE}}$.
By \Cref{def:additive_error_design}, it suffices to argue that the density matrices 
\begin{align*}
\rho \deq \E_{P, F, C} (P F C)^{\ot t}_{\reg A} \proj{\psi}_{\reg{AE}} (P F C)^{\ot t, \dagger}_{\reg A} 
\quad \text{and} \quad \rho^{\rm hr}: = \E_{U \sim \Haar} U^{\ot t}_{\reg A} \proj{\psi}_{\reg{AE}} U^{\ot t, \dagger}_{\reg A}
\end{align*}
are $O(t/\sqrt{d})$-close in trace distance.
It will be convenient to define 
\begin{align*}
\xi_{\reg{AE}} \deq \E_{C} C^{\ot t}_{\reg A} \proj{\psi}_{\reg{AE}} C^{\ot t, \dagger}_{\reg A} \,.
\end{align*}
Recalling the definition of $t$-wise $R$-twirl operator (acting on the register $\reg{A}$, which we shall omit from the notation henceforth), our goal is to bound the following trace distance:
\begin{align}\label{eqn:td}
    \ \|\rho - \rho^{\rm hr}\|_1 = \left\|\mompft{\xi_{\reg{AE}}} - \momhaart{ \xi_{\reg{AE}}}\right\|_1 \,,
\end{align}
where we used the fact that the product unitary $UC$ is also Haar distributed by the invariance of the Haar measure, so $\momhaart{ \xi_{\reg{AE}}} = \momhaart{\proj{\psi}_{\reg{AE}}}$.

\paragraph{Step 1: Reduction to distinct subspace.} To show the above, the calculations are easier if we restrict attention to the subspace of $(\C^d)^{\ot t}$ that consists of distinct strings, i.e.~basis states of the form $\ket{\bx}$ where $\bx = (x_1,\dots,x_t) \in [d]^t$ is a tuple of distinct strings. We show in \Cref{sec:clifford} that applying a $t$-wise Clifford twirl ensures that the input state has a large overlap with this subspace. 

\begin{lemma}[Clifford twirl and distinct subspace] \label{lem:clifford}
Let $\Lambda$ be the projector on the distinct subspace defined in \Cref{def:distinct}. Then, for any state $\rho$ on the register $\reg{A}$, we have
    $$\Tr[\Lambda \E_{C} {{C}^{\ot t}} \rho \, {{C}^{\ot t,\dagger}}] \geq 1 - O(t^2/d)\,.$$
\end{lemma}

Let $\phi_{\reg{AE}}$ be the mixed state obtained by normalizing the (positive semi-definite) matrix $\Lambda_{\reg{A}} \xi_{\reg{AE}} \Lambda_{\reg{A}}$, where $\xi_{\reg{AE}}$ is defined in \Cref{eqn:td}. Then the above together with the Gentle Measurement lemma (see \cite[Lemma 9]{Winter_1999}) implies that $\|\phi - \xi\|_1 \le O(t/\sqrt{d})$. Consider a purification $\ket{\phi}_{\reg{A \tilde E}}$ of the state $\phi_{\reg{AE}}$ which satisfies $\Lambda_{\reg A} \ket{\phi}_{\reg{A \tilde E}} = \ket{\phi}_{\reg{A \tilde E}}$; such a purification exists by construction of $\phi_{\reg{AE}}$. Then, 
\begin{align*}
\norm{\rho - \rho^{\rm hr}}_1 
&\leq \norm{\mompft{\phi_{\reg{A E}}} - \momhaart{\phi_{\reg{A E}}}}_1 + O(t/\sqrt{d})\\
&\leq \norm{\mompft{\proj{\phi}_{\reg{A \tilde E}}} - \momhaart{\proj{\phi}_{\reg{A \tilde E}}}}_1 + O(t/\sqrt{d}) \,, \numberthis \label{eqn:haarvspf}
\end{align*}
where we use that a $t$-wise twirl is a quantum channel and the 1-norm can only decrease under partial trace. 
Thus, we may assume for the rest of the proof that the input is supported over the distinct subspace of register $\reg{A}$.

In order to bound \Cref{eqn:haarvspf}, we find explicit expressions for both states $\mompft{\proj{\phi}_{\reg{A \tilde E}}}$ and $\momhaart{\proj{\phi}_{\reg{A \tilde E}}}$ in the Schur-Weyl basis.

\paragraph{Step 2: Action of Haar twirl.}
Recall that we consider states $\ket{\phi}_{\reg{A \tilde E}}$ where the register $\reg{A}$ is supported over the distinct subspace, i.e.~states satisfying $\Lambda_{\reg A} \ket{\phi}_{\reg{A\tilde E}} = \ket{\phi}_{\reg{A\tilde E}}$.
To apply Schur-Weyl duality, we decompose $\reg{A} \cong (\mathbb{C}^d)^{\ot t} \cong \bigoplus_{\lambda \vdash t} P_\lambda$ (where $P_\lambda = W_{\lambda} \ot V_{\lambda}$) in terms of the Weyl and Specht modules $W_{\lambda}$ and $V_{\lambda}$, respectively.
We first analyse the output of the Haar twirl in the Schur-Weyl basis, proving the following decomposition.

\begin{lemma}[Action of Haar twirl]\label{lem:haar-twirl} Let $\rho_{\lambda} = \dfrac{\ind_{W_\lambda}}{\Tr[\ind_{W_\lambda}]} $ be the maximally mixed state on $W_{\lambda}$.
Then
\begin{align}\label{eqn:haar-twirl}
    \     \momhaart{\proj{\phi}_{\reg{A \tilde E}}} = \sum_{\lambda \vdash t} \rho_{\lambda} \otimes \Tr_{W_\lambda}[\ind_{P_\lambda} \proj{\phi}_{\reg{A\tilde E}}\ind_{P_\lambda}]\,. 
\end{align}
\end{lemma}

We prove the above lemma in \Cref{sec:haar-twirl}. In fact, the same proof also shows that \Cref{lem:haar-twirl} holds for arbitrary input states, not just input states supported only on the distinct string subspace. However, we will not need this since we are only interested in bounding the trace distance in \Cref{eqn:haarvspf}, which considers states on the distinct string subspace.

\paragraph{Step 3: Action of permutation-phase twirl on distinct subspace.}
Next, recalling the decomposition of the distinct subspace projector given by \Cref{lem:perm_inv_sw}, we show that the action of the permutation-phase twirl results in the following; we defer the proof to \Cref{sec:pf-twirl}.

\begin{lemma}[Action of permutation-phase twirl]\label{lem:pf-twirl} 
Denoting by $\sigma_{\lambda} = \dfrac{\Lambda^{(\lambda)}_{W_\lambda}}{\Tr[\Lambda^{(\lambda)}_{W_\lambda}]} $ the maximally mixed state on the subspace $\supp(\Lambda^{(\lambda)}_{W_\lambda}) \cap W_{\lambda}$, we have 
    \[ \mompft{\proj{\phi}_{\reg{A \tilde E}}} = \sum_{\lambda \vdash t} \sigma_{\lambda} \otimes \Tr_{W_\lambda}[\ind_{P_\lambda} \proj{\phi}_{\reg{A \tilde E}}\ind_{P_\lambda}]\,.\]
\end{lemma} 

\paragraph{Step 4: Haar twirl vs permutation-phase twirl.} 
We now have explicit expressions for both of the twirling operations in \Cref{eqn:haarvspf}.
We will use these expressions to show that 
\begin{align}
\norm{\momhaart{\proj{\phi}_{\reg{A\tilde E}}} - \mompft{\proj{\phi}_{\reg{A \tilde E}}}}_1 \leq O(t^2/d) \,. \label{eqn:final_comparison}
\end{align}
Plugging this into \Cref{eqn:haarvspf} completes the proof of the theorem.

To prove \Cref{eqn:final_comparison}, first note that all the subspaces $P_{\lambda}$ are orthogonal.
Therefore, inserting the expressions from \Cref{lem:haar-twirl} and \Cref{lem:pf-twirl} we have 
\begin{align}\label{eqn:main1}
\norm{\momhaart{\proj{\phi}_{\reg{A \tilde E}}} - \mompft{\proj{\phi}_{\reg{A \tilde E}}}}_1 & = \sum_{\lambda} \norm{\left( \rho_{\lambda} - \sigma_{\lambda} \right) \ot \Tr_{W_\lambda}[\ind_{P_\lambda} \proj{\phi}_{\reg{A\tilde E}}\ind_{P_\lambda}]}_1 \notag \\
&= \sum_{\lambda} \norm{ \rho_{\lambda} - \sigma_{\lambda}}_1 \cdot \norm{\Tr_{W_\lambda}[\ind_{P_\lambda} \proj{\phi}_{\reg{A\tilde E}}\ind_{P_\lambda}]}_1 \notag \\
\ & \le \max_{\lambda} \norm{ \rho_{\lambda} - \sigma_{\lambda}}_1,
\end{align}
where the second equality used that the 1-norm (trace norm) is multiplicative under tensor products, while the inequality follows from the fact that $\sum_{\lambda} \|\Tr_{W_\lambda}[\ind_{P_\lambda} \proj{\phi}_{\reg{A\tilde E}}\ind_{P_\lambda}]\|=1$, which can be seen by taking the trace on both sides of \Cref{eqn:haar-twirl}.

Note that in general, given a vector space $A = B \oplus B^\bot$, we have that 
\[ \norm{ \frac{\id_{A}}{\dim(A)} - \frac{\id_{B}}{\dim(B)}}_1 = \norm{\frac{\id_{B}}{\dim(A)} - \frac{\id_{B}}{\dim(B)}}_1 + \norm{\frac{\id_{B^\bot}}{\dim(A)}}_1 = 2-2\frac{\dim(B)}{\dim(A)}\,.\]
Since $\Lambda^{(\lambda)}_{W_\lambda}$ is a projector on a subspace of $W_{\lambda}$, we obtain the following for any $\lambda \vdash t$:
\begin{align}\label{eqn:main2}
    \norm{ \rho_{\lambda} - \sigma_{\lambda}}_1 &= \norm{ \frac{\id_{W_{\lambda}}}{\Tr[\id_{W_{\lambda}}]} - \frac{\Lambda^{(\lambda)}_{W_\lambda}}{\Tr[\Lambda^{(\lambda)}_{W_\lambda}]}}_1 = 2 - 2 \frac{\Tr[\Lambda^{(\lambda)}_{W_\lambda}]}{\Tr[ \id_{W_{\lambda}}]}\,.
\end{align}

We first compute the trace in the numerator of the right hand side.

\begin{claim}\label{clm:distinct-trace}
${\Tr[\Lambda^{(\lambda)}_{W_\lambda}]} = \dfrac{\dim(V_{\lambda})}{t!} \cdot{\Tr[\Lambda]}$.
\end{claim}
\begin{proof}
   \Cref{lem:perm_inv_sw} implies that $\Lambda = \sum_{\lambda \vdash t} \Lambda^{(\lambda)}_{W_{\lambda}} \otimes \id_{V_{\lambda}}$. Since $P_{\lambda} = W_{\lambda} \ot V_{\lambda}$, it follows that
    \begin{align}\label{eqn:dis-trace}
   \Tr[\Lambda^{(\lambda)}_{W_\lambda}] = \frac{1}{\dim V_\lambda} \Tr[\Lambda \id_{P_\lambda}]\,.
    \end{align}
Plugging in the expression for the projector $\id_{P_{\lambda}}$ from \Cref{lem:P_characters}, we get that
    \begin{align*}
    \Tr[\Lambda \id_{P_\lambda}] = \frac{\dim(V_\lambda)}{t!} \sum_\pi \chi_\lambda(\pi^{-1}) \Tr[\Lambda R_\pi] = \frac{\dim(V_\lambda)^2}{t!} \Tr[\Lambda] \,, \label{eqn:lambda_P_ip}
    \end{align*}
    where we used the fact that $\Tr(R_\pi \Lambda) = 0$ unless $\pi = e$, and that $\chi_\lambda(e) = \Tr[\id_{V_{\lambda}}] = \dim(V_\lambda)$. 
    This yields the desired result after insertion into \Cref{eqn:dis-trace}.
\end{proof}

Given the above, we can now compute the quantity in \Cref{eqn:main2} by using the dimension bounds for Weyl and Specht modules.

\begin{claim}\label{cm:ratio} $1- \dfrac{\Tr[\Lambda^{(\lambda)}_{W_\lambda}]}{\Tr[ \id_{W_{\lambda}}]} \le O(t^2/d).$
\end{claim}
\begin{proof}
Using \Cref{clm:distinct-trace} and \Cref{lem:dim_weyl_specht}, we have  
\[ 1- \dfrac{\Tr[\Lambda^{(\lambda)}_{W_\lambda}]}{\Tr[ \id_{W_{\lambda}}]} = 1- \frac{\dim(V_{\lambda})\Tr[\Lambda]}{t! \dim(W_{\lambda})} = 1- \frac{\Tr[\Lambda]}{\Pi_{(i,j) \in \lambda} (d + j - i)}\,.\]

Note that $\Tr[\Lambda] = \frac{d!}{(d-t)!} \ge (d-t)^t$ and $\Pi_{(i,j) \in \lambda} (d + j - i) \le (d+t)^t$, since there are at most $t$ boxes in the Young diagram corresponding to $\lambda$, and the coordinates $i,j$ range from $1$ to $t$. Thus (assuming $t^2 < d$, as otherwise the claim becomes trivial),
\[ 1- \dfrac{\Tr[\Lambda^{(\lambda)}_{W_\lambda}]}{\Tr[ \id_{W_{\lambda}}]} \le 1-  \left(\frac{d-t}{d+t}\right)^t \le 1 - \Big(\frac{1 - t/d}{1+t/d} \Big)^t \leq O(t^2/d)\,.\]
\end{proof}

Plugging the bound from \Cref{cm:ratio} into \Cref{eqn:main2} shows \Cref{eqn:final_comparison}.
This completes the proof of \Cref{thm:pfc-ensemble}.
\end{proof}

\subsection{Clifford twirl and distinct subspace (proof of \Cref{lem:clifford})}
\label{sec:clifford}

We will omit the registers $\reg{A} = \reg{A_1 \dots A_t}$ from the notation unless needed. Our goal is to show that applying a $t$-wise Clifford twirl on any input state $\rho$ produces a state that has a large overlap with the distinct subspace. The proof only relies on the standard fact that the Clifford group forms a $2$-design. We consider the projector on the orthogonal complement of the distinct subspace
    \begin{align*}
    \bar \Lambda = \id - \Lambda = \sum_{\bx \in [d]^t \setminus \distinct(d,t)} \proj{\bx} \,,
    \end{align*}
and decompose the projector into $O(t^2)$ sub-projectors according to which elements collide: since any tuple of non-distinct strings must have at least two equal entries, we have
\begin{align*}
    \bar \Lambda \leq \sum_{1\le i < j \le t} \Pi_{ij} \otimes \id_{[n]\setminus \{ij\}}\,, \quad \text{ where } \Pi_{ij} = \sum_{x \in [d]} \proj{x}_i \ot \proj{x}_j, 
    \end{align*}
where the subscript $i$ denotes the register $\reg{A}_i$. We shall omit the identity from the notation below.
    
Using cyclicity of trace, we have
    \begin{align*}
    \Tr[\bar \Lambda \E_{C} C^{\ot t} \rho C^{\ot t}] 
    &\leq \sum_{i < j} \Tr[\E_{C} C^{\ot t, \dagger} \Pi_{ij}  C^{\ot t} \rho] 
    = \sum_{i < j} \Tr[\E_{C} (C_i^\dagger \ot C_j^{\dagger})  \Pi_{ij} (C_i \ot C_j) \rho],  
\end{align*}
where for the second equality, we cancelled the $C$-unitaries on all systems except $i$ and $j$, with $C_i$ denoting application of $C$ on system $i$.

Using the standard fact that the Clifford group forms a 2-design, we can replace the average over Clifford unitaries with an average over the Haar measure in the above expression since we only use the second moment. Thus,
\begin{align*}
    &   \Tr[\bar \Lambda \E_{C} C^{\ot t} \rho C^{\ot t}]  \le \sum_{i < j} \Tr[\E_{U \sim \HaarMeasure} (U_i^\dagger \ot U_j^{\dagger}) \Pi_{ij}  (U_i \ot U_j) \rho] = \sum_{i < j} \Tr[\E_{U \sim \HaarMeasure} (U^\dagger \ot U^{\dagger}) \Pi_{ij} (U \ot U) \rho_{ij}],
\end{align*}
where for the last equality, we performed the partial trace over all systems except $i$ and $j$, with $\rho_{ij}$ denoting the reduced state on these systems. Since $\rho_{ij}$ is a quantum state, we can bound each of the $O(t^2)$ trace terms by the corresponding operator norms to obtain
    \begin{align}\label{eqn:2}
    \tr[\bar \Lambda \E_{C} C^{\ot t} \rho C^{\ot t}] 
    &\leq O(t^2d) \norm{\E_{U \sim \HaarMeasure} (U \ot U)^{\dagger} \left(\frac{\Pi}{\Tr[\Pi]}\right) (U \ot U)}_\infty \,,
    \end{align}
where $\Pi = \sum_{x\in [d]} \proj{x} \ot \proj{x}$ with $\Tr[\Pi]=d$. 

Since applying a Haar random unitary on any state gives a Haar random state $\ket{\psi} \in \C^d$, by linearity of expectation the expression inside the operator norm is 
\[ \E_{\ket{\psi} \sim \HaarMeasure}[\proj{\psi} \ot \proj{\psi}]. \]
It is well known that this is the maximally mixed state on the symmetric subspace $\text{Sym}^2(d)$ of $\C^d \ot \C^d$, which has dimension  $\frac{d(d+1)}{2}$ (see \cite{harrow2013church}, Proposition 6). Thus, the operator norm is $\frac{2}{d(d+1)}$, and inserting this into \Cref{eqn:2} yields the claimed result.

\subsection{Action of the $t$-wise Haar twirl (proof of \Cref{lem:haar-twirl})}
\label{sec:haar-twirl}

In order to derive the expression given by \Cref{lem:haar-twirl}, we first compute the result of applying the $t$-wise Haar twirl on Schur-Weyl basis states.

\begin{lemma} \label{lem:haar_moment_sw}
Let $\reg A \cong (\C^d)^{\ot t}$ and let $\ket{\alpha} = \ket{w_{\lambda, i}} \ot \ket{v_{\lambda, j}}$ and $\ket{\beta} = \ket{w_{\lambda', i'}} \ot \ket{v_{\lambda', j'}}$ be Schur-Weyl basis states on $\reg A$. 
Then 
\begin{align*}
\momhaart{\ketbra{\alpha}{\beta}} = \begin{cases} \frac{\id_{W_{\lambda}}}{\dim W_{\lambda}} \ot \ketbra{v_{\lambda, j}}{v_{\lambda, j'}}\, , & \text{ if $\lambda = \lambda'$ and $i = i'$} \\ 0 \, , & \text{otherwise} \end{cases}\,.
\end{align*}
\end{lemma}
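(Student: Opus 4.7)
The plan is to reduce this to a direct application of Schur's lemma for the unitary irreducible representations $(W_\mu, U^{(\mu)})$ given by Schur-Weyl duality. First, I would expand $U^{\ot t} = \sum_{\mu \vdash t} U^{(\mu)} \ot \id_{V_\mu}$ using \Cref{lem:schur-weyl} and apply this to $\ket{\alpha} = \ket{w_{\lambda, i}} \ot \ket{v_{\lambda, j}}$. Since $\ket{w_{\lambda, i}} \in W_\lambda$ lies entirely in the $P_\lambda$-block, only the $\mu = \lambda$ term survives, giving $U^{\ot t} \ket{\alpha} = (U^{(\lambda)} \ket{w_{\lambda, i}}) \ot \ket{v_{\lambda, j}}$, and analogously for $\bra{\beta}$. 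Hence
\[
U^{\ot t} \ketbra{\alpha}{\beta} U^{\ot t, \dagger} = \bigl(U^{(\lambda)} \ketbra{w_{\lambda, i}}{w_{\lambda', i'}} U^{(\lambda'), \dagger}\bigr) \ot \ketbra{v_{\lambda, j}}{v_{\lambda', j'}}\,,
\]
so the Haar twirl only acts non-trivially on the Weyl-module factor.

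Next I would take the expectation over $U \sim \Haar$. The Specht factor is independent of $U$ and factors out of the average. For the Weyl factor, I would view the map $M \mapsto \E_{U \sim \Haar} U^{(\lambda)} M U^{(\lambda'), \dagger}$ as an intertwiner between the irreducible representations $(W_{\lambda'}, U^{(\lambda')})$ and $(W_\lambda, U^{(\lambda)})$ of $\mathrm{U}(d)$, and apply Schur's lemma. For $\lambda \neq \lambda'$ these are inequivalent irreducibles, so every intertwiner vanishes and the average is zero. For $\lambda = \lambda'$, Schur's lemma forces the average to be a scalar multiple of $\id_{W_\lambda}$, and taking traces pins this scalar down: $\E_{U \sim \Haar} U^{(\lambda)} M U^{(\lambda), \dagger} = \tfrac{\Tr M}{\dim W_\lambda} \id_{W_\lambda}$.

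Plugging in $M = \ketbra{w_{\lambda, i}}{w_{\lambda, i'}}$, whose trace is $\delta_{i, i'}$, yields the claimed case distinction. There is no real obstacle here beyond carefully tracking which block of the Schur-Weyl decomposition each term lives in; the entire lemma is a direct consequence of Schur-Weyl duality combined with the standard twirling identity for irreducible unitary representations.
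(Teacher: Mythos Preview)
Your proposal is correct and follows essentially the same route as the paper: decompose $U^{\ot t}$ via Schur--Weyl duality, reduce to the Weyl-module factor, and apply Schur's lemma together with Haar invariance to identify the twirl as a scalar multiple of $\id_{W_\lambda}$, fixed by a trace computation. The only minor difference is that you treat the $\lambda \neq \lambda'$ case explicitly via Schur's lemma for maps between inequivalent irreducibles, whereas the paper dismisses this case in one sentence; your formulation is slightly more complete in this respect but the arguments are the same in substance.
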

\begin{proof}
By Schur-Weyl duality (\Cref{lem:schur-weyl}), we have that $U^{\ot t} = \sum_{\lambda_1} U^{(\lambda_1)} \otimes \id_{V_{\lambda_1}}$ where $U^{(\lambda_1)}$ only acts on $W_{\lambda_1}$. Thus, 
\begin{align*}
\momhaart{\ketbra{\alpha}{\beta}}  = \sum_{\lambda_1, \lambda_2} \left( \E_{U \sim \HaarMeasure} U^{(\lambda_1)} \ketbra{w_{\lambda, i}}{w_{\lambda', i'}} U^{(\lambda_2), \dagger} \right) \ot  \id_{V_{\lambda_1}}\ketbra{v_{\lambda, j}}{v_{\lambda', j'}}  \id_{V_{\lambda_2}} \,.
\end{align*}
We may assume that $\lambda = \lambda'$, since the above is zero otherwise. Then, we have that 
\begin{align*}
\momhaart{\ketbra{\alpha}{\beta}}  = \left( \E_{U \sim \HaarMeasure} U^{(\lambda)} \ketbra{w_{\lambda, i}}{w_{\lambda, i'}} U^{(\lambda), \dagger} \right) \ot \ketbra{v_{\lambda, j}}{v_{\lambda, j'}} \,.
\end{align*}
We claim that the expression inside the paranthesis above is zero unless $i=i'$, in which case it is the maximally mixed state on the subspace $W_{\lambda}$. This follows from a standard fact in representation theory called Schur's Lemma (see~\cite{fulton2013representation}, Lemma 1.7), which says that if $(\mu, H)$ is an irreducible representation of a group $G$ and $T: H \to H$ is a linear map such that $T \circ \mu = \mu \circ T$ (such a map is called an \emph{intertwiner}), then $T = \gamma \cdot \id_{H}$ for some scalar $\gamma \in \C$. Applying it to our setting, we see that the operator
$$
T = \E_{ U \sim \HaarMeasure} {U}^{(\lambda)} \ketbra{w_{\lambda, i}}{w_{\lambda, i'}} {U}^{(\lambda),\dag} $$ is an intertwiner for the irrep $(U^{(\lambda)}, W_{\lambda})$. This fact is a simple consequence of Haar invariance, since
\begin{align*}
T \tilde{U}^{(\lambda)} &= \left(\E_{U \sim \HaarMeasure} {U}^{(\lambda)} \ketbra{w_{\lambda, i}}{w_{\lambda, i'}} {U}^{(\lambda),\dag} \right) \tilde{U}^{(\lambda)} \\
&=\E_{ U \sim \HaarMeasure} {U}^{(\lambda)} \ketbra{w_{\lambda, i}}{w_{\lambda, i'}} \left(\tilde{U}^{(\lambda),\dag} {U}^{(\lambda)}\right)^\dag\\
&=\E_{U \sim \HaarMeasure} \left(\tilde{U}^{(\lambda)}{U}^{(\lambda)}\right) \ketbra{w_{\lambda, i}}{w_{\lambda, i'}} {U}^{(\lambda),\dag} = \tilde{U}^{(\lambda)} T.
\end{align*}

Therefore, by Schur's Lemma, 
\begin{align*}
\E_{U \sim \HaarMeasure} U^{(\lambda)} \ketbra{w_{\lambda, i}}{w_{\lambda, i'}} U^{(\lambda),\dag} = \gamma_{\lambda, i, i'} \id_{W_{\lambda}}
\end{align*}
for some scalars $\gamma_{\lambda, i, i'}$. Since the operator on the left is traceless if $i \neq i'$, we have that $\gamma_{\lambda, i, i'} = 0$ unless $i = i'$. For $i = i'$, the normalisation follows because the operator on the left is a mixed state, i.e.~it has unit trace. This completes the proof.
\end{proof}

We can now compute the result of applying a $t$-wise Haar twirl to a general state.
\begin{proof}[Proof of \Cref{lem:haar-twirl}]
Expanding in the Schur-Weyl basis, 
\begin{align*}
\ket{\phi}_{\reg{A \tilde E}} = \sum_{\lambda, i, j} \left( \ket{w_{\lambda, i}} \ket{v_{\lambda, j}} \right)_{\reg A} \ot \ket{e_{\lambda, i, j}}_{\reg {\tilde E}}
\end{align*}
for not necessarily normalised vectors $\ket{e_{\lambda, i, j}}_{\reg {\tilde E}}$. It follows from \Cref{lem:haar_moment_sw} and linearity that 
\begin{align*}
\momhaart{\proj{\phi}_{\reg{A \tilde E}}} &= \sum_{\lambda, i, j, j'} \left( 
\frac{\id_{W_{\lambda}}}{\dim W_{\lambda}} \ot \ketbra{v_{\lambda, j}}{v_{\lambda, j'}} \right)_{\reg A} \ot \ketbra{e_{\lambda, i, j}}{e_{\lambda, i, j'}}_{\reg {\tilde E}} \,
\\ & = \sum_{\lambda \vdash t} \rho_{\lambda} \otimes \Tr_{W_\lambda}[\ind_{P_\lambda} \proj{\phi}_{\reg{A\tilde E}}\ind_{P_\lambda}],
\end{align*}
where $\rho_\lambda$ is the maximally mixed state on the subspace $W_{\lambda}$. 
\end{proof}

\subsection{Permutation-phase twirl on distinct subspace (proof of \Cref{lem:pf-twirl})}
\label{sec:pf-twirl}

In order to derive an exact expression for the permutation-phase twirl on the distinct subspace with projector $\Lambda$, we start with the following lemma.

\begin{lemma} \label{lem:dis_string_to_perm}
Let $\bx, \by \in \distinct(d,t)$. 
Then
\begin{align*}
\mompft{ \ketbra{\bx}{\by}} = \begin{cases} \dfrac{\Lambda R_{\sigma}}{\Tr[\Lambda]}  & \text{ if } \by = \bx_{\sigma} \text{ for } \sigma \in S_t,\\ 0 & \text{ otherwise.} \end{cases} 
\end{align*}
\end{lemma}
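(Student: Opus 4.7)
The plan is to factor the twirl by exploiting the independence of $P$ and $F$: first compute $\E_F F^{\otimes t} \ketbra{\bx}{\by} F^{\otimes t,\dagger}$, then apply the permutation twirl $\E_P P^{\otimes t}(\cdot) P^{\otimes t,\dagger}$ to the result. The $F$-step is a pure scalar computation because $F^{\otimes t}$ acts diagonally on $\ket{\bx}$ and $\ket{\by}$, so the twirl simply multiplies $\ketbra{\bx}{\by}$ by the scalar $\E_f (-1)^{\sum_i (f(x_i)+f(y_i))}$. For a uniformly random Boolean function $f:[d]\to\bits$, this expectation equals $1$ if every element of $[d]$ appears an even total number of times in the combined multiset $(x_1,\dots,x_t,y_1,\dots,y_t)$ and $0$ otherwise. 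Since $\bx$ and $\by$ are both distinct $t$-tuples, the only way to satisfy the even-multiplicity condition is that $\{x_1,\dots,x_t\}=\{y_1,\dots,y_t\}$ as sets, i.e.\ $\by = \bx_\sigma$ for some $\sigma \in S_t$. This immediately yields the second case of the lemma.

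For the first case ($\by = \bx_\sigma$), the $F$-twirl leaves $\ketbra{\bx}{\bx_\sigma}$ unchanged, and I am left to compute $\E_P P^{\otimes t} \ketbra{\bx}{\bx_\sigma} P^{\otimes t,\dagger}$. Using the defining property $P_\pi^{\otimes t}\ket{\bz} = \ket{\pi(\bz)}$ (componentwise image) together with the elementary identity $\pi(\bz_\sigma) = \pi(\bz)_\sigma$, the twirl becomes $\E_\pi \ketbra{\pi(\bx)}{\pi(\bx)_\sigma}$. A uniformly random $\pi \in S_d$ maps the fixed distinct tuple $\bx$ to a uniformly distributed element of $\distinct(d,t)$ (each distinct tuple has exactly $(d-t)!$ preimages), so this expectation equals $\frac{1}{|\distinct(d,t)|} \sum_{\bz \in \distinct(d,t)} \ketbra{\bz}{\bz_\sigma}$.

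Finally, I would convert the sum over distinct $\bz$ into $\Lambda R_\sigma$ using the convention $R_\sigma \ket{\ba} = \ket{\ba_{\sigma^{-1}}}$: this gives $R_\sigma^\dagger \ket{\bz} = \ket{\bz_\sigma}$, so $\ketbra{\bz}{\bz_\sigma} = \proj{\bz} R_\sigma$, and summing over $\bz \in \distinct(d,t)$ produces $\Lambda R_\sigma$. Since $|\distinct(d,t)| = \Tr[\Lambda]$, the claimed formula follows. There is no substantial obstacle here; the only real pitfalls are the Rademacher-style cancellation argument in the phase step (which crucially uses distinctness of $\bx$ and $\by$ separately) and the index bookkeeping between $R_\sigma$ and the $\sigma$-action on tuples, both of which are routine once the conventions are set up carefully.
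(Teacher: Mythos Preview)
Your proposal is correct and follows essentially the same approach as the paper: first the $F$-twirl reduces to a scalar expectation that vanishes unless $\by = \bx_\sigma$, then the $P$-twirl maps $\ketbra{\bx}{\bx_\sigma}$ to $\Lambda R_\sigma/\Tr[\Lambda]$. The only cosmetic difference is that the paper factors $\ketbra{\bx}{\bx_\sigma} = \proj{\bx} R_\sigma$ \emph{before} the permutation twirl and then invokes $[P^{\otimes t}, R_\sigma]=0$, whereas you compute $\E_\pi \ketbra{\pi(\bx)}{\pi(\bx)_\sigma}$ directly and factor out $R_\sigma$ at the end; these are the same calculation in a different order.
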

Note that since $\bx, \by \in \distinct(d,t)$, there exists at most one permutation for which $\by = \bx_{\sigma}$. 
\begin{proof}
Applying the $t$-wise $F$-twirl first, we get that
\begin{align*}
\E_{F} F^{\ot t} \ketbra{\bx}{\by} F^{\ot t, \dagger}
&= \left( \E_f (-1)^{\sum_i f(x_i) + f(y_i)}\right) \ketbra{\bx}{\by} \,.
\end{align*}
Here, the expectation $\E_f$ is over a uniformly random function $f: [d] \to \bit$.
Because $\bx$ and $\by$ are both tuples of distinct strings, it is easy to see that
\begin{align*}
 \E_f \left[(-1)^{\sum_i f(x_i) + f(y_i)} \right] = 
 \begin{cases}
1, & \text{ if } \by = \bx_{\sigma}, \text{ for some } \sigma \in S_t, \text{ and }\\
0, & \text{ otherwise}.
 \end{cases}
\end{align*}
Next applying the $t$-wise $P$-twirl, 
\begin{align*}
\E_{P} P^{\ot t}  \left(\E_{F} F^{\ot t} \ketbra{\bx}{\bx} F^{\ot t, \dagger}\right) P^{\ot t, \dagger} 
&= \E_{P} P^{\ot t} \ketbra{\bx}{\bx}R_{\sigma} P^{\ot t, \dagger} = \E_{P} P^{\ot t}\proj{\bx} P^{\ot t, \dagger} R_\sigma \,,
\end{align*}
where we used that  $P^{\ot t}$ commutes with $R_\sigma$.
To conclude, we note that for any tuple of distinct strings $\bx$, 
\begin{equation*}
\ \E_{P} P^{\ot t} \proj{\bx} P^{\ot t, \dagger} = \frac{\Lambda}{\Tr[\Lambda]}. \qedhere
\end{equation*}
\end{proof}

We will also need the following technical result, which follows from the Schur orthogonality relations (\Cref{lem:orthogonality}).

\begin{lemma} \label{lem:K_simplified}
Let $\ket{\alpha} = \ket{w_{\lambda, i}} \ot \ket{v_{\lambda, j}}$ and $\ket{\beta} = \ket{w_{\lambda', i'}} \ot \ket{v_{\lambda', j'}}$ be Schur-Weyl basis states.
Then
\begin{align*}
\sum_{\sigma \in S_t} \bra{\beta} R^\dagger_{\sigma} \ket{\alpha} R_\sigma = \delta_{\lambda,\lambda'} \delta_{i,i'} \cdot \frac{t!}{\dim V_{\lambda}} \cdot (\id_{W_\lambda} \ot \ketbra{v_{\lambda,j}}{v_{\lambda, j'}})\,.
\end{align*}
\end{lemma}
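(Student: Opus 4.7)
The plan is a direct bookkeeping calculation combining Schur--Weyl duality (\Cref{lem:schur-weyl}) with the Schur orthogonality relations (\Cref{lem:orthogonality}). I do not expect any genuine obstacle, only careful index tracking.

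First, I would use Schur--Weyl to decompose $R_\sigma = \sum_{\mu \vdash t} \id_{W_\mu} \ot R^{(\mu)}_\sigma$, so that $R_\sigma$ (and $R_\sigma^\dagger$) act as the identity on every Weyl module and only permute vectors inside the Specht modules. Applying this to $R_\sigma^\dagger \ket{\alpha}$ yields $\ket{w_{\lambda, i}} \ot R^{(\lambda), \dagger}_\sigma \ket{v_{\lambda, j}}$, and then pairing with $\bra{\beta}$ gives
\begin{align*}
\bra{\beta} R_\sigma^\dagger \ket{\alpha} \;=\; \delta_{\lambda, \lambda'}\, \delta_{i, i'} \; \bra{v_{\lambda, j'}} R^{(\lambda), \dagger}_\sigma \ket{v_{\lambda, j}}\,,
\end{align*}
because $\ket{w_{\lambda, i}}$ and $\ket{w_{\lambda', i'}}$ are orthonormal across different Weyl modules and distinct basis labels. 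From here on I may assume $\lambda = \lambda'$ and $i = i'$; otherwise the whole expression vanishes.

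Next, I would substitute the Schur--Weyl expansion of $R_\sigma$ a second time on the right of the sum and expand the Specht-module factor in the basis $\{\ket{v_{\mu,k}}\}$ to write
\begin{align*}
\sum_{\sigma \in S_t} \overline{\bra{v_{\lambda, j}} R^{(\lambda)}_\sigma \ket{v_{\lambda, j'}}}\, R_\sigma \;=\; \sum_{\mu \vdash t} \id_{W_\mu} \ot \sum_{k, \ell} \left( \sum_{\sigma \in S_t} \overline{\bra{v_{\lambda, j}} R^{(\lambda)}_\sigma \ket{v_{\lambda, j'}}} \bra{v_{\mu, k}} R^{(\mu)}_\sigma \ket{v_{\mu, \ell}} \right) \ketbra{v_{\mu, k}}{v_{\mu, \ell}} \,.
\end{align*}
The inner sum over $\sigma$ is exactly a Schur orthogonality inner product: after turning $\sum_\sigma$ into $t! \cdot \E_\sigma$ and applying \Cref{lem:orthogonality}, it collapses to $\frac{t!}{\dim V_\lambda}\, \delta_{\mu, \lambda}\, \delta_{k, j}\, \delta_{\ell, j'}$.

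Substituting this back leaves only the $\mu = \lambda$ term with the single matrix unit $\ketbra{v_{\lambda, j}}{v_{\lambda, j'}}$, giving
\begin{align*}
\sum_{\sigma \in S_t} \bra{\beta} R^\dagger_{\sigma} \ket{\alpha}\, R_\sigma \;=\; \delta_{\lambda,\lambda'}\, \delta_{i,i'} \cdot \frac{t!}{\dim V_\lambda} \cdot \bigl(\id_{W_\lambda} \ot \ketbra{v_{\lambda, j}}{v_{\lambda, j'}}\bigr)\,,
\end{align*}
as claimed. The only point that requires caution is the complex conjugation in \Cref{lem:orthogonality}: one must match $\bra{v_{\lambda, j'}} R^{(\lambda), \dagger}_\sigma \ket{v_{\lambda, j}} = \overline{\bra{v_{\lambda, j}} R^{(\lambda)}_\sigma \ket{v_{\lambda, j'}}}$ so that the conjugated factor in the sum corresponds to the conjugated factor in the orthogonality relation.
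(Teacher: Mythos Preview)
Your proposal is correct and follows essentially the same route as the paper: compute $\bra{\beta}R_\sigma^\dagger\ket{\alpha}$ via the Schur--Weyl block decomposition of $R_\sigma$, then expand the remaining $R_\sigma$ the same way and collapse the sum over $\sigma$ using the Schur orthogonality relations. The only cosmetic difference is that you expand $R^{(\mu)}_\sigma$ explicitly in matrix units $\ketbra{v_{\mu,k}}{v_{\mu,\ell}}$ before applying orthogonality, whereas the paper leaves it as an operator and reads off the result directly.
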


\begin{proof}
We compute the matrix elements of $R^\dagger_\sigma$ in the Schur-Weyl basis. Recall that \Cref{lem:schur-weyl} implies that $R_{\sigma} = \sum_\lambda \id_{W_\lambda} \ot R^{(\lambda)}_\sigma$, where $R^{(\lambda)}_\sigma$ is the irreducible sub-representation of $R_\sigma$ on the Specht module $V_\lambda$. Thus, 
\begin{align}\label{eqn:perm-decomp}
(\bra{w_{\lambda', i'}}\bra{v_{\lambda', j'}}) R^\dagger_{\sigma} (\ket{w_{\lambda, i}} \ket{v_{\lambda, j}})
&= (\bra{w_{\lambda', i'}} \bra{v_{\lambda', j'}}) \left( \sum_\lambda \id_{W_\lambda} \ot R^{(\lambda), \dagger}_\sigma \right) (\ket{w_{\lambda, i}} \ket{v_{\lambda, j}}) \notag \\
&= \delta_{\lambda,\lambda'} \delta_{i,i'}  \cdot \bra{v_{\lambda, j'}} R^{(\lambda), \dagger}_\sigma \ket{v_{\lambda, j}} \notag\\
\ & =  \delta_{\lambda,\lambda'} \delta_{i,i'} \cdot \overline{\bra{v_{\lambda, j}} R^{(\lambda)}_\sigma \ket{v_{\lambda, j'}}} \,.
\end{align}
Therefore, for the rest of the proof, we consider $\ket{\alpha}$ and $\ket{\beta}$ with $\lambda = \lambda'$ and $i = i'$.

Again using the decomposition of $R_{\sigma}$ in terms of its irreducible sub-representations together with \Cref{eqn:perm-decomp}, we can write 
\begin{align*}
    \sum_{\sigma \in S_t} (\bra{w_{\lambda, i}}\bra{v_{\lambda, j'}}) R^\dagger_{\sigma} (\ket{w_{\lambda, i}} \ket{v_{\lambda, j}}) R_{\sigma} = \sum_{\lambda_1 \vdash t}  \id_{W_{\lambda_1}} \ot \left(\sum_{\sigma \in S_t} \overline{\bra{v_{\lambda, j}} R^{(\lambda)}_\sigma \ket{v_{\lambda, j'}}} R^{(\lambda_1)}_\sigma \right)
\end{align*}

Schur's orthogonality relations (\Cref{lem:orthogonality}) now imply that the operator in the paranthesis is zero unless $\lambda_1=\lambda$, in which case it equals 
\[ \sum_{\sigma \in S_t} \overline{\bra{v_{\lambda, j}} R^{(\lambda)}_\sigma \ket{v_{\lambda, j'}}} R^{(\lambda)}_\sigma  = \frac{t!}{\dim(V_{\lambda})} \ketbra{v_{\lambda,j}}{v_{\lambda,j'}}.\]
Plugging this in gives the desired result.
\end{proof}

We are now in a position to prove \Cref{lem:pf-twirl} which expresses the result of applying the permutation-phase twirl to any state that is only supported on distinct strings on $\reg A$ in terms of the Schur-Weyl subspaces. 

\begin{proof}[Proof of \Cref{lem:pf-twirl}] We first expand $\ket{\phi}$ in the standard basis on $\reg{A}$:
\begin{align}
\ket{\phi}_{\reg{A \tilde E}} = \sum_{\substack{\bx  \in \distinct(d,t)}} \ket{\bx}_{\reg A} \ket{\tilde e_{\bx}}_{\reg{ \tilde E}} \,, \label{eqn:std_basis_expansion}
\end{align}
where $\ket{\tilde e_{\bx}}_{\reg E}$'s are unnormalized and not necessarily orthogonal vectors and we used that $\ket{\phi}_{\reg{A \tilde E}}$ is supported over the distinct subspace in the register $A$. 

Applying the permutation-phase twirl to the register $\reg{A}$, we have by linearity, 
\begin{align*}
\mompft{\proj{\phi}_{\reg{AE}}} &= 
\sum_{\substack{\bx,\by \in \distinct(d,t)}} \left( \mompft{\ketbra{\bx}{\by}_{\reg  A}} \right) \ot \ketbra{\tilde e_{\bx}}{\tilde e_{\by}}_{\reg{\tilde E}} \,.
\end{align*}
\Cref{lem:dis_string_to_perm} implies that the term in the parantheses is non-zero only when $\by = \bx_{\sigma}$ for some $\sigma \in S_t$. Since we sum over all possible $\bx, \by \in \distinct(d,t)$ and there is at most one such $\sigma$ for each pair of tuples $\bx$ and $\by$, it follows that 
\begin{align}\label{eqn:pf1}
\mompft{\proj{\phi}_{\reg{AE}}} 
&= \sum_{\substack{\bx \in \distinct(d,t)\\ \sigma \in S_t}}  \frac{\left(\Lambda R_\sigma\right)_{\reg A}}{\Tr[\Lambda]}  \ot \ketbra{\tilde e_{\bx}}{\tilde e_{{\bx_{\sigma}}}}_{\reg{\tilde E}} \notag \\
&= \frac{\Lambda_{\reg A}}{\Tr[\Lambda]} \sum_{\sigma \in S_t}  (R_\sigma)_{\reg A} \ot \Bigg( 
\sum_{\bx \in \distinct(d,t)}  \ketbra{\tilde e_{\bx}}{\tilde e_{{\bx_{\sigma}}}}_{\reg{\tilde E}}  \Bigg) \notag \\
&= \frac{\Lambda_{\reg A}}{\Tr[\Lambda]} \sum_{\sigma \in S_t}  (R_\sigma)_{\reg A} \ot \Tr_{\reg A}\left[(R^\dagger_{\sigma} \ot \id_{\reg{\tilde E}}) \proj{\phi}_{\reg{A \tilde E}}\right]\,.
\end{align}
For the second equality we simply rearranged sums, and for the last equality we wrote the expression in parentheses in the second line more compactly as a partial trace, which follows directly from the expansion in \Cref{eqn:std_basis_expansion}.

We now rewrite the above partial trace in the Schur-Weyl basis on $\reg A$. Let 
\begin{align}
    \ket{\phi}_{\reg{A \tilde E}} = \sum_{\lambda,i,j} (\ket{w_{\lambda, i}} \ket{v_{\lambda, j}})_{\reg{A}} \ket{e_{\lambda,i,j}}_{\reg{ \tilde E}} \label{eqn:phi_schur_expansion}
\end{align}
be the state in the Schur-Weyl basis, where $\ket{e_{\lambda,i,j}}_{\reg{ \tilde E}}$ are unnormalized and not necessarily orthogonal vectors.
Then, 
\begin{align*}
\Tr_{\reg A}\left[(R^\dagger_{\sigma} \ot \id_{\reg{\tilde E}}) \proj{\phi}_{\reg{A \tilde E}}\right] =  \sum_{\substack{\lambda, i, j \\ \lambda', i', j'}} (\bra{w_{\lambda', i'}}\bra{v_{\lambda', j'}}) R^\dagger_{\sigma} (\ket{w_{\lambda, i}} \ket{v_{\lambda, j}}) \ketbra{e_{\lambda, i, j}}{e_{\lambda', i', j'}}_{\reg{\tilde E}}.
\end{align*}
Plugging the above into \Cref{eqn:pf1}, 
\begin{align*}
\mompft{\proj{\phi}_{\reg{A \tilde E}}}
&= \frac{\Lambda_{\reg A}}{\Tr[\Lambda]} \sum_{\substack{\lambda, i, j \\ \lambda', i', j'}} \left( \sum_{\sigma} (\bra{w_{\lambda', i'}}\bra{v_{\lambda', j'}}) R^\dagger_{\sigma} (\ket{w_{\lambda, i}} \ket{v_{\lambda, j}}) R_\sigma \right)_{\reg A} \ot \ketbra{e_{\lambda, i, j}}{e_{\lambda', i', j'}}_{\reg{\tilde E}} \,.
\end{align*}
Applying \Cref{lem:K_simplified} to the term in parentheses, we can simplify this to 
\begin{align}\label{eqn:pf2}
\mompft{\proj{\phi}_{\reg{A \tilde E}}}
&= \frac{\Lambda_{\reg A}}{\Tr[\Lambda]} \sum_{\substack{\lambda, i, j, j'}} \frac{t!}{\dim (V_{\lambda})} \left(\id_{W_\lambda} \ot \ketbra{v_{\lambda,j}}{v_{\lambda, j'}}\right)_{\reg A} \ot \ketbra{e_{\lambda, i, j}}{e_{\lambda, i, j'}}_{\reg {\tilde E}}\,.
\end{align}
By \Cref{lem:perm_inv_sw}, we can write $\Lambda = \bigoplus_{\lambda \vdash t} \Lambda^{(\lambda)}_{W_{\lambda}} \ot \id_{V_{\lambda}}$, where $\Lambda^{(\lambda)}_{W_\lambda}$ are projectors supported on $W_{\lambda}$. Since $\id_{W_\lambda}$ is simply the identity on subspace $W_\lambda$, this implies 
\begin{align*}
\Lambda (\id_{W_\lambda} \ot \ketbra{v_{\lambda,j}}{v_{\lambda, j'}}) = \Lambda^{(\lambda)}_{W_\lambda} \ot \ketbra{v_{\lambda,j}}{v_{\lambda, j'}} \,.
\end{align*}
Plugging this into \Cref{eqn:pf2} and rewriting it as a partial trace,
\begin{align*}
\mompft{\proj{\phi}_{\reg{A \tilde E}}}
    & = \sum_{\lambda \vdash t}  \dfrac{t!}{\dim(V_{\lambda}) \Tr[\Lambda]} \Lambda^{(\lambda)}_{W_\lambda} \otimes \Tr_{W_\lambda}[\ind_{P_\lambda} \proj{\phi}_{\reg{A \tilde E}}\ind_{P_\lambda}] \,.
\end{align*}
Since $\Tr[\Lambda^{(\lambda)}_{W_\lambda}] = (\dim(V_{\lambda}) \Tr[\Lambda])/t!$ as we showed in \Cref{clm:distinct-trace}, the first tensor factor is indeed the maximally mixed state on the support of the projector $\Lambda^{(\lambda)}_{W_\lambda}$. This completes the proof.
\end{proof}

\section{Efficient unitary $t$-designs}
\label{sec:linear-t-design}

As explained in \Cref{sec:intro_t_designs}, to turn the $PFC$ ensemble from \Cref{thm:pfc-ensemble} into an ensemble of efficiently implementable unitaries, we can replace the random functions and permutations in the definition of the $F$ and $P$ random unitaries by their $O(t)$-wise independent counterparts.
This will yield a very efficient $t$-design construction.
We first recall the definitions of $t$-wise independent functions and permutations.

\begin{definition}[$t$-wise independent functions]\label{def:t-wise-ind-func}
A distribution $\mathcal{D}$ over functions $\mathcal{F} = \{f: [N] \rightarrow [M]\}$ is called $t$-wise independent if, for all distinct $x_1,\dots,x_t \in [N]$ and for all $y_1,\dots,y_t \in [M]$ it holds that
$$
\Pr_{f \sim \mathcal{D}} \big[ f(x_1)=y_1 \, \land \, \dots \, \land \, f(x_t)=y_t 
\big] = M^{-t}.
$$
Moreover, we say that an ensemble of functions $\mathcal{F}$ is $t$-wise independent if the uniform distribution over the set $\mathcal{F}$ is $t$-wise independent.
\end{definition}

\begin{definition}[$t$-wise independent permutations]\label{def:t-wise-ind-perm}
A distribution $\mathcal{D}$ over permutations $\mathcal{P} = \{\pi: [N] \rightarrow [N]\}$ is called $t$-wise independent if, for all distinct $x_1,\dots,x_t \in [N]$ and all distinct $y_1,\dots,y_t \in [N]$, it holds that
$$
\Pr_{\pi \sim \mathcal{D}} \big[ \pi(x_1)=y_1 \, \land \, \dots \, \land \, \pi(x_t)=y_t 
\big] = \left(N \cdot (N-1) \cdots (N-t+1) \right)^{-1}.
$$
Moreover, we say that $\mathcal{D}$ is $\delta$-approximate $t$-wise independent\footnote{We remark that our definition of $\delta$-approximate $t$-wise independence is the same as in \cite{v009a015}.} if instead
$$
\Big| \Pr_{\pi \sim \mathcal{D}} \big[ \pi(x_1)=y_1 \, \land \, \dots \, \land \, \pi(x_t)=y_t 
\big] - \left(N \cdot (N-1) \cdots (N-t+1) \right)^{-1}
\big] \Big| \leq \delta.
$$
\end{definition}

We will use the following theorem due to Alon and Lovett~\cite{v009a015}, which says that an \emph{approximate} $t$-wise independent distribution over permutations is always statistically close to an exact $t$-wise independent distribution over permutations---provided that the error is sufficiently small.

\begin{theorem}[Theorem 1.1, \cite{v009a015}]\label{thm:alon}
Let $\mathcal{D}$ be a distribution over permutations $\mathcal{P} = \{\pi: [N] \rightarrow [N]\}$ which is $\delta$-approximate $t$-wise independent. Then, there exists a distribution $\mathcal{D}'$ over $\mathcal{P}$ which is exactly $t$-wise independent such that $\| \mathcal{D} - \mathcal{D}' \|_1 \leq O\left(\delta \cdot N^{4t} \right)$.

\end{theorem}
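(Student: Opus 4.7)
The plan is a projection-and-smoothing argument in the space $\mathbb{R}^{S_N}$ of signed measures on $S_N$. The set $P_t$ of exactly $t$-wise independent distributions is the intersection of the probability simplex with the affine subspace $A_t$ cut out by the $O(N^{2t})$ linear equations $L_{\bx,\by}(\mathcal{D}) \coloneqq \sum_{\pi : \pi(x_i)=y_i \,\forall i} \mathcal{D}(\pi) = 1/(N)_t$, one for each pair of $t$-tuples of distinct elements, where $(N)_t = N(N-1)\cdots(N-t+1)$. Since the uniform distribution $U$ on $S_N$ lies in $P_t$, the subspace $A_t$ passes through $U$. The overall strategy is to first move $\mathcal{D}$ onto $A_t$ by a small signed perturbation, and then mix with $U$ to push the result back into the simplex.

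First I would construct a signed measure $\gamma \in \mathbb{R}^{S_N}$ with $\tilde{\mathcal{D}} \coloneqq \mathcal{D} + \gamma \in A_t$ and with $\gamma$ small pointwise. The violations $v_{\bx,\by} \coloneqq L_{\bx,\by}(\mathcal{D}) - 1/(N)_t$ have magnitude at most $\delta$ by hypothesis, and the constraint indicators---characteristic functions of two-sided cosets of point-stabilizers under the natural $S_N \times S_N$-action---have a rich algebraic structure that lets one solve the correction equation in closed form inside the group algebra $\mathbb{R}[S_N]$. This should yield a pointwise bound of the form $|\gamma(\pi)| \leq O(\delta \cdot N^{2t}/N!)$ for all $\pi \in S_N$, and correspondingly $\|\gamma\|_1 \leq O(\delta \cdot N^{2t})$.

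Next I would restore non-negativity by mixing with the uniform distribution: set $\mathcal{D}' \coloneqq (1-\eta)\tilde{\mathcal{D}} + \eta U$ with $\eta$ chosen of order $\delta \cdot N^{2t}$, so that the uniform mass $\eta/N!$ added at each atom exceeds the worst negative value of $\tilde{\mathcal{D}}(\pi)$. Both $\tilde{\mathcal{D}}$ and $U$ lie in $A_t$, so their convex combination remains in $A_t$; and $\mathcal{D}' \geq 0$ by construction, yielding $\mathcal{D}' \in P_t$. The triangle inequality then gives $\|\mathcal{D} - \mathcal{D}'\|_1 \leq \|\gamma\|_1 + \eta \|\tilde{\mathcal{D}} - U\|_1 = O(\delta \cdot N^{4t})$, where the extra $N^{2t}$ factor comes from tracking the various combinatorial constants hidden in the $O(\cdot)$ bounds on the pointwise correction magnitudes and on $\|\tilde{\mathcal{D}} - U\|_1$.

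The main obstacle is producing a sharp \emph{pointwise} bound on the correction $\gamma$, since it is the worst-atom negativity of $\tilde{\mathcal{D}}$---rather than its $\ell_1$ deviation---that dictates the smoothing parameter $\eta$ and hence the final $N^{4t}$ factor. This reduces to controlling the inverse of the Gram matrix of the constraint indicators inside $\mathbb{R}[S_N]$, whose block structure under the $S_N \times S_N$-action is closely related to the Johnson association scheme, and where the combinatorial interaction between $N$ and $t$ enters in a delicate way. An alternative route that avoids this linear-algebraic analysis would be LP duality: write the distance from $\mathcal{D}$ to $P_t$ as an LP and exhibit a dual witness---a function on $S_N$ with bounded $\ell_\infty$ norm whose expectation under any exactly $t$-wise independent distribution can be read off from the constraint structure---whose existence certifies the same bound.
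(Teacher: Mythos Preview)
The paper does not prove this statement at all: it is quoted verbatim as Theorem~1.1 of Alon and Lovett and used as a black box in the proof of \Cref{thm:t-designs}. There is therefore no ``paper's own proof'' to compare your proposal against.

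That said, as a sketch of how one might prove the cited result, your outline is in the right spirit---the Alon--Lovett argument is indeed a perturb-then-smooth argument in $\mathbb{R}^{S_N}$---but you have correctly identified the genuine obstacle and not actually overcome it. The step ``this should yield a pointwise bound of the form $|\gamma(\pi)| \leq O(\delta \cdot N^{2t}/N!)$'' is doing all the work, and your proposal offers no mechanism for it beyond gesturing at the Johnson scheme. In the actual Alon--Lovett proof this step is the heart of the argument: one needs quantitative control on the pseudoinverse of the constraint matrix, which they obtain via a careful spectral analysis of the relevant association scheme (bounding eigenvalues away from zero by an inverse polynomial in $N^t$). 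Without that bound, the mixing step cannot be calibrated and the argument does not close. Your LP-duality alternative is also reasonable in principle, but exhibiting the dual witness with the required $\ell_\infty$ bound is essentially the same difficulty in different clothing.
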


Our efficient unitary $t$-design ensemble is a product of a random Clifford unitary, a $2t$-wise independent binary
phase operator, and an \emph{approximate} $t$-wise independent permutation operator. To prove this, we make use of $O(t)$-wise independence in order to switch to the ``fully random'' $PFC$ ensemble from \Cref{sec:pru_security}. One way to argue that this switch is justified is to invoke a theorem due to Zhandry~\cite[Theorem 3.1]{Zhandry2012} which says that the behavior of any quantum algorithm making at most $t$ queries to a
$2t$-wise independent function is identical to its behavior when querying a uniformly random function. While Zhandry's result allows us to immediately switch from a $2t$-wise independent binary-phase operator to a fully random binary-phase operator, the switch from \emph{approximate} $t$-wise independent permutations is  more subtle, especially so because the permutation operator is applied in-place rather than as a regular reversible oracle. 
It seems very likely that Zhandry's general result can be extended to the case of approximately independent permutations, but since we do not need the full power of Zhandry's result, we opt for a more direct and self-contained approach: we show that we can directly work with the $O(t)$-wise independence of the underlying function and permutation in order to carry out a similar analysis as in the ``fully random'' case in \Cref{sec:pru_security}. 
Using this direct approach, we show the following theorem.

\begin{theorem}[Efficient unitary t-design] \label{thm:t-designs}
Let $\mathcal{F} = \{f: [d] \rightarrow \bit\}$ be a $2t$-wise independent function family and $\mathcal{D}$ be a distribution over permutations $\mathcal{P} = \{\pi: [d] \rightarrow [d]\}$ which is $\delta$-approximate $t$-wise independent for 
 $\delta = O(t/d^{4t+\frac{1}{2}})$. Then, the unitary ensemble $\nu$ over $\mathrm{U}(d)$ which samples $U \sim \nu$ as a product
 $$
 U = P_\pi \, F_f \, C \,, \quad\quad \text{ where } \,\quad \pi \sim \mathcal{D}, \,  f \sim \mathcal{F}, \, C \sim \mathrm{C}(d)\, ,
 $$
 is a diamond $\eps$-approximate $t$-design for $\eps = O(t/\sqrt{d})$ (as per \Cref{def:additive_error_design}).
\end{theorem}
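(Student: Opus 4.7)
The plan is to reduce to \Cref{thm:pfc-ensemble} by tracking two additional sources of error: the replacement of the approximate $t$-wise independent permutation distribution by an exactly $t$-wise independent one, and the verification that the derandomised $PF$-twirl still behaves like the fully random one on the distinct subspace. First, I would invoke \Cref{thm:alon} to obtain an exactly $t$-wise independent distribution $\cD'$ over $\cP$ with $\|\cD - \cD'\|_1 \le O(\delta \cdot d^{4t}) = O(t/\sqrt d)$, using the given choice $\delta = O(t/d^{4t+1/2})$. Because the full ensemble channel can be written as a convex combination (over $\pi$) of channels that depend only on $\pi$ and the independent randomness $f, C$, the induced diamond distance between the $\cD$- and $\cD'$-ensembles is bounded by this statistical distance, contributing at most $O(t/\sqrt d)$ to the final error. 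For the remainder of the proof I may therefore assume $\pi$ is drawn from an exactly $t$-wise independent distribution.

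Next, I would mirror the proof of \Cref{thm:pfc-ensemble} step by step. \Cref{lem:clifford} relies only on the Clifford group being a $2$-design, which is unchanged, so gentle measurement still restricts attention to input states $\ket{\phi}$ supported on the distinct subspace at an additional cost of $O(t/\sqrt d)$. \Cref{lem:haar-twirl} concerns only the Haar side and is unaffected. The only place where new work is required is in re-deriving \Cref{lem:pf-twirl} under the weaker randomness assumptions. Inspecting its proof, the $F$-twirl step uses only the expectation $\E_f[(-1)^{\sum_i f(x_i) + f(y_i)}]$ for distinct tuples $\bx, \by \in \distinct(d,t)$, which probes $f$ at most at $2t$ distinct arguments; hence $2t$-wise independence of $\cF$ reproduces the value of the fully uniform case (namely $1$ if $\by = \bx_\sigma$ for some $\sigma \in S_t$, and $0$ otherwise). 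Similarly, the $P$-twirl step uses only the identity $\E_\pi P_\pi^{\ot t} \proj{\bx} P_\pi^{\ot t,\dagger} = \Lambda/\Tr[\Lambda]$ for distinct $\bx$, which depends solely on the joint distribution of $\pi$ at the $t$ points $x_1,\ldots,x_t$ and is therefore valid for any exactly $t$-wise independent permutation distribution.

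With \Cref{lem:pf-twirl} re-established, the dimension-counting arguments of \Cref{clm:distinct-trace} and \Cref{cm:ratio} are unchanged and deliver $\|\momhaart{\proj{\phi}} - \mompft{\proj{\phi}}\|_1 \le O(t^2/d)$. Combining the three sources of error---$O(t/\sqrt d)$ from Alon--Lovett, $O(t/\sqrt d)$ from the Clifford-twirl reduction, and $O(t^2/d)$ from the twirl comparison (dominated by $O(t/\sqrt d)$ whenever $t \le \sqrt d$; otherwise the claim is vacuous)---yields the claimed bound $\eps = O(t/\sqrt d)$.

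I do not expect a serious conceptual obstacle: the substantive content is already in \Cref{thm:pfc-ensemble}, and the new ingredient is essentially bookkeeping around approximate independence. The mildly delicate step is the first one, where one must verify that the diamond distance between the induced channels really is bounded by the statistical distance between $\cD$ and $\cD'$ (a standard convexity argument, since $f$ and $C$ are sampled independently of $\pi$), and that the $d^{4t}$ blow-up in \Cref{thm:alon} is exactly matched by the choice of $\delta$. One could alternatively try to appeal to Zhandry's $2t$-wise indistinguishability theorem for quantum queries, but the direct approach above is self-contained and only invokes the $t$- and $2t$-wise marginals that the original proof already used.
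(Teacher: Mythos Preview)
Your proposal is correct and follows essentially the same route as the paper: invoke Alon--Lovett to pass to an exact $t$-wise independent permutation distribution at cost $O(t/\sqrt d)$ in diamond distance (via convexity), then rerun the proof of \Cref{thm:pfc-ensemble} verbatim, noting that the only place the randomness of $P$ and $F$ enters is \Cref{lem:dis_string_to_perm}, whose two ingredients---the $F$-phase expectation over at most $2t$ inputs and the $P$-averaging over $t$ inputs---are reproduced exactly under $2t$-wise and $t$-wise independence respectively. The paper packages this last observation as a standalone restatement of \Cref{lem:dis_string_to_perm} rather than of \Cref{lem:pf-twirl}, but this is only a cosmetic difference in where the bookkeeping is localised.
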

\begin{proof}
First, we argue that we can replace the $\delta$-approximate $t$-wise independent distribution $\mathcal{D}$ over $\mathcal{P}$ in our ensemble $\nu$ with a distribution which is exactly $t$-wise independent; specifically, we show that this incurs a small error in terms of diamond distance. To see this, we first use \Cref{thm:alon} to argue that there exists a distribution $\mathcal{D}'$ over $\mathcal{P}$ which is exactly $t$-wise independent such that $\| \mathcal{D} - \mathcal{D}' \|_1 \leq O\left(\delta \cdot d^{4t} \right)$. Let $\nu$ be our ensemble of unitaries and let $\nu'$ be the same ensemble, except that we replace $\mathcal{D}$ with $\mathcal{D}'$. Then,
\begin{align*}
\left\| \mathcal{M}_{\nu}^{(t)} - \mathcal{M}_{\nu'}^{(t)} \right\|_\Diamond &= \max_{\ket{\psi}_{\reg{AE}}}\left\| \E_{\substack{\pi\sim \mathcal{D}\\
f \sim \mathcal{F}\\
C \sim \mathrm{C}(d)}} (P_\pi F_f C)^{\ot t}_{\reg A} \proj{\psi}_{\reg{AE}} (P_\pi F_f C)^{\ot t, \dagger}_{\reg A} - \E_{\substack{\pi\sim \mathcal{D}\\
f \sim \mathcal{F}\\
C \sim \mathrm{C}(d)}}  (P_\pi F_f C)^{\ot t}_{\reg A} \proj{\psi}_{\reg{AE}} (P_\pi F_f C)^{\ot t, \dagger}_{\reg A}  \right\|_1\\
&= \max_{\ket{\psi}_{\reg{AE}}}\left\| \E_{\pi\sim \mathcal{D}} \xi^{\pi, \psi}_{\reg{AE}} - \E_{\pi\sim \mathcal{D}'} \xi^{\pi, \psi}_{\reg{AE}}  \right\|_1\, ,
\end{align*}
where we define the ensemble of density matrices $\{\xi^{\pi, \psi}_{\reg{AE}}\}$ as
$$
\xi^{\pi, \psi}_{\reg{AE}} :=  \E_{\substack{
f \sim \mathcal{F}\\
C \sim \mathrm{C}(d)}}  (P_\pi F_f C)^{\ot t}_{\reg A} \proj{\psi}_{\reg{AE}} (P_\pi F_f C)^{\ot t, \dagger}_{\reg A}.
$$
Then, by the strong convexity of the trace distance (see~\cite[Theorem 9.3]{Nielsen_Chuang_2010}), we have that
\begin{align*}
\left\| \mathcal{M}_{\nu}^{(t)} - \mathcal{M}_{\nu'}^{(t)} \right\|_\Diamond &=\max_{\ket{\psi}_{\reg{AE}}}\left\| \E_{\pi\sim \mathcal{D}} \xi^{\pi, \psi}_{\reg{AE}} - \E_{\pi\sim \mathcal{D}'} \xi^{\pi, \psi}_{\reg{AE}}  \right\|_1 \leq \big\| \mathcal{D} - \mathcal{D}' \big\|_1 \leq O\left(\delta \cdot d^{4t} \right) = O\left(\frac{t}{\sqrt{d}}\right) \, ,
\end{align*}
where we used \Cref{thm:alon} and that $\mathcal{D}$ is $\delta$-approximate with parameter $\delta =O(t/d^{4t+\frac{1}{2}})$.
Because of the triangle inequality, it suffices to show that the exact ensemble $\nu'$ satisfies
$$\left\| \mathcal{M}_{\nu'}^{(t)} - \mathcal{M}_{\text{Haar}}^{(t)} \right\|_\Diamond \leq O\left(\frac{t}{\sqrt{d}}\right).
$$
Next, we argue that we can carry out essentially the same proof as in the ``fully random'' ensemble.
For this, notice that the only place in the proof of \Cref{thm:pfc-ensemble} where we use properties of the $F$ and $P$ unitaries is \Cref{lem:dis_string_to_perm}.
Therefore, it suffices to show that \Cref{lem:dis_string_to_perm} still holds for the exact ensemble $\nu'$ which uses $O(t)$-wise independent functions and permutations rather than their ``fully random'' counterparts.
In other words, we can complete the proof of \Cref{thm:t-designs} by showing \Cref{lem:twise-lemma-exact}, which we prove below.
\end{proof}

\begin{lemma}[Restatement of \Cref{lem:dis_string_to_perm} for $t$-wise independence.] \label{lem:twise-lemma-exact}
Consider the PF ensemble, where $F$ is chosen from an ensemble $\mathcal{F} = \{f: [d] \rightarrow \bit\}$ of $2t$-wise independent functions and where $P$ is sampled according to an exact $t$-wise independent distribution over permutations $\mathcal{P} = \{\pi: [d] \rightarrow [d]\}$. Let $\bx, \by \in \distinct(d,t)$ be arbitrary. 
Then,
\begin{align*}
\mathcal{M}_{PF}^{(t)}(\ketbra{\bx}{\by}) = \begin{cases} \dfrac{\Lambda R_{\sigma}}{\Tr[\Lambda]}  & \text{ if } \by = \bx_{\sigma} \text{ for } \sigma \in S_t,\\ 0 & \text{ otherwise.} \end{cases} 
\end{align*}
\end{lemma}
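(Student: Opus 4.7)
The plan is to reproduce the proof of \Cref{lem:dis_string_to_perm} step by step and verify that at each point, only the $2t$-wise independence of $f$ (respectively the $t$-wise independence of $\pi$) is actually used. I would again split the computation into an ``$F$-twirl first'' and a ``$P$-twirl second'' step; the identities in question are low-order moments in the randomness of $f$ and $\pi$, so full uniformity is not needed.

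For the $F$-twirl, I would compute
\begin{align*}
\E_{f \sim \mathcal{F}} F_f^{\otimes t} \ketbra{\bx}{\by} F_f^{\otimes t,\dagger}
= \Bigl( \E_{f} (-1)^{\sum_{i=1}^{t} f(x_i) + f(y_i)} \Bigr) \ketbra{\bx}{\by} \,.
\end{align*}
The exponent involves the values of $f$ at the at most $2t$ points $\{x_i\}_{i} \cup \{y_i\}_{i}$. If $\by = \bx_\sigma$ for some $\sigma \in S_t$, then every point in this multiset appears an even number of times, so the phase is $+1$ deterministically. Otherwise, some point appears an odd number of times; grouping that point out and using the $2t$-wise independence of $\mathcal{F}$ (\Cref{def:t-wise-ind-func}), the expectation factors and the resulting $\E_f (-1)^{f(z)}$ at that odd-multiplicity point equals $0$. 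Hence the expectation vanishes, exactly as in the fully random case. Note this is where we critically use $2t$-wise (and not just $t$-wise) independence, since both $\bx$ and $\by$ contribute up to $t$ points each.

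For the $P$-twirl, I would argue that for any distinct tuple $\bx \in \distinct(d,t)$,
\begin{align*}
\E_{\pi \sim \mathcal{D}'} P_\pi^{\otimes t} \proj{\bx} P_\pi^{\otimes t,\dagger}
= \E_{\pi \sim \mathcal{D}'} \proj{\pi(x_1), \dots, \pi(x_t)}
= \frac{\Lambda}{\Tr[\Lambda]} \,.
\end{align*}
The second equality is because, by the exact $t$-wise independence of $\mathcal{D}'$ (\Cref{def:t-wise-ind-perm}), the tuple $(\pi(x_1), \dots, \pi(x_t))$ is uniformly distributed over $\distinct(d,t)$, whose cardinality is exactly $\Tr[\Lambda] = d!/(d-t)!$. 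Combined with the fact that $P_\pi^{\otimes t}$ commutes with the permutation-of-tensor-factors operator $R_\sigma$, this gives, in the case $\by = \bx_\sigma$,
\begin{align*}
\mathcal{M}_{PF}^{(t)}(\ketbra{\bx}{\by})
= \E_{\pi} P_\pi^{\otimes t} \proj{\bx} P_\pi^{\otimes t,\dagger} R_\sigma
= \frac{\Lambda R_\sigma}{\Tr[\Lambda]} \,,
\end{align*}
matching the stated formula, and zero otherwise by the $F$-twirl calculation.

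There is no real obstacle here: the entire content of the lemma is that the two moment identities used in the original proof of \Cref{lem:dis_string_to_perm} are in fact moment identities of order at most $2t$ in $f$ and order exactly $t$ in $\pi$, and these are precisely the moments that $2t$-wise independent functions and $t$-wise independent permutations reproduce exactly. The mild subtlety is just bookkeeping: confirming that the $F$-twirl really involves at most $2t$ distinct query points (accounting for collisions between the $x_i$'s and $y_i$'s does not matter, since the statement we need is only an upper bound on the number of points), and that we have already reduced to exact $t$-wise independent $\mathcal{D}'$ at the outset of the proof of \Cref{thm:t-designs} via \Cref{thm:alon}.
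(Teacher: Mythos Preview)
Your proposal is correct and follows essentially the same approach as the paper's proof: first apply the $F$-twirl and use $2t$-wise independence to reduce the phase expectation to the fully random case, then apply the $P$-twirl and use exact $t$-wise independence of the permutation distribution (together with the commutation of $P_\pi^{\otimes t}$ and $R_\sigma$) to obtain $\Lambda/\Tr[\Lambda]$. Your explanation of why the phase expectation vanishes via the odd-multiplicity point is in fact slightly more explicit than the paper's, which simply asserts that the identity follows from $2t$-wise independence.
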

\begin{proof}
Applying the $2t$-wise $F$-twirl first, we get that
\begin{align*}
\E_{F} F^{\ot t} \ketbra{\bx}{\by} F^{\ot t, \dagger}
&= \left( \E_{f \sim \mathcal{F}} (-1)^{\sum_i f(x_i) + f(y_i)}\right) \ketbra{\bx}{\by} \,.
\end{align*}
Here, the expectation is over the $2t$-wise independent function family $\mathcal{F}$.
Because $\bx, \by \in \distinct(d,t)$, it immediately follows from the $2t$-wise independence of $\mathcal{F}$ that
\begin{align*}
\E_{f \sim \mathcal{F}} \left[(-1)^{\sum_i f(x_i) + f(y_i)} \right]=
 \E_f \left[(-1)^{\sum_i f(x_i) + f(y_i)} \right] = 
 \begin{cases}
1, & \text{ if } \by = \bx_{\sigma}, \text{ for some } \sigma \in S_t, \text{ and }\\
0, & \text{ otherwise} \, ,
 \end{cases}
\end{align*}
where the second expectation is over ``fully random'' functions $f$.
Next applying the $t$-wise $P$-twirl, 
\begin{align*}
\E_{P} P^{\ot t}  \left(\E_{F} F^{\ot t} \ketbra{\bx}{\bx} F^{\ot t, \dagger}\right) P^{\ot t, \dagger} 
&= \E_{P} P^{\ot t} \ketbra{\bx}{\bx}R_{\sigma} P^{\ot t, \dagger} = \E_{P} P^{\ot t}\proj{\bx} P^{\ot t, \dagger} R_\sigma \,,
\end{align*}
where we used that  $P^{\ot t}$ commutes with $R_\sigma$.
Finally, using that $\mathcal{D}$ is a distribution over $t$-wise independent permutations over $\mathcal{P}$, we get that for any tuple of distinct strings $\bx$, 
\begin{align*}
\ \E_{P} P^{\ot t} \proj{\bx} P^{\ot t, \dagger} &= \E_{\pi \sim \mathcal{D}} \proj{\pi(x_1), \dots, \pi(x_t)}\\
&= \,\,\, \E_{\pi} \proj{\pi(x_1), \dots, \pi(x_t)} =
\frac{\Lambda}{\Tr[\Lambda]} \, , 
\end{align*}
where the last expectation is over ``fully random'' permutations over $[d]$. This proves the claim.
\end{proof}

\paragraph{Linear-depth $t$-design construction.} We can instantiate the $PFC$ ensemble in \Cref{thm:t-designs} with a concrete ensemble of $O(t)$-wise independent functions and permutations.
Very efficient $O(t)$-wise independent functions and permutations have been constructed, allowing us to get very efficient $t$-designs. 

Concretely, let us say that an ensemble $\mathcal{F} = \{f: [N] \rightarrow \bit\}$ of functions is $(s,r)$-explicit if there is a circuit of size at most $s$ and depth at most $r$ that evaluates any function $f \in \mathcal{F}$ on a given input. Analogously, we say that a distribution $\mathcal{D}$ over permutations $\mathcal{P} = \{\pi: [N] \rightarrow [N]\}$ is $(s,r)$-explicit if there is a circuit of size at most $s$ and depth at most $r$ that evaluates $\pi \sim \mathcal{D}$ on a given input. Then, we show the following.

\begin{lemma}[Explicit $t$-design] \label{cor:t-designs-explicit} Let $d=2^n$. Let $\mathcal{F} = \{f: [d] \rightarrow \bit\}$ be an $(s,r)$-explicit family of $2t$-wise independent functions and let $\mathcal{D}$ be an $(s,r)$-explicit, $\delta$-approximate $t$-wise independent distribution over permutations $\mathcal{P} = \{\pi: [d] \rightarrow [d]\}$ for 
 $\delta = O(t/d^{4t+\frac{1}{2}})$. Then, the resulting $n$-qubit ensemble $\nu$ from \Cref{thm:t-designs} is a diamond $\eps$-approximate $t$-design for $\eps = O(t/\sqrt{d})$, and each $U \sim \nu$ can be implemented in size $O(n^2 + s)$ and depth $O(n + r)$. 
\end{lemma}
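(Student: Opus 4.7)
The statement splits cleanly into two parts. The diamond-distance bound is immediate: the distribution $\mathcal{D}$ and the function family $\mathcal{F}$ assumed in the hypotheses are precisely those permitted by \Cref{thm:t-designs} with the same $\delta = O(t/d^{4t+1/2})$, so applying that theorem gives a diamond $\eps$-approximate $t$-design with $\eps = O(t/\sqrt{d})$ with no additional work. Thus the substantive content is the circuit-complexity bound, and I would prove it by analyzing the three factors of $U = P_\pi F_f C$ separately and summing their sizes and depths (noting they are composed sequentially on the same $n$ qubits).

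For the Clifford $C$, I would invoke the known fact that a uniformly random $n$-qubit Clifford admits a canonical decomposition (e.g., the layered CNOT/H/S form) that can be implemented in $O(n^2)$ gates and $O(n)$ depth; the sampling procedure of \cite{berg2021simple} produces such a representation efficiently. This contributes $O(n^2)$ to the size and $O(n)$ to the depth. For the binary-phase operator $F_f$, I would use the standard compute--phase--uncompute construction: use the $(s,r)$-explicit circuit $\mathsf{C}_f$ to map $|x\rangle|0\rangle \mapsto |x\rangle|f(x)\rangle$, apply a single $Z$ to the ancilla (realizing the factor $(-1)^{f(x)}$), then run $\mathsf{C}_f$ in reverse to clear the ancilla. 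This costs $O(s)$ gates in $O(r)$ depth.

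For the permutation operator $P_\pi$, the subtlety is that $P_\pi$ must act in place, whereas the $(s,r)$-explicit circuit for $\pi$ only computes $\pi$ into a fresh register. I would use the standard in-place trick: (i) compute $|x\rangle|0\rangle \mapsto |x\rangle|\pi(x)\rangle$ using $\mathsf{C}_\pi$, (ii) apply a bitwise SWAP across the two $n$-qubit registers to obtain $|\pi(x)\rangle|x\rangle$ at depth $O(1)$ and size $O(n)$, and (iii) uncompute the second register via the $(s,r)$-explicit circuit for $\pi^{-1}$, which sends $|\pi(x)\rangle|x\rangle \mapsto |\pi(x)\rangle|0\rangle$. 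Summing across the three factors yields total size $O(n^2) + O(s) + O(s) + O(n) = O(n^2 + s)$ and total depth $O(n) + O(r) + O(r) + O(1) = O(n+r)$, as claimed.

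The only genuinely delicate point I would need to justify carefully is the assumption in step (iii) that $\pi^{-1}$ is also $(s,r)$-explicit. This is the step I expect to require the most care in a fully rigorous write-up, but it is natural in the sense that essentially all candidate constructions of (approximate) $t$-wise independent permutations, including the Kassabov-type constructions analyzed in \cite{random_walks}, are built from products of small building blocks whose inverses admit circuits of the same size and depth. In case one wanted a fully black-box statement, one could alternatively fold the cost of inversion into the explicitness parameters $(s,r)$ at the outset; this costs at most a constant factor and does not affect the asymptotics.
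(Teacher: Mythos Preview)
Your proposal is correct and follows essentially the same approach as the paper: invoke \Cref{thm:t-designs} for the design property, then bound the circuit complexity by summing the costs of the Clifford (size $O(n^2)$, depth $O(n)$), the phase operator, and the permutation operator. The only minor difference is in the permutation step: the paper argues directly that since Toffoli gates are universal for reversible classical computation, the in-place unitary $\ket{x}\mapsto\ket{\pi(x)}$ can be realised with $O(s)$ Toffoli gates and $O(r)$ depth (possibly with ancillas), whereas you give the more explicit compute--swap--uncompute construction and correctly flag the need for an $(s,r)$-explicit circuit for $\pi^{-1}$. Both routes are fine and yield the same asymptotics; the paper's version sidesteps the $\pi^{-1}$ issue by implicitly treating the explicit circuit as already reversible, while yours is more careful about what ``explicit'' means but introduces a mild extra assumption that, as you note, is satisfied by all the relevant constructions.
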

\begin{proof}
Note that the depth which is required to implement a unitary $U$ from the $PFC$ ensemble $\nu$ is captured by the depth of implementing a random Clifford, a binary-phase operator (with respect to $\mathcal{F}$), and a permutation operator (with respect to $\mathcal{P}$).
Implementing any $n$-qubit Clifford unitary requires size $O(n^2)$ and depth of $O(n)$~\cite{Bravyi_2021}. Because Toffoli gates are universal for classical computation~\cite{Nielsen_Chuang_2010}, the unitaries $U_f: \ket{x}\ket{y} \rightarrow \ket{x}\ket{y \oplus f(x)}$ and $R_\pi: \ket{x} \rightarrow \ket{\pi(x)}$ can also be implemented in size $O(s)$ and depth $O(r)$ ---potentially using using ancilla qubits initialized to $\ket{0}$. Therefore, both the binary phase operator $F$ (consisting of a layer of Hadamard gates followed by Toffoli gates) as well as the permutation operator $P$ (consisting solely of Toffoli gates) can be implemented in size $O(s)$ and depth $O(r)$.
\end{proof}

We remark that since one can use any $2$-design instead of a Clifford in our construction, the above parameters can be optimized even further. For instance, it is known that one can sample a $2$-design with circuits of quasilinear size and logarithmic depth (see~\cite{Cleveetal16}). We do not try to optimize the parameters here as the dominant parameters come from the construction of $t$-wise independent functions and permutations. 

For $O(t)$-wise independent functions, we can use the following result.

\begin{fact}[\cite{Joffe,ALON1986567}]\label{fact:func}
For any $1 \leq t \leq N$, there exists an ensemble $\mathcal{F} = \{f: [N] \rightarrow \bit\}$ of $t$-wise independent functions which can be evaluated in size and depth $O(t \log N)$.
\end{fact}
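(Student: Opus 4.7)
The plan is to use the standard construction of $t$-wise independent hash functions via low-degree polynomials over a finite field, and then verify that the resulting circuits meet the claimed size and depth bounds.

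First, I would pick a finite field $\F_q$ with $q \geq N$; for concreteness take $q = 2^m$ with $m = \lceil \log N \rceil$ and identify $[N]$ with a subset of $\F_q$. Sample independently uniform coefficients $a_0, a_1, \ldots, a_{t-1} \in \F_q$ to obtain a random polynomial $p(x) = \sum_{i=0}^{t-1} a_i x^i \in \F_q[x]$, and define $f : [N] \to \bit$ by reading off (say) the least significant bit of $p(x)$ under the chosen identification of $\F_q$ with $\bit^m$. A function $f \in \mathcal{F}$ is fully described by the vector $(a_0, \ldots, a_{t-1})$, whose total bit-length is $O(t \log N)$.

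Next, I would verify $t$-wise independence. For any distinct $x_1, \ldots, x_t \in [N] \subset \F_q$, the evaluation map
\[
(a_0,\ldots,a_{t-1}) \;\mapsto\; (p(x_1), \ldots, p(x_t))
\]
is a bijection on $\F_q^t$, since its matrix is the (invertible) Vandermonde matrix on distinct nodes. Hence $(p(x_1),\ldots,p(x_t))$ is uniform on $\F_q^t$ when the $a_i$'s are uniform, and in particular the bits $(f(x_1),\ldots,f(x_t))$ are uniform over $\bit^t$, matching \Cref{def:t-wise-ind-func} with $M=2$.

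Finally, I would bound the circuit complexity of evaluating $f$ on a given input $x$. The evaluation of $p(x)$ reduces to $O(t)$ arithmetic operations in $\F_q$, each of which has size and depth $\mathrm{poly}(\log q) = \mathrm{poly}(\log N)$; using a divide-and-conquer/parallel-prefix layout for the polynomial (computing all powers $x, x^2, \ldots, x^{t-1}$ in parallel and then summing $a_i x^i$ in a binary tree), the number of sequential field operations drops to $O(\log t)$, so a careful accounting with efficient $\F_{2^m}$ arithmetic yields size and depth $O(t \log N)$ as claimed. The main work here is the bookkeeping on the finite-field arithmetic circuits; this is classical and is exactly what is supplied by the Joffe and Alon--Babai--Itai constructions cited in the fact, so I would appeal to those constructions rather than re-deriving the arithmetic circuits from scratch. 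The only mild subtlety is handling the case $N$ is not a power of two, which is resolved by embedding $[N]$ into the smallest $\F_{2^m}$ with $2^m \geq N$ at the cost of constant factors.
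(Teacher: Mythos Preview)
The paper does not prove this fact; it is stated as a cited result from \cite{Joffe,ALON1986567} with no accompanying argument. Your proposal is exactly the standard polynomial-over-a-finite-field construction that underlies those references, and the Vandermonde argument for $t$-wise independence is correct.

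One remark on the complexity accounting: with schoolbook multiplication in $\F_{2^m}$ (where $m=\Theta(\log N)$), $O(t)$ field operations give Boolean circuit size $O(t\log^2 N)$ rather than $O(t\log N)$; hitting the stated $O(t\log N)$ size bound in the Boolean circuit model is delicate and depends on the cost model one adopts for field arithmetic. You already flag this and defer to the cited constructions for the precise bookkeeping, which is appropriate here since the paper does the same. The depth bound $O(t\log N)$ is in any case a comfortable overestimate of what your parallel-prefix evaluation actually achieves.
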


Note that if the domain is the space of $n$-qubit computational basis states, then $N = 2^n$ and the above gives a $(s,r)$-explicit family of $2t$-wise independent functions where $s = O(tn)$ and $r = O(tn)$. 

For approximate $O(t)$-wise independent permutations, existing constructions are more complicated and less explicit than for $O(t)$-wise independent functions. 
A very elegant construction of $O(t)$-wise independent elements of the alternating group (i.e.~the even permutations) on $p^3-1$ elements (for a sufficiently large prime $p$) can be found in~\cite[Theorem 1.5]{caprace2023tame}, which can be turned into $O(t)$-independent permutations on any (sufficiently large) number of elements $N$ using techniques from~\cite{Kassabov05}.
This construction only involves basic finite-field arithmetic, yielding $O(t)$-wise independent permutations with circuit size $O(t~ \mathrm{polylog}(N))$. 
Alternatively, one can use a matrix-based construction from~\cite{Kassabov05}.
For this,~\cite{v009a015} claimed (informally without proof) that the permutations can be implemented in time $O(t \log N)$, but upon closer inspection it is not obvious how to achieve this without additional $\mathrm{polylog}(N)$-factors. 

\begin{fact}[\cite{Kassabov05,caprace2023tame,random_walks}]\label{fact:perm}
For any sufficiently large $N \in \N$ and for any constant $C>0$, there exists a $\delta$-approximate $t$-wise independent distribution $\mathcal{D}$ over permutations $\mathcal{P} = \{\pi: [N] \rightarrow [N]\}$ with $\delta = O(N^{-Ct})$ such that each $\pi \sim \mathcal{D}$ can be evaluated in size $O(t~ \mathrm{polylog}(N))$.
\end{fact}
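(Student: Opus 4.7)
The plan is to combine three ingredients: the explicit tame expanding generators of \cite{caprace2023tame}, Kassabov's domain-reduction technique \cite{Kassabov05}, and the quantitative random-walk mixing analysis of the upcoming work \cite{random_walks}. The sampling procedure for $\mathcal{D}$ would be a $k$-step random walk $\pi = s_k \cdots s_1$ on the Cayley graph $\mathrm{Cay}(\mathrm{Alt}(M), S)$ for a carefully chosen symmetric generating set $S$, where $M$ is close to $N$ and $k = O(t \log N)$.

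First, I would invoke \cite[Theorem 1.5]{caprace2023tame} to obtain, for every sufficiently large prime $p$, a constant-size symmetric generating set $S_p$ of $\mathrm{Alt}(p^3 - 1)$ whose elements are \emph{tame}: each $s \in S_p$ can be evaluated on a computational-basis input using only $O(\mathrm{polylog}(p))$ bit operations, since the generators are built from $\mathrm{SL}_2(\F_p)$-style maps on $\F_p^3 \setminus \{0\}$. Crucially, the Cayley graph $\mathrm{Cay}(\mathrm{Alt}(p^3-1), S_p)$ has a spectral gap bounded below by an absolute constant $\gamma > 0$ that does not depend on $p$. By Bertrand's postulate, I can pick a prime $p$ with $p \approx N^{1/3}$ so that $M := p^3 - 1 \leq N$, and then apply Kassabov's embedding \cite{Kassabov05} to lift the construction to a generating set for $\mathrm{Sym}(N)$ (reaching odd permutations by also including a single coin-flipped transposition) while preserving both tameness and an $\Omega(1)$ spectral gap; the combinatorial glue needed to handle the residual $N - M$ elements only contributes $O(\mathrm{polylog}(N))$ gates per generator.

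Second, I would show that this random walk, after $k = O(t \log N)$ steps, induces a distribution on the set of ordered $t$-tuples of distinct elements of $[N]$ that is within total-variation distance $N^{-(C+O(1))t}$ of uniform; unpacking \Cref{def:t-wise-ind-perm} and absorbing the $N^{-O(t)}$ normalising factor then yields $\delta = O(N^{-Ct})$ for any desired constant $C > 0$ by choosing the hidden constant in $k$ large enough. This reduces to bounding the spectral gap of the induced walk on the coset space $\mathrm{Sym}(N) / \mathrm{Sym}(N-t)$ (on which $t$-wise independence is literally uniformity) and invoking the standard fact that after $\Theta(\log(1/\delta)/\gamma_t)$ steps a walk with gap $\gamma_t$ is $\delta$-close to its stationary distribution in total variation. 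The total evaluation circuit size is thus $k \cdot O(\mathrm{polylog}(N)) = O(t \cdot \mathrm{polylog}(N))$, as required.

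The main obstacle is quantitatively controlling the induced spectral gap $\gamma_t$ on $t$-tuples: na\"ively, passing from a gap on $\mathrm{Alt}(N)$ to the $t$-tuple action can degrade the gap by a factor depending on $t$, which would push the walk length to $t \log N \cdot \mathrm{poly}(t)$ and spoil the linear-in-$t$ bound. The core technical content of \cite{random_walks} is to show that $\gamma_t = \Omega(1/\mathrm{polylog}(N))$ uniformly in $t \leq N^{\Omega(1)}$, either via a direct comparison argument between the Cayley walk and the $t$-tuple walk or by appealing to recent high-dimensional expansion results for symmetric-group Schreier graphs. Once this uniform gap bound is in hand, the remaining ingredients --- mixing-time analysis, handling odd permutations, and padding $N$ to an admissible size --- are routine.
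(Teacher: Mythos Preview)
The paper does not prove this statement; it is recorded as a \emph{Fact} with citations to \cite{Kassabov05,caprace2023tame,random_walks}, accompanied only by a brief informal discussion of the ingredients in the surrounding text. So there is no ``paper's own proof'' to compare against, and your sketch goes well beyond what the paper itself supplies.

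That said, your outline contains one technical misconception worth flagging. You write that ``passing from a gap on $\mathrm{Alt}(N)$ to the $t$-tuple action can degrade the gap by a factor depending on $t$,'' and you identify controlling this degradation as the main obstacle and the content of \cite{random_walks}. This is backwards: the Schreier graph on ordered $t$-tuples is the quotient of the Cayley graph by the point stabiliser $\mathrm{Sym}(N-t)$, so its nontrivial eigenvalues are a \emph{subset} of those of the full Cayley graph (the permutation module on $t$-tuples embeds in the regular representation). Hence a uniform spectral gap $\gamma$ on $\mathrm{Cay}(\mathrm{Alt}(N),S)$ immediately gives a gap $\geq \gamma$ on $t$-tuples, with no $t$-dependent loss. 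Once \cite{caprace2023tame} and Kassabov's lifting give an $\Omega(1)$ gap on the full group, the $t$-tuple mixing in $O(t\log N)$ steps is automatic. The role of \cite{random_walks}, per the paper's own discussion, is not a new spectral-gap bound but rather to work out \emph{explicit circuits} for the Kassabov generators and pin down the precise $\mathrm{polylog}(N)$ factors in size and depth. With this correction, your sketch is a reasonable reconstruction of how the cited results combine.
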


An upcoming work~\cite{random_walks} analyses the construction from~\cite{Kassabov05}  in detail and presents explicit circuits, showing that the circuits can be made to have size $O(t \log^2(N))$ and depth $O(t \log N \log \log N)$ (with ancillas).
In particular, for the domain of $n$-qubit computational basis states, using the result from~\cite{random_walks} gives an $(s,r)$-explicit family of approximate $O(t)$-wise independent permutations with $s = O(t ~\poly(n))$ and depth $r = \tilde{O}(t n)$, where $\tilde{O}$ hides logarithmic factors in $n$. 

Using these aforementioned constructions of $O(t)$-wise independent functions and permutations as a black box, we obtain a $t$-design on $n$-qubits with the same size and depth as well. 
\begin{corollary}[Linear-depth $t$-design] \label{cor:t-designs-linear} Let $d=2^n$. Let $\mathcal{F} = \{f: [d] \rightarrow \bit\}$ be the $2t$-wise independent function family from \Cref{fact:func} and let $\mathcal{D}$ be the $\delta$-approximate $t$-wise independent distribution over permutations $\mathcal{P} = \{\pi: [d] \rightarrow [d]\}$ in \Cref{fact:perm} for 
 $\delta = O(t/d^{4t+\frac{1}{2}})$. Then, the resulting $n$-qubit ensemble $\nu$ from \Cref{thm:t-designs} is a diamond $\eps$-approximate $t$-design for $\eps = O(t/\sqrt{d})$, and each $U \sim \nu$ can be implemented in depth $O(t \, \poly(n))$. 
\end{corollary}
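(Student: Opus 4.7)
The plan is to simply plug together the three ingredients already in place: Theorem~\ref{thm:t-designs} (the ``fully random $F,P$'' claim extended to $2t$-wise independent functions and exact/approximate $t$-wise independent permutations), Lemma~\ref{cor:t-designs-explicit} (which converts explicitness parameters for the classical ingredients into quantum circuit size and depth for the resulting unitary), and Facts~\ref{fact:func} and~\ref{fact:perm} (the concrete efficient instantiations of the $2t$-wise independent function family and the approximate $t$-wise independent permutation distribution). Since no new analytic work is needed, the proof is essentially a verification that the hypotheses of Theorem~\ref{thm:t-designs} and Lemma~\ref{cor:t-designs-explicit} are satisfied by the specific $\mathcal{F}$ and $\mathcal{D}$ in the statement.

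More concretely, first I would apply Theorem~\ref{thm:t-designs} directly: the family $\mathcal{F}$ from Fact~\ref{fact:func} is exactly a $2t$-wise independent family of Boolean functions on $[d]=[2^n]$, and Fact~\ref{fact:perm} (for a sufficiently large constant $C$) yields a distribution $\mathcal{D}$ whose approximation parameter $\delta=O(N^{-Ct})=O(d^{-Ct})$ is at most the required $O(t/d^{4t+1/2})$. Hence the ensemble $\nu$ obtained by sampling $U = P_\pi F_f C$ as in the theorem is a diamond $\epsilon$-approximate $t$-design with $\epsilon=O(t/\sqrt{d})$, which is the design property claimed in the corollary.

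For the efficiency bound, I would invoke Lemma~\ref{cor:t-designs-explicit} with the explicit size/depth parameters of the two classical primitives. Fact~\ref{fact:func} gives $\mathcal{F}$ as an $(s,r)$-explicit family with $s=r=O(tn)$, while the construction from Kassabov analysed in~\cite{random_walks} (cited after Fact~\ref{fact:perm}) gives $\mathcal{D}$ as an $(s,r)$-explicit distribution with $s=O(t\,\poly(n))$ and $r=\tilde O(tn)$. Plugging these into Lemma~\ref{cor:t-designs-explicit}, whose statement bounds the circuit size by $O(n^2+s)$ and the depth by $O(n+r)$, gives that each $U\sim\nu$ can be implemented in size and depth $O(t\,\poly(n))$. (One could state the depth more tightly as $\tilde O(tn)$, but the corollary only claims $O(t\,\poly(n))$, which is immediate.)

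The only potentially subtle step is the first one, namely verifying that the approximation parameter $\delta$ supplied by Fact~\ref{fact:perm} is small enough. This is why Fact~\ref{fact:perm} is stated with a ``for any constant $C>0$'' quantifier: one chooses $C$ so that $CT \ge 4t+1/2+1$ asymptotically (e.g.\ $C=5$ suffices for all sufficiently large $t$), after which $O(d^{-Ct})\le O(t/d^{4t+1/2})$ and Theorem~\ref{thm:t-designs} applies verbatim. Everything else is purely bookkeeping; there is no genuine obstacle beyond checking that the chain of black-box invocations composes correctly.
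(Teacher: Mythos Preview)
Your proposal is correct and matches the paper's approach: the paper does not give an explicit proof of this corollary at all, simply stating it as an immediate consequence of plugging the constructions from Facts~\ref{fact:func} and~\ref{fact:perm} into Theorem~\ref{thm:t-designs} and Lemma~\ref{cor:t-designs-explicit}. Your only unnecessary step is invoking~\cite{random_walks} for the depth bound---since Fact~\ref{fact:perm} already gives size $O(t\,\mathrm{polylog}(N))=O(t\,\poly(n))$ and depth is bounded by size, the claimed $O(t\,\poly(n))$ depth follows directly without the sharper quasilinear bound.
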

As mentioned before, the $n$-dependence of the depth can be improved to quasilinear using~\cite{random_walks}, yielding a depth of $\tilde O(tn)$. 

\subsection{Amplification of approximation error} \label{sec:t_design_amplification}

So far, we have constructed $n$-qubit $t$-designs with diamond error $\eps = O(t/2^{n/2})$.
By repeating our construction $m$ times independently in sequence, we can make the error decay like $\eps^m$.
Choosing $m = \Theta(t)$, this pushes down the error far enough that we can apply \Cref{lem:diamond-to-rel} to convert our diamond-error $t$-designs into relative-error ones.
Of course these repetitions come at the cost of increasing the size and the depth of the circuits: choosing $m = \Theta(t)$ introduces an additional factor $t$ in size and depth, which is why we obtain relative-error $t$-designs with quadratic scaling in $t$.
These amplification techniques are standard in the $t$-design literature (see e.g.~\cite{mele2023introduction}), but we spell out some of the details for completeness.

We begin with a simple auxiliary lemma about the concatenation of Haar moment operators with other moment operators, which follows immediately from the invariance of the Haar measure. 
\begin{lemma} \label{lem:haar_collapse}
Let $X_1, \dots, X_m$ be a collection of random matrices.
Suppose that at least one of the $X_i$ is independent and Haar random.
Then 
\begin{align*}
\cM^{(t)}_{X_m} \circ \cM^{(t)}_{X_m} \circ \cdots \circ \cM^{(t)}_{X_2} \circ \cM^{(t)}_{X_1}(\cdot) = \momhaart{\cdot} \,. 
\end{align*}
\end{lemma}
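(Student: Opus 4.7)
The plan is to reduce the statement to two facts: first, that the composition of moment operators of independent random unitaries equals the moment operator of their product, and second, that by left- and right-invariance of the Haar measure, multiplying a Haar random unitary by any independent random unitary on either side yields a Haar random unitary. Combining these gives the result immediately.

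In more detail, I would first observe that for any two independent random matrices $X$ and $Y$,
\begin{align*}
\cM^{(t)}_{X} \circ \cM^{(t)}_{Y}(\rho) = \E_{X}\!\left[ X^{\otimes t} \E_{Y}\!\left[ Y^{\otimes t} \rho Y^{\otimes t,\dagger} \right] X^{\otimes t,\dagger} \right] = \E_{X,Y}\!\left[(XY)^{\otimes t} \rho (XY)^{\otimes t,\dagger}\right] = \cM^{(t)}_{XY}(\rho),
\end{align*}
where independence was used to combine the two expectations into a joint expectation. Iterating this identity (and assuming, as is implicit in the statement, that the $X_i$ are mutually independent) gives
\begin{align*}
\cM^{(t)}_{X_m} \circ \cdots \circ \cM^{(t)}_{X_1}(\rho) = \cM^{(t)}_{X_m X_{m-1} \cdots X_1}(\rho).
\end{align*}

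Next, let $i^*$ be an index with $X_{i^*}$ independent and Haar distributed, and set $A = X_m \cdots X_{i^*+1}$ and $B = X_{i^*-1} \cdots X_1$ (with the convention that empty products are the identity). Then $X_m \cdots X_1 = A\, X_{i^*}\, B$, and conditional on any realisation of $A$ and $B$, left- and right-invariance of the Haar measure imply that $A X_{i^*} B$ is itself Haar distributed. Hence the unconditional distribution of the product is Haar, and therefore $\cM^{(t)}_{X_m X_{m-1} \cdots X_1} = \momhaart{\cdot}$, completing the proof. There is no real obstacle here — the only subtlety is making sure the independence assumption is invoked correctly so as to pass from a composition of expectations to a single expectation over the product.
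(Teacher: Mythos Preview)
Your proof is correct and matches the paper's approach: the lemma is stated there to follow immediately from the invariance of the Haar measure, with no further detail given. The only minor point is that your reduction to $\cM^{(t)}_{X_m\cdots X_1}$ assumes mutual independence of all the $X_i$, but since each $\cM^{(t)}_{X_i}$ depends only on the marginal of $X_i$ this is harmless (equivalently, one can bypass the product and directly use that $\momhaart{\cdot}$ absorbs any $\cM^{(t)}_{X}$ composed on either side).
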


With this, we can show that concatenating independent samples from a $t$-design results in exponential decay of the error.
\begin{lemma} \label{lem:diamond_amp}
Suppose that $X \sim \cX$ is a diamond $\eps$-approximate $t$-design.
Let $X_1, \dots, X_m$ be independent random unitaries sampled from $\cX$.
Then $X_1 \cdots X_m$ is a diamond $\eps^m$-approximate $t$-design.
\end{lemma}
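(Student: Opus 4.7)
The plan is to show that when we write the $t$-wise twirl of the product as a composition of the individual twirls and then expand each factor around the Haar twirl, all the ``mixed'' terms vanish thanks to Lemma~\ref{lem:haar_collapse}. This isolates a single product of ``error superoperators,'' which can then be bounded using the submultiplicativity of the diamond norm (see~\cref{eqn:diamond_submult}).

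Concretely, I would first use independence of $X_1, \dots, X_m$ to write
\begin{align*}
\cM^{(t)}_{X_1 \cdots X_m} = \cM^{(t)}_{X_1} \circ \cM^{(t)}_{X_2} \circ \cdots \circ \cM^{(t)}_{X_m}\,,
\end{align*}
since $(X_1 \cdots X_m)^{\ot t} = X_1^{\ot t} \cdots X_m^{\ot t}$ and expectations factorise. Then I would define the error superoperators $\Delta_i \deq \cM^{(t)}_{X_i} - \momhaart{\cdot}$, which by assumption satisfy $\|\Delta_i\|_\Diamond \leq \eps$. The key algebraic observation is that, by Lemma~\ref{lem:haar_collapse} applied to any pair $(X_i, \text{Haar})$ (and to two independent Haar samples), one has $\momhaart{\cdot} \circ \cM^{(t)}_{X_i} = \cM^{(t)}_{X_i} \circ \momhaart{\cdot} = \momhaart{\cdot}$ and $\momhaart{\cdot} \circ \momhaart{\cdot} = \momhaart{\cdot}$; consequently,
\begin{align*}
\momhaart{\cdot} \circ \Delta_i = \Delta_i \circ \momhaart{\cdot} = 0\,.
\end{align*}

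Next I would substitute $\cM^{(t)}_{X_i} = \momhaart{\cdot} + \Delta_i$ into the composition and fully expand as a sum over $2^m$ terms, each indexed by a subset $S \subseteq [m]$ choosing $\momhaart{\cdot}$ at positions in $S$ and $\Delta_i$ elsewhere. Whenever $S$ is neither empty nor all of $[m]$, the term contains a factor $\momhaart{\cdot}$ adjacent to some $\Delta_j$ (possibly after collapsing consecutive $\momhaart{\cdot}$'s using idempotence), and therefore vanishes by the observation above. The only surviving terms are $S = [m]$, giving $\momhaart{\cdot}$, and $S = \emptyset$, giving $\Delta_1 \circ \cdots \circ \Delta_m$. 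Hence
\begin{align*}
\cM^{(t)}_{X_1 \cdots X_m} - \momhaart{\cdot} = \Delta_1 \circ \Delta_2 \circ \cdots \circ \Delta_m\,.
\end{align*}
Finally, applying submultiplicativity of the diamond norm (\cref{eqn:diamond_submult}) yields
\begin{align*}
\bigl\|\cM^{(t)}_{X_1 \cdots X_m} - \momhaart{\cdot}\bigr\|_\Diamond \leq \prod_{i=1}^m \|\Delta_i\|_\Diamond \leq \eps^m\,,
\end{align*}
which is the desired bound.

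The argument is essentially algebraic, so there is no real analytic obstacle; the only place that requires care is justifying that every mixed term in the expansion collapses to zero. I would handle this cleanly by induction on the number of $\momhaart{\cdot}$ factors: consecutive Haar factors merge into a single Haar by idempotence, and then any Haar factor adjacent to a $\Delta_i$ kills the term. Given that, the rest is immediate from the diamond-norm inequality.
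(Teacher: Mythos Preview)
Your proof is correct and takes essentially the same approach as the paper: both arguments expand the composition, use \cref{lem:haar_collapse} to collapse or kill the mixed terms, and then apply submultiplicativity of the diamond norm. The paper phrases the expansion as $(\cM^{(t)}_X - \cM^{(t)}_{\Haar})^{\circ m}$ and sums the resulting binomial coefficients, whereas you expand $(\cM^{(t)}_{\Haar} + \Delta)^{\circ m}$ and argue directly that $\cM^{(t)}_{\Haar}\circ\Delta = \Delta\circ\cM^{(t)}_{\Haar} = 0$; these are the same computation.
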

\begin{proof}
The moment operator for the random unitary $X_1 \cdots X_m$ can be written as $(\cM^{(t)}_X)^{\circ m} \deq \underbrace{\cM^{(t)}_X \circ \dots \circ \cM^{(t)}_X}_{\text{$m$ times}}$.
Observe that 
\begin{align*}
(\cM^{(t)}_X - \cM^{(t)}_{\Haar})^{\circ m} = (\cM^{(t)}_X)^{\circ m} + \sum_{i = 1}^m \binom{m}{i} (-1)^i \cM^{(t)}_{\Haar} = (\cM^{(t)}_X)^{\circ m} - \cM^{(t)}_{\Haar} \,.
\end{align*}
The first equality uses the fact that when expanding out the product, all terms except $(\cM^{(t)}_X)^{\circ m}$ have at least one copy of $\cM^{(t)}_{\Haar}$ in them, which allows us to apply \Cref{lem:haar_collapse}.
The lemma now follows directly from the submultiplicativity of the diamond norm (\Cref{eqn:diamond_submult}).
\end{proof}

Applying \Cref{lem:diamond_amp} and \Cref{lem:diamond-to-rel} to the $t$-designs from \Cref{cor:t-designs-explicit}, we get the following.

\begin{corollary} \label{lem:error_amplified}
Let $\nu$ be the ensemble of $n$-qubit unitaries from \Cref{cor:t-designs-explicit} (instantiated with $(s,r)$-explicit families of functions and permutations).
Let $U \sim \nu^{\circ m}$ be the ensemble of $m$-fold products of unitaries from $\nu$, i.e.~$U = U_1 \cdots U_m$ for $U_i \sim \nu$.
Then, each $U \sim \nu^{\circ m}$ can be implemented in size $O(m \cdot(n^2 + s))$ and depth $O(m \cdot(n+r))$ and the ensemble $\nu^{\circ m}$ is \begin{enumerate}
\item a diamond $\eps$-approximate $t$-design with $\eps = O(t^m \cdot 2^{-nm/2})$, and 
\item a relative-error $\eps$-approximate $t$-design with $\eps = O(t^m \cdot 2^{2nt - nm/2})$.
\end{enumerate}
In particular, this means that for all $\eps > 0$ and $t \leq 2^{-n/4}$, we have \begin{enumerate}
\item diamond $\eps$-approximate $t$-designs in size $O(t (n^2 + s) + t \log1/\eps)$ and depth $O(t (n + r) + t \log1/\eps)$,
\item relative-error $\eps$-approximate $t$-designs in size $O(t^2 (n^2 + s) + t \log1/\eps)$ and depth $O(t^2 (n + r) + t^2 \log1/\eps)$.
\end{enumerate}
\end{corollary}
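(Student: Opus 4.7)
The plan is to combine three ingredients already established in the excerpt: the parameters of the base ensemble $\nu$ given by \Cref{cor:t-designs-explicit}, the amplification lemma \Cref{lem:diamond_amp}, and the diamond-to-relative conversion \Cref{lem:diamond-to-rel}. No new technical machinery is needed; the argument is essentially bookkeeping.

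For the opening claim about size and depth of $\nu^{\circ m}$, I would argue directly from the construction. Since $U \sim \nu^{\circ m}$ is by definition a product $U_1 \cdots U_m$ of $m$ independent samples from $\nu$, we implement $U$ by running the circuits for $U_1,\dots,U_m$ in sequence. Because each $U_i$ admits a circuit of size $O(n^2+s)$ and depth $O(n+r)$ by \Cref{cor:t-designs-explicit}, concatenation yields size $O(m(n^2+s))$ and depth $O(m(n+r))$.

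For item (1), I would apply \Cref{lem:diamond_amp} to the base ensemble: \Cref{cor:t-designs-explicit} says $\nu$ is a diamond $\eps_0$-approximate $t$-design with $\eps_0 = O(t/\sqrt{d}) = O(t \cdot 2^{-n/2})$, so $\nu^{\circ m}$ is a diamond $\eps_0^m$-approximate $t$-design, which gives exactly $\eps = O(t^m 2^{-nm/2})$. For item (2), I would then invoke \Cref{lem:diamond-to-rel} on the bound just obtained, paying a multiplicative factor of $d^{2t} = 2^{2nt}$ to promote diamond error to relative error, which gives $\eps = O(t^m 2^{2nt-nm/2})$.

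The two ``in particular'' statements then follow by choosing $m$ as a function of the target error $\eps$. Under the hypothesis $t \leq 2^{n/4}$ we have $\log(2^{n/2}/t) \geq n/4$, so for the diamond case taking $m = O(1 + \log(1/\eps)/n)$ brings the error below $\eps$, while for the relative-error case one must additionally absorb the $2^{2nt}$ factor, which requires $m = O(t + \log(1/\eps)/n)$. Plugging these values into the size and depth bounds from the first part (and using loose upper bounds $m = O(t + \log(1/\eps))$ to match the form of the stated expression) yields the claimed $O(t(n^2+s) + t\log 1/\eps)$ size and $O(t(n+r) + t\log 1/\eps)$ depth for diamond error, and the analogous bounds with $t^2$ in the relative-error case. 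I do not foresee any obstacle: the only slightly delicate point is verifying that the hypothesis $t \leq 2^{n/4}$ is exactly what is required to make $n/2 - \log t = \Theta(n)$, which is what lets $m$ scale only logarithmically (rather than polynomially) in $1/\eps$.
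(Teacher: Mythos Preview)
Your proposal is correct and follows essentially the same approach as the paper: apply \Cref{lem:diamond_amp} for item (1), then \Cref{lem:diamond-to-rel} for item (2), and finally choose $m$ appropriately (the paper writes this as $m = \max\{1, O(\log(1/\eps)/n)\}$, glossing over the extra $O(t)$ term needed in the relative-error case that you correctly identify). Your treatment of the ``in particular'' statements is in fact more careful than the paper's one-line justification.
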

\begin{proof}
The first two statements follow immediately from \Cref{cor:t-designs-linear} by means of  \Cref{lem:diamond_amp} and \Cref{lem:diamond-to-rel}.
The second two statements follow from the first two simply by choosing $m = \max\{1, O(\frac{\log 1/\eps}{n})\}$ large enough that the desired $\eps$ is achieved.
\end{proof}

Again, using the upcoming work~\cite{random_walks} the depth can be made quasilinear in $n$.

\section{Pseudorandom unitaries with non-adaptive security} \label{sec:pru_formal}

We first give a formal definition of PRUs, as proposed by Ji, Liu, and Song~\cite{ji2018pseudorandom}. Then, we prove that if we simply replace the random function in the $F$-operator and the random permutation in the $P$-operator by their pseudorandom counterparts, the resulting ensemble (described in \Cref{eq:intro-pru}) is a non-adaptive pseudorandom unitary.

\begin{definition}[Pseudorandom unitary]\label{def:PRU} Let $n \in \N$ be the security parameter. An infinite sequence $\mathcal{U} = \{\mathcal{U}_n\}_{n \in \N}$ of $n$-qubit unitary ensembles $\mathcal{U}_n = \{U_k\}_{k \in \mathcal{K}}$ is a pseudorandom unitary if it satisfies the following conditions.
\begin{itemize}
    \item (Efficient computation) There exists a polynomial-time quantum algorithm $\mathcal{Q}$ such that for all keys $k \in \mathcal{K}$, where $\mathcal{K}$ denotes the key space, and any $\ket{\psi} \in (\C^2)^{\ot n}$, it holds that
    $$
    \mathcal{Q}(k,\ket{\psi}) = U_k \ket{\psi}\,.
    $$

    \item (Pseudorandomness) The unitary $U_k$, for a random key $k \sim \algo K$, is computationally indistinguishable from a Haar random unitary $U \sim \HaarMeasure(2^n)$. In other words,  for any QPT algorithm $\algo A$, it holds that $$
    \vline\, \underset{k \sim \algo K}{\Pr}[\algo A^{U_k}(1^\lambda)=1] - \underset{U \sim \Haar}{\Pr}[\algo A^{U}(1^\lambda) =1]  \,\vline \,\leq \, \negl(n)\,.
    $$ 

We call $\mathcal{U} = \{\mathcal{U}_n\}_{n \in \N}$ a \emph{non-adaptive} pseudorandom unitary if $\algo A$ is only allowed to make parallel queries to the unitary $U_k$ (or $U$ in the Haar random case).
\end{itemize}
Note that, whenever we write $\mathcal{U}_n = \{U_k\}_{k \in \mathcal{K}}$, it is implicit that the key space $\mathcal{K}$ depends on the security parameter $n \in \N$, and that the length of each key $k \in \mathcal{K}$ is polynomial in $n$.
\end{definition}

The main result of this section is that the construction in \Cref{eq:intro-pru} is indeed a non-adaptive PRU.

\begin{theorem} \label{thm:pru_security}
Let $n \in \N$ be the security parameter. Then, the ensemble
$\mathcal{U}_n = \{U_k\}_{k \in \mathcal{K}}$ of $n$-qubit unitary operators defined in \Cref{eq:intro-pru} is a non-adaptive pseudorandom unitary when instantiated with ensembles of $n$-bit (quantum-secure) PRFs and PRPs.
\end{theorem}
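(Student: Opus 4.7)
The proof plan is a standard three-hop hybrid argument that reduces non-adaptive PRU security of the construction $U_k = P_{k_1} F_{k_2} C_{k_3}$ to (i) the quantum-security of the underlying PRP and PRF, and (ii) the diamond-distance $t$-design property of the ``fully random'' $PFC$ ensemble, already established in \Cref{thm:pfc-ensemble}. Fix a non-adaptive QPT distinguisher $\cA$ which makes a single parallel query to $U^{\ot t}$ on an arbitrary (possibly entangled) input state, where $t = \poly(n)$; the goal is to bound its advantage in distinguishing $U = U_k$ for uniform $k$ from $U$ Haar-random.

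I introduce three hybrids interpolating between the PRU and a Haar-random unitary: $H_0$ is $(P_{k_1} F_{k_2} C_{k_3})^{\ot t}$, i.e.~the real scheme; $H_1$ replaces the pseudorandom permutation $P_{k_1}$ by a uniformly random permutation operator $P$; $H_2$ further replaces the pseudorandom phase $F_{k_2}$ by a uniformly random binary phase operator $F$; and $H_3$ is $U^{\ot t}$ for $U$ Haar-random. For $H_0 \capprox H_1$, the reduction to quantum-PRP security proceeds as follows: a PRP-distinguisher $\cB$ samples $k_2, k_3$ itself, receives $\cA$'s input state, applies $C_{k_3}^{\ot t}$ and $F_{k_2}^{\ot t}$ directly, and implements $P_\pi^{\ot t}$ using $t$ parallel queries to its external oracle; each in-place $P_\pi: \ket{x} \mapsto \ket{\pi(x)}$ is realized via one forward and one inverse query to $\pi$ by the standard compute-swap-uncompute trick, so $\cB$ is QPT and makes $O(t)$ parallel queries. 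The transition $H_1 \capprox H_2$ is an entirely analogous reduction to quantum-PRF security (with no inverse queries needed since $F_f = F_f^\dagger$). Finally, $H_2$ is statistically close to $H_3$ by \Cref{thm:pfc-ensemble}: the diamond distance between the corresponding $t$-twirl channels is $O(t/\sqrt{2^n}) = \negl(n)$, so even a computationally unbounded distinguisher has negligible advantage. Summing the three contributions by the triangle inequality yields the theorem.

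There is no substantial obstacle since all of the heavy lifting lives in \Cref{thm:pfc-ensemble}; what remains is essentially bookkeeping. The only mild technicalities are that the $H_0 \leftrightarrow H_1$ step relies on a \emph{strong} quantum PRP providing both forward and inverse oracle access (which exists under quantum-secure one-way functions via \cite{zhandry2016note}), and that the random Clifford $C_{k_3}$ must be sampled in polynomial time for the reductions to be QPT, which is standard \cite{berg2021simple}. Both facts ensure that the hybrids $H_0$ and $H_1$ are themselves efficiently implementable, so the PRF reduction in the second step is also QPT.
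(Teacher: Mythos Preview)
Your proposal is correct and follows essentially the same approach as the paper: a hybrid argument replacing the pseudorandom permutation and function by their fully random counterparts (using quantum PRP/PRF security), followed by invoking \Cref{thm:pfc-ensemble} to pass from the fully random $PFC$ ensemble to Haar. The paper compresses your $H_0\to H_1\to H_2$ into a single step and does not spell out the strong-PRP (inverse-oracle) technicality you flag, but the logic is the same.
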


\begin{proof}
From the construction, it is clear that a random unitary $U_k$ from the above family can be sampled efficiently (see e.g.~\cite{berg2021simple} for simple way to sample a uniform Clifford unitary). To argue security against any non-adaptive algorithm making $t= \poly(n)$ queries, it suffices to show that for any initial state $\ket{\psi}_{\reg{A} \reg{E}}$, where register  $\reg A \cong ((\C^2)^{\ot n})^{\ot t}$ is on $nt$ qubits, and $\reg{E}$ is an arbitrary workspace register, the density matrices
\[ \rho:= \E_{k \in \cK} (U_k)^{\ot t}_{\reg A} \proj{\psi}_{\reg{AE}} (U_k)^{\ot t, \dagger}_{\reg A} \text{\hphantom{t} and } \rho^{\rm hr}: = \E_{U \sim \Haar} U^{\ot t}_{\reg A} \proj{\psi}_{\reg{AE}} U^{\ot t, \dagger}_{\reg A}, \]
are computationally indistinguishable with at most negligible advantage, since that is the general form of a non-adaptive distinguisher. From the post-quantum security of the PRF and PRP families assumed in \Cref{eq:intro-pru}, it follows immediately that if we replace the pseudorandom permutation and function with their fully random counterparts to obtain the ``fully random'' state $\rho^{\rm fr}$, then $\rho^{\rm fr}$ is computationally indistinguishable from $\rho$ up to negligible advantage in $n$. Furthermore, since the $PFC$ ensemble is a diamond-distance $t$-design, using the error bounds from \Cref{thm:pfc-ensemble}, it follows that $\|\rho^{\rm fr} - \rho^{\rm hr}\|_1 \le O(t/\sqrt{2^n})$. This is also negligible in $n$ since $t=\poly(n)$ and the computational indistinguishability of $\rho$ and $\rho^{\rm hr}$ follows. 
\end{proof}

\section{Pseudorandom isometries with adaptive security} \label{sec:pri_adaptive_security}

One drawback of our PRU construction in \cref{sec:pru_formal} is that we are only able to prove non-adaptive security.
This is a consequence of the fact that the analysis of the $PFC$ ensemble achieves diamond-error, not relative error.
In contrast to the $t$-design amplification in \cref{sec:t_design_amplification}, we cannot simply amplify our PRU construction to achieve relative error.
This is because the number of iterations in the amplification in \cref{sec:t_design_amplification} depends on the number of queries $t$, but for our PRUs we do not have an a priori bound on $t$.

However, it turns that if we relax the notion of PRUs to PRIs, a simple modification of our construction is able to achieve adaptive security.
We have already given a high-level of this construction in \cref{sec:intro_adaptive_pri}, so we proceed with the formal statements here.

The formal definition of PRIs is entirely analogous to \cref{def:PRU}, but we spell it out again for the sake of completeness.

\begin{definition}[Pseudorandom isometry]\label{def:PRI} Let $n \in \N$ be the security parameter and choose an integer function $s(n) \in [0,n)$. An infinite sequence $\mathcal{V} = \{\mathcal{V}_n\}_{n \in \N}$ of ensembles $\mathcal{V}_n = \{V_k: \C^{2^{n-s(n)}} \to \C^{2^n}\}_{k \in \mathcal{K}}$ of isometries from $n - s(n)$ qubits to $n$ qubits is a pseudorandom isometry if it satisfies the following conditions.
\begin{itemize}
    \item (Efficient computation) There exists a polynomial-time quantum algorithm $\mathcal{Q}$ such that for all keys $k \in \mathcal{K}$, where $\mathcal{K}$ denotes the key space, and any $\ket{\psi} \in (\C^2)^{\ot n - s(n)}$, it holds that
    $$
    \mathcal{Q}(k,\ket{\psi}) = V_k \ket{\psi}\,.
    $$

    \item (Pseudorandomness) The isometry $V_k$, for a random key $k \sim \algo K$, is computationally indistinguishable from a Haar random isometry $V \sim \HaarMeasure(2^{n-s(n)}, 2^n)$.\footnote{By a Haar random isometry $V \sim \HaarMeasure(2^{n-s(n)}, 2^n)$ we mean an isometry sampled as follows. First sample a Haar random unitary $U \sim \HaarMeasure(2^n)$ on $n$ qubits, then set $V \ket{\psi} \deq U(\ket{\psi}\ket{0}^{\ot s(n)})$.
    In other words, to apply a Haar random isometry to an $(n-s(n))$-qubit input state $\ket{\psi}$, first pad $\ket{\psi}$ with $s(n)$ 0-kets and then apply a Haar random $n$-qubit unitary. We also note that the choice of padding with $\ket{0}^{\ot s(n)}$ is arbitrary -- due to the invariance of the Haar measure we could use any fixed $s(n)$-qubit state for the padding.} In other words,  for any QPT algorithm $\algo A$, it holds that $$
    \vline\, \underset{k \sim \algo K}{\Pr}[\algo A^{V_k}(1^\lambda)=1] - \underset{V \sim \Haar(2^{n-s(n)},2^n)}{\Pr}[\algo A^{V}(1^\lambda) =1]  \,\vline \,\leq \, \negl(n)\,.
    $$ 
\end{itemize}
Note that, whenever we write $\mathcal{V}_n = \{V_k\}_{k \in \mathcal{K}}$, it is implicit that the key space $\mathcal{K}$ depends on the security parameter $n \in \N$, and that the length of each key $k \in \mathcal{K}$ is polynomial in $n$.
As in \cref{def:PRU}, we can also define a notion of non-adaptive security, but we do not do so here as we will show full adaptive security.
\end{definition}

Before stating the main result, we recall our PRI construction from \Cref{def:pri_construction}, which applied the $PF$ operator to input state after appending it with ancillas in the $\ket{+}^{s(n)}$ state where $s(n) = \omega(\log n)$. As mentioned in the proof overview earlier, we do not require the random Cliffords here (because the fixed $\ket{+}$-input already ensures that a suitable distinct string condition holds).

\begin{theorem} \label{thm:pri_adaptive}
The isometries defined in \Cref{def:pri_construction} are pseudorandom isometries with adaptive security. 
\end{theorem}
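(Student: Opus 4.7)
The plan has two main phases: first reduce to the information-theoretic setting using the security of the underlying PRP and PRF; then establish adaptive statistical closeness between the truly random $PF$-ensemble applied to $\ket{+}^{\ot s(n)}$-padded inputs and the Haar random isometry.

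\textbf{Phase 1 (computational reduction).} By a standard hybrid argument leveraging the post-quantum security of the PRP family $\{P_{k_1}\}$ and the PRF family $\{F_{k_2}\}$ (exactly as in \Cref{thm:pru_security}), it suffices to show that for every adaptive distinguisher making $t = \poly(n)$ queries, the ensemble of isometries $\ket{\psi} \mapsto PF(\ket{\psi}\otimes \ket{+}^{\ot s(n)})$ --- with $P$ a uniformly random permutation and $F$ a uniformly random binary phase --- is statistically close to the Haar random isometry ensemble $V \sim \Haar(2^{n-s(n)},2^n)$.

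\textbf{Phase 2 (one-sided relative-error design on the distinct subspace).} The key technical enabler is that a careful re-examination of the proof of \Cref{thm:pfc-ensemble} already gives, for any state $\ket{\phi}_{\reg{A\tilde E}}$ with $\Lambda_{\reg A}\ket{\phi}_{\reg{A\tilde E}} = \ket{\phi}_{\reg{A\tilde E}}$, the one-sided operator inequality
\[
\momhaart{\proj{\phi}_{\reg{A\tilde E}}} \le (1 + O(t^2/d))\,\mompft{\proj{\phi}_{\reg{A\tilde E}}}.
\]
This follows directly by comparing the explicit expressions in \Cref{lem:haar-twirl} and \Cref{lem:pf-twirl}: both are block-diagonal across the Schur--Weyl subspaces $P_\lambda$ with identical ``Specht-side'' factors, and the ``Weyl-side'' factors $\rho_\lambda$ and $\sigma_\lambda$ are maximally mixed on $W_\lambda$ versus on a subspace of $W_\lambda$ whose dimension ratio is $1 - O(t^2/d)$ by \Cref{cm:ratio}. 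Thus on the distinct subspace the $PF$ ensemble is a left-one-sided relative-error $t$-design with error $O(t^2/d)$.

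\textbf{Phase 3 (gate-teleportation with the distinct-subspace resource state, the main obstacle).} We convert adaptivity into non-adaptivity via gate teleportation, but using the unnormalised maximally entangled state $\ket{\Omega_\Lambda}$ on the distinct subspace of $(\C^d)^{\ot t}\otimes (\C^d)^{\ot t}$ in place of the usual $\ket{\Omega}$. Concretely, for any adaptive $t$-query algorithm one writes out its final state as a circuit acting on the resource $V^{\ot t}\ket{\Omega}$ followed by Bell-measurement-based post-selection; substituting $\ket{\Omega_\Lambda}$ for $\ket{\Omega}$ produces the same circuit with an additional projector of the form $(\id \otimes \Lambda)$ inserted at each ``teleportation step''. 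The remaining task is to show two things: (i) the relative-error design property from Phase~2 lets us replace the $V^{\ot t}\proj{\Omega_\Lambda}V^{\ot t,\dagger}$ resource state with its Haar analogue up to negligible error (absorbing the $1/\poly$ post-selection factor into the relative-error slack, which is exponentially small); and (ii) for input states of the form $\ket{\psi_i}\otimes\ket{+}^{\ot s(n)}$, the inserted distinct-subspace projectors act essentially as the identity. Step (ii) is the genuinely new piece and is where the $s(n) = \omega(\log n)$ ancilla is used: each padded input has its amplitude spread over $2^{s(n)}$ computational-basis strings in the ancilla register, so the probability that any two queries produce colliding ancilla strings (and hence fall outside the distinct subspace) is at most $O(t^2/2^{s(n)}) = \negl(n)$. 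Formally, one writes the effect of $\Lambda - \id$ on a padded input using a Fock-style expansion in the ancilla and bounds the operator-norm effect of $\Lambda - \id$ on the relevant subspace via a direct computation.

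\textbf{Putting it together.} Combining Phases~2 and~3 bounds the adaptive distinguishing advantage by $O(t^2/\sqrt{d} + t^2/2^{s(n)})$ which is negligible in $n$ for $t = \poly(n)$, $d = 2^n$, and $s(n) = \omega(\log n)$; together with Phase~1 this proves \Cref{thm:pri_adaptive}. The hardest step is Phase~3(ii): quantifying how the padding tames the non-commutation between the adversary's adaptive operations and the distinct-subspace projector inserted by gate teleportation with $\ket{\Omega_\Lambda}$. This is precisely the step the authors flag as the reason isometries (rather than full unitaries) are needed.
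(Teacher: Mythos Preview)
Your overall strategy matches the paper's: computational reduction, then a one-sided relative-error bound for the $PF$ ensemble on the distinct subspace, then gate teleportation combined with a collision bound on the $\ket{+}^{\ot s(n)}$ ancillas. Two issues, one substantive and one organizational.

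\textbf{The operator inequality in Phase 2 is stated in the wrong direction, and that direction is false.} You claim $\momhaart{\proj{\phi}} \le (1+O(t^2/d))\,\mompft{\proj{\phi}}$ on the distinct subspace. But from \Cref{lem:haar-twirl} and \Cref{lem:pf-twirl}, the Weyl-side factors are $\rho_\lambda = \id_{W_\lambda}/\dim W_\lambda$ versus $\sigma_\lambda = \Lambda^{(\lambda)}_{W_\lambda}/\Tr[\Lambda^{(\lambda)}_{W_\lambda}]$, and $\sigma_\lambda$ is supported on a \emph{strict} subspace of $W_\lambda$. Hence $\rho_\lambda \le c\,\sigma_\lambda$ fails for every finite $c$, so your inequality cannot hold. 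What does hold (and what the paper uses as \Cref{lem:rel_error_distinct}) is the reverse:
\[
\mompft{\proj{\phi}} \;\le\; (1+O(t^2/d))\,\momhaart{\proj{\phi}},
\]
obtained from $\Lambda^{(\lambda)}_{W_\lambda} \le \id_{W_\lambda}$ and \Cref{cm:ratio}. Fortunately, this correct direction is exactly what the rest of the argument needs, so the fix is local.

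\textbf{Phase 3 is right in spirit but the mechanism is not the one you describe.} There is no ``$1/\poly$ post-selection factor'' to absorb; the teleportation post-selection factor is $d^{\Theta(t)}$, and the paper avoids dealing with it altogether. The paper first projects the fixed ancilla inputs $\ket{+}_{\reg{B_1\cdots B_t}}$ onto the distinct subspace (your collision bound, giving $\|\ket{A(V)}-\ket{A(V)_\Lambda}\| = \negl(n)$), and only \emph{afterwards} expresses $\proj{A(V)_\Lambda}$ as $\cE(\cdot)$ applied to a resource state via \Cref{lem:general_gate_teleportation}. The key point is then purely structural: $\cE$ is completely positive, so the operator inequality from \Cref{lem:rel_error_distinct} is preserved through $\cE$, yielding $\E_{PF}\proj{A(V)_\Lambda}\le (1+O(t^2/d))\E_{\Haar}\proj{A(V)_\Lambda}$. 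One finishes with the variational form of trace distance plus tracelessness of the difference of two states. No post-selection accounting enters at any stage. Your Step~(ii) intuition (ancilla collisions are negligible) is correct and is exactly how the reduction to the distinct subspace works, but it happens \emph{before} teleportation, on the fixed $\ket{+}$ inputs only, not as projectors interleaved with the adversary's adaptive unitaries.
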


The proof of \cref{thm:pri_adaptive} proceeds by performing a reduction to the distinct subspace, then performing gate teleportation to reduce to the non-adaptive case with post-selection, which can then be analyzed with relative error $t$-designs. Before we execute this strategy to prove \cref{thm:pri_adaptive}, we need two auxiliary statements. The first one of these is a simple modification of \cref{thm:pfc-ensemble} and shows that on the distinct subspace, the $PF$ ensemble is a one-sided relative error $t$-design (\cref{def:relative-error-design}) for superpolynomial $t$.

\begin{lemma} \label{lem:rel_error_distinct}
Let $\Lambda$ be the projector onto the distinct string subspace of $\reg{A} \cong (\C^d)^{\ot t}$.
Let $\reg{\tilde E}$ be an arbitrary register and $\ket{\phi}_{\reg{A {\tilde E}}}$ a state such that $(\Lambda_{\reg A} \ot \id_{\reg{\tilde E}}) \ket{\phi}_{\reg{A {\tilde E}}} = \ket{\phi}_{\reg{A {\tilde E}}}$.
Then 
\begin{align*}
\mompft{\proj{\phi}_{\reg{A {\tilde E}}}} \leq (1 + O(t^2/d)) \momhaart{ \proj{\phi}_{\reg{A {\tilde E}}}}) \,,
\end{align*}
where as usual the twirling channels only act on register $\reg A$.
\end{lemma}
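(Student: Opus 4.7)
The plan is to follow essentially the same template as the proof of \cref{thm:pfc-ensemble}, but to compare the two twirls at the operator level (rather than in trace distance) using the explicit Schur--Weyl decompositions we have already derived. The key observation is that, on the distinct subspace, both twirls have the same Specht-module component, and only the Weyl-module piece differs: the Haar twirl gives the maximally mixed state on the whole of $W_\lambda$, whereas the $PF$ twirl only mixes over a subspace of $W_\lambda$. Since mixing over a smaller subspace yields an operator that is a scalar multiple of a restriction of the identity, the ratio between the two reduces to a dimension ratio that we have already bounded.

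Concretely, I would first invoke \cref{lem:haar-twirl} and \cref{lem:pf-twirl} (both of which apply since $\ket{\phi}_{\reg{A\tilde E}}$ is supported on the distinct subspace in $\reg A$) to write
\begin{align*}
\momhaart{\proj{\phi}_{\reg{A\tilde E}}} &= \sum_{\lambda \vdash t} \rho_\lambda \otimes \Tr_{W_\lambda}[\ind_{P_\lambda}\proj{\phi}_{\reg{A\tilde E}}\ind_{P_\lambda}], \\
\mompft{\proj{\phi}_{\reg{A\tilde E}}} &= \sum_{\lambda \vdash t} \sigma_\lambda \otimes \Tr_{W_\lambda}[\ind_{P_\lambda}\proj{\phi}_{\reg{A\tilde E}}\ind_{P_\lambda}],
\end{align*}
with $\rho_\lambda = \ind_{W_\lambda}/\Tr[\ind_{W_\lambda}]$ and $\sigma_\lambda = \Lambda^{(\lambda)}_{W_\lambda}/\Tr[\Lambda^{(\lambda)}_{W_\lambda}]$. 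Since $\Lambda^{(\lambda)}_{W_\lambda}$ is a projector onto a subspace of $W_\lambda$, the operator inequality $\Lambda^{(\lambda)}_{W_\lambda} \leq \ind_{W_\lambda}$ gives
\[
\sigma_\lambda \;\leq\; \frac{\ind_{W_\lambda}}{\Tr[\Lambda^{(\lambda)}_{W_\lambda}]} \;=\; \frac{\Tr[\ind_{W_\lambda}]}{\Tr[\Lambda^{(\lambda)}_{W_\lambda}]}\,\rho_\lambda.
\]

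Next I would apply \cref{cm:ratio} to bound the dimension ratio: it shows $\Tr[\Lambda^{(\lambda)}_{W_\lambda}]/\Tr[\ind_{W_\lambda}] \geq 1 - O(t^2/d)$, and hence
\[
\frac{\Tr[\ind_{W_\lambda}]}{\Tr[\Lambda^{(\lambda)}_{W_\lambda}]} \;\leq\; \frac{1}{1-O(t^2/d)} \;\leq\; 1 + O(t^2/d),
\]
so that $\sigma_\lambda \leq (1+O(t^2/d))\,\rho_\lambda$ uniformly in $\lambda$.

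Finally, to assemble the global operator inequality, I note that $\Tr_{W_\lambda}[\ind_{P_\lambda}\proj{\phi}\ind_{P_\lambda}] \geq 0$, so tensoring the per-$\lambda$ inequality with this positive operator preserves it; and the subspaces $P_\lambda$ are mutually orthogonal, so the summands of both twirls live in orthogonal blocks and the block-by-block inequality lifts to an inequality of the full sums. This yields
\[
\mompft{\proj{\phi}_{\reg{A\tilde E}}} \;\leq\; (1+O(t^2/d))\,\momhaart{\proj{\phi}_{\reg{A\tilde E}}},
\]
as required. There is no real obstacle here beyond keeping track of the fact that the Haar twirl expression from \cref{lem:haar-twirl} actually holds for arbitrary input states (as noted in the text after that lemma), while the $PF$ twirl expression from \cref{lem:pf-twirl} crucially uses the distinct-subspace assumption, which is built into the hypothesis. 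The absence of the Clifford twirl here is also harmless, because on the distinct subspace the reduction performed by the Clifford twirl in \cref{thm:pfc-ensemble} is simply not needed.
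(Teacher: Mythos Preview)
Your proposal is correct and follows essentially the same approach as the paper's own proof: invoke \cref{lem:haar-twirl} and \cref{lem:pf-twirl}, use $\Lambda^{(\lambda)}_{W_\lambda}\leq \ind_{W_\lambda}$ together with \cref{cm:ratio} to get $\sigma_\lambda \leq (1+O(t^2/d))\rho_\lambda$, and then tensor with the positive Specht-module piece. The only difference is cosmetic---you spell out the block-orthogonality justification a bit more explicitly than the paper does.
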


\begin{proof}
From \cref{lem:haar-twirl} and \cref{lem:pf-twirl}, we have the following explicit expressions: 
\begin{align*}
\mompft{\proj{\phi}_{\reg{A \tilde E}}} &= \sum_{\lambda \vdash t} \sigma_{\lambda} \otimes \Tr_{W_\lambda}[\ind_{P_\lambda} \proj{\phi}_{\reg{A \tilde E}}\ind_{P_\lambda}]\,,\\
\momhaart{\proj{\phi}_{\reg{A \tilde E}}} &= \sum_{\lambda \vdash t} \rho_{\lambda} \otimes \Tr_{W_\lambda}[\ind_{P_\lambda} \proj{\phi}_{\reg{A\tilde E}}\ind_{P_\lambda}] \,.
\end{align*}
Here, $\sigma_{\lambda} = \dfrac{\Lambda^{(\lambda)}_{W_\lambda}}{\Tr[\Lambda^{(\lambda)}_{W_\lambda}]} $ is the maximally mixed state on the subspace $\supp(\Lambda^{(\lambda)}_{W_\lambda}) \cap W_{\lambda}$ and $\rho_{\lambda} = \dfrac{\ind_{W_\lambda}}{\Tr[\ind_{W_\lambda}]} $ is the maximally mixed state on $W_{\lambda}$.

Since $\supp(\Lambda^{(\lambda)}_{W_\lambda}) \cap W_{\lambda}$ is a subspace of $W_\lambda$, we have the operator inequality $\Lambda^{(\lambda)}_{W_\lambda} \leq \ind_{W_\lambda}$.
This implies that for every $\lambda$,
\begin{align*}
\sigma_{\lambda} \leq \frac{\Tr[\id_{W_{\lambda}}]}{\Tr[\Lambda^{(\lambda)}_{W_\lambda}]} \rho_{\lambda} \leq \left( 1 + O(t^2/d) \right) \rho_{\lambda} \,.
\end{align*}
The second inequality uses \cref{cm:ratio} and the fact that $\rho_{\lambda} \geq 0$.

Since this holds for every $\lambda$ and taking the taking the tensor product with $\Tr_{W_\lambda}[\ind_{P_\lambda} \proj{\phi}_{\reg{A\tilde E}}\ind_{P_\lambda}]$ is an operator-monotone operation, the lemma follows.
\end{proof}

The second auxiliary lemma for the proof of \cref{thm:pri_adaptive} shows what happens in quantum gate teleportation when we do not use the ``correct'' resource state.
This is a straightforward calculation; the main difficulty is the notation required to distinguish the different registers from one another.
Both the statement and the proof of the lemma make heavy use of the unnormalised maximally entangled state between two registers $\reg{A} \cong \reg{A'}$: 
\begin{align*}
\ket{\Omega}_{\reg{AA'}} = \sum_{i = 1}^{|A|} \ket{i}_{\reg A} \ket{i}_{\reg A'}\,.
\end{align*}

\begin{lemma} \label{lem:general_gate_teleportation}
For $i = 1, \dots, t$, consider quantum registers $\reg{C_i}$ and $\reg{C_i'}$, all of size $d$.
Fix a collection of $d$-dimensional unitaries $A_i$ and a $d$-dimensional quantum state $\ket{\psi}_{\reg{C_0}}$.
Define the vectors
\begin{align*}
\ket{\Omega_{A_i}}_{\reg{C_i' C_{i+1}}} = ((A_i^\dagger)_{\reg{C_i'}} \ot \id_{\reg{C_{i+1}}}) \ket{\Omega}_{\reg{C_i' C_{i+1}}}\,.
\end{align*}
Define the following superoperator: 
\begin{multline*}
\cE(\phi_{\reg{C_1 C_1' \dots C_{t} C'_t}}) = \left( \bra{\Omega}_{C_0 C_1} \bra{\Omega_{A_1}}_{\reg{C_1' C_2}} \ot \dots \ot \bra{\Omega_{A_{t-1}}}_{\reg{C_{t-1}' C_{t}}} \ot (A_t)_{\reg{C'_t}} \right) \left( \proj{\psi}_{\reg{C_0}} \ot \phi_{\reg{\reg{C_1 C_1' \dots C_{t} C'_t}}} \right) \\ \left( \ket{\Omega}_{C_0 C_1} \ket{\Omega_{A_1}}_{\reg{C_1' C_2}} \ot \dots \ot \ket{\Omega_{A_{t-1}}}_{\reg{C_{t-1}' C_{t}}} \ot (A_t^\dagger)_{\reg{C'_t}} \right) \,.
\end{multline*}
Let $S \subseteq [d]^t$,  and overloading the notation define 
\begin{align*}
\ket{\Omega_S}_{\reg{C_1 C_1' \dots C_{t} C'_t}} = \sum_{(x_1, \dots, x_t) \in S} \ket{x_1}_{\reg{C_1}}\ket{x_1}_{\reg{C'_1}} \ot \dots \ot \ket{x_t}_{\reg{C_t}}\ket{x_t}_{\reg{C'_t}} \,.
\end{align*}
Then for any collection of $d$-dimensional 
unitaries $U_i$, 
\begin{align*}
&\cE \left( \Big( \id_{\reg{C_1}} \ot (U_1)_{\reg{C'_1}} \ot \dots  \id_{\reg{C_t}} \ot (U_t)_{\reg{C'_t}} \Big) \proj{\Omega_S} \Big( \id_{\reg{C_1}} \ot (U_1)_{\reg{C'_1}} \ot \dots  \id_{\reg{C_t}} \ot (U_t)_{\reg{C'_t}} \Big)^\dagger \right) \\
&= \mathrm{proj} \left(\sum_{(x_1, \dots, x_t) \in S} A_t U_t \proj{x_t} A_{t-1} U_{t-1} \proj{x_{t-1}} \dots A_1 U_1 \proj{x_1} \ket{\psi}_{\reg{C_0}}  \right) \,.
\end{align*}
Here, $\mathrm{proj}(\ket{\phi})$ is shorthand for $\proj{\phi}$.
\end{lemma}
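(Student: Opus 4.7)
The lemma is a direct calculation: I unfold the definition of each object, push the projections through from left to right, and observe that the resulting scalars multiply together to give the claimed product of operators. The only potentially confusing aspect is the bookkeeping across the many registers, so I would set up notation carefully before computing.

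First, I would expand $\ket{\Omega_S}$ in the computational basis, so that the state fed into $\cE$ becomes
\begin{align*}
\ket{\psi}_{\reg{C_0}} \ot \sum_{(x_1,\dots,x_t) \in S} \ket{x_1}_{\reg{C_1}}\, U_1\ket{x_1}_{\reg{C'_1}} \ot \cdots \ot \ket{x_t}_{\reg{C_t}}\, U_t\ket{x_t}_{\reg{C'_t}}.
\end{align*}
Writing out $\ket{\Omega}_{\reg{C_0 C_1}} = \sum_i \ket{i}_{\reg{C_0}}\ket{i}_{\reg{C_1}}$ and using that the $\ket{x_j}$ are computational basis vectors, the first projection contracts $\reg{C_0}$ with $\reg{C_1}$ and produces the scalar $\braket{x_1|\psi}$. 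Similarly, unfolding $\bra{\Omega_{A_i}}_{\reg{C'_i C_{i+1}}} = \bra{\Omega}_{\reg{C'_i C_{i+1}}}((A_i)_{\reg{C'_i}} \ot \id_{\reg{C_{i+1}}})$ and applying it to $U_i \ket{x_i}_{\reg{C'_i}} \ket{x_{i+1}}_{\reg{C_{i+1}}}$ yields the scalar $\bra{x_{i+1}} A_i U_i \ket{x_i}$.

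Iterating the above for $i = 1,\dots,t-1$ and applying $(A_t)_{\reg{C'_t}}$ at the end leaves a vector on the single remaining register $\reg{C'_t}$ of the form
\begin{align*}
\sum_{(x_1,\dots,x_t) \in S} \braket{x_1|\psi} \cdot \prod_{i=1}^{t-1} \bra{x_{i+1}} A_i U_i \ket{x_i} \cdot A_t U_t \ket{x_t}.
\end{align*}
Since the terms $\bra{x_{i+1}} A_i U_i \ket{x_i}$ and $\braket{x_1|\psi}$ are scalars, I may commute them freely with the vector $A_t U_t\ket{x_t}$ and rewrite each scalar as an operator by reinserting $\ket{x_i}\bra{x_i} = \proj{x_i}$. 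This rearranges the expression into
\begin{align*}
\sum_{(x_1,\dots,x_t) \in S} A_t U_t \proj{x_t} A_{t-1} U_{t-1} \proj{x_{t-1}} \cdots A_1 U_1 \proj{x_1} \ket{\psi},
\end{align*}
which is exactly the vector in the claimed $\mathrm{proj}(\cdot)$. Since $\cE$ is implemented by a tensor product of bra-vectors sandwiched symmetrically against the input density matrix, the output is a rank-one operator equal to $\mathrm{proj}$ of this vector.

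\textbf{Main obstacle.} There is no real mathematical obstacle; the only care required is in the ordering conventions (in particular, making sure that contracting $\bra{\Omega}_{\reg{C_0 C_1}}$ against $\ket{\psi}_{\reg{C_0}}\ket{x_1}_{\reg{C_1}}$ genuinely produces $\braket{x_1|\psi}$ rather than its complex conjugate, which relies on the $\ket{x_i}$ being computational basis states) and in making sure the partial-trace structure of the density-matrix version of $\cE$ produces precisely the outer product of the vector computed above with its adjoint, which follows because $\cE$ acts by the same tensor product of bras on both sides.
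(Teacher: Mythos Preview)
Your proposal is correct and follows essentially the same route as the paper's own proof: both reduce to the vector-level identity, expand $\ket{\Omega}$ and $\ket{\Omega_S}$ in the computational basis, contract register by register to obtain the scalars $\braket{x_1|\psi}$ and $\bra{x_{i+1}} A_i U_i \ket{x_i}$, and then reassemble these into the operator product via $\proj{x_i}$. The only cosmetic difference is that the paper expands all the maximally entangled bras at once into a large sum over indices $(i_0,\dots,i_{t-1})$ and then collapses with Kronecker deltas, whereas you perform the contractions sequentially; the computation is the same.
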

\begin{proof}
It suffices to show that 
\begin{multline*}
\left( \bra{\Omega}_{C_0 C_1} \bra{\Omega_{A_1}}_{\reg{C_1' C_2}} \ot \dots \ot \bra{\Omega_{A_{t-1}}}_{\reg{C_{t-1}' C_{t}}} \ot (A_t)_{\reg{C'_t}} \right) \left( \ket{\psi}_{\reg{C_0}} \ot \Big( \id_{\reg{C_1}} \ot (U_1)_{\reg{C'_1}} \ot \dots  \id_{\reg{C_t}} \ot (U_t)_{\reg{C'_t}} \Big) \ket{\Omega_S}_{\reg{C_1 C_1' \dots C_{t} C'_t}} \right) \\
= \sum_{(x_1, \dots, x_t) \in S} A_t U_t \proj{x_t} A_{t-1} U_{t-1} \proj{x_{t-1}} \dots A_1 U_1 \proj{x_1} \ket{\psi}_{\reg{C_0}} \,.
\end{multline*}
Inserting the definitions of $\ket{\Omega_{A_i}}$ and $\ket{\Omega_S}$, we get that the l.h.s.~of the above equation is equal to 
\begin{align*}
&\sum_{\substack{(i_0, \dots, i_t) \in [d]^t \\ (x_1, \dots, x_t) \in S}}
\Big( \bra{i_0}_{\reg{C_0}} \bra{i_0}_{\reg{C_1}} \bra{i_1}_{\reg{C'_1}}  \bra{i_1}_{\reg{C_2}} \ot \dots \ot \bra{i_{t-1}}_{\reg{C'_{t-1}}}  \bra{i_{t-1}}_{\reg{C_t}} \Big) 
\Big( \ket{\psi}_{\reg{C_0}} \ot \id_{\reg{C_1}} \ot (A_1 U_1)_{\reg{C'_1}} \ot \dots \ot \id_{\reg{C_t}} \ot (A_t U_t)_{\reg{C'_t}} \Big) \\
& \qquad \qquad \qquad \Big( \ket{x_1}_{\reg{C_1}}\ket{x_1}_{\reg{C'_1}} \ot \dots \ot \ket{x_t}_{\reg{C_t}}\ket{x_t}_{\reg{C'_t}} \Big) \\
= & \sum_{\substack{(i_0, \dots, i_t) \in [d]^t \\ (x_1, \dots, x_t) \in S}} \delta_{i_0, x_1} \cdots \delta_{i_{t-1}, x_t}
\Big( \bra{i_0}_{\reg{C_0}} \bra{i_1}_{\reg{C'_1}} \ot \dots \ot  \bra{i_{t-1}}_{\reg{C'_{t-1}}} \Big) 
\Big( \ket{\psi}_{\reg{C_0}} \ot (A_1 U_1)_{\reg{C'_1}} \ot \dots \ot (A_t U_t)_{\reg{C'_t}} \Big) \\
& \qquad \qquad \qquad \Big( \ket{x_1}_{\reg{C'_1}} \ot \dots \ot \ket{x_t}_{\reg{C'_t}} \Big) \\
= & \sum_{(x_1, \dots, x_t) \in S} \braket{x_1 | \psi} \braket{x_2 | A_1 U_1 | x_1} \braket{x_3 | A_2 U_2 | x_2} \cdots \braket{x_t | A_{t-1} U_{t-1} | x_{t-1}} ( A_t U_t \ket{x_t}) \\
= & \sum_{(x_1, \dots, x_t) \in S} A_t U_t \proj{x_t} A_{t-1} U_{t-1} \proj{x_{t-1}} \dots A_1 U_1 \proj{x_1} \ket{\psi} \,.
\end{align*}
For the second line, we multiplied bras and kets on the non-primed systems $\reg{C_1, \dots, C_t}$ and used orthonormality of the basis kets.
\end{proof}

With the above two lemmas in hand, we can now prove \cref{thm:pri_adaptive}.

\begin{proof}[Proof of \cref{thm:pri_adaptive}]
First note that $V_k$ is efficient to implement by the same reasoning as in \cref{thm:pru_security}, and because the $\ket{+}^{\ot s(n)}$-state is efficient to prepare.

We need to prove that for any QPT algorithm $\algo A$, it holds that $$
    \vline\, \underset{k \sim \algo K}{\Pr}[\algo A^{V_k}(1^\lambda)=1] - \underset{V \sim \Haar(2^{n-s(n)},2^n)}{\Pr}[\algo A^{V}(1^\lambda) =1]  \,\vline \,\leq \, \negl(n)\,.
    $$ 
As in the proof of \Cref{thm:pru_security}, we can replace the pseudorandom isometry $V_k$ by a ``fully random'' isometry, where instead of using pseudorandom permutations and phases we sample uniformly random permutations and phases, i.e.~we sample from the $PF$ ensemble and then fix the last $s(n)$ qubits to be $\ket{+}$.
We denote an isometry sampled from this ensemble as $V \sim (PF)_{s(n)}$.
Since the algorithm $\cA$ is computationally efficient and we use quantum-secure pseudorandom permutations and functions, it follows that 
    $$
    \vline\, \underset{k \sim \algo K}{\Pr}[\algo A^{V_k}(1^\lambda)=1] - \underset{V \sim (PF)_{s(n)}}{\Pr}[\algo A^{V}(1^\lambda) =1]  \,\vline \,\leq \, \negl(n)\,.
    $$ 

By purifying the operations of $\cA$, for any fixed choice of isometry $V$ we can describe the algorithm's operation as follows: first prepare the state 
\begin{align}
    \ket{A(V)} \deq A_t (V \otimes I) A_{t-1} (V \otimes I) \cdots A_1 (V \otimes I) \ket{A_0} \label{eqn:adaptive_state}
\end{align}
(for some choice of unitaries $A_1, \dots, A_t$, which describe $\cA$'s actions in between querying the isometry, and an arbitrary initial state $\ket{A_0}$) and then perform a binary measurement $\{M, \id - M\}$, with ``$M$'' denoting the ``1''-outcome.
Then for any distribution $\cV$ of isometries, 
\begin{align*}
\Pr_{V \sim \cV}[\cA^V = 1] = \Tr[M \E_{V \sim \cV} \proj{A(V)}] \,.
\end{align*}
Therefore to prove the theorem, it suffices to show that
\begin{align}
\norm{\E_{V \sim (PF)_{s(n)}} \proj{A(V)} - \E_{V \sim \Haar(2^{n-s(n),2^n})} \proj{A(V)}}_1 = \negl(n) \,. \label{eqn:fri_security}
\end{align}

To proceed with the proof, we need to introduce some additional notation in \cref{eqn:adaptive_state}.
We can without loss of generality assume that $\cA$'s space is made up of the following registers: an $(n-s(n))$-qubit register $\reg A$, which is the input register for each application of the isometry; $s(n)$-qubit registers $\reg{B_1, \dots, B_t}$, where $\reg{B_i}$ is the additional output register produced by the $i$-the call to the isometry (i.e.~the $i$-call to the isometry is a map $\reg A \to \reg{A B_i}$); and an arbitrary workspace register $\reg R$.
The final state $\ket{A(V)}$ is then a state on registers $\reg{A R B_1 \dots B_t}$.

We can view the $i$-th call to the isometry $V: \reg A \to \reg{A B_i}$ as a call to a unitary $U: \reg{A B_i} \to \reg{A B_i}$, with the input on the $\reg{B_i}$ register fixed to $\ket{+^{s(n)}}_{\reg{B_i}}$.
This holds for both our construction of $V$, where we have the corresponding unitary $U = F P$, and a Haar random isometry $V$, where the corresponding unitary $U$ is Haar random.
Then we can view the additional $\ket{+^{s(n)}}_{\reg{B_i}}$-states as part of the input and write 
\begin{align*}
\ket{A(V)} \deq A_t (U_{\reg{A B_t}} \otimes \id_\reg{R B_{\setminus t}}) A_{t-1} (U_{\reg{A B_{t-1}}} \otimes \id_\reg{R B_{\setminus t-1}}) \cdots A_1 (U_{\reg{A B_1}} \otimes \id_{\reg{R B_{\setminus 1}}}) (\ket{A_0}_{\reg{A R}} \ot \ket{+^{s(n)}}_{\reg{B_1}} \ot \cdots \ot \ket{+^{s(n)}}_{\reg{B_t}}) \,.
\end{align*}
Here, $B_{\reg{\setminus i}}$ is shorthand for $\reg{B_1 \dots B_{i-1} B_{i+1} \dots B_t}$.

\paragraph{Reduction to distinct string inputs.}
As in the proof of \cref{thm:pfc-ensemble}, the next step is to restrict the inputs to the unitaries in $\ket{A(V)}$ to distinct strings.
The notion of the distinct string subspace is less clear for adaptive queries; we comment on what this means in the case of PRUs in \cref{sec:towards-adaptive}.
For the simpler case of isometries, we can simply project the fixed inputs $\ket{+^{s(n)}}_{\reg{B_1}} \ot \cdots \ot \ket{+^{s(n)}}_{\reg{B_t}}$ onto the distinct string subspace.

Formally, let $\Lambda_{\reg{B_1 \dots B_t}}$ be the projector onto the distinct string subspace of $\reg{B_1 \dots B_t}$ and let $\ket{+}_{\reg{B_1 \dots B_t}} := \ket{+^{s(n)}}_{\reg{B_1}} \ot \cdots \ot \ket{+^{s(n)}}_{\reg{B_t}}$ be the uniform superposition on registers $\reg{B_1 \dots B_t}$. Then, since $s(n) = \omega(\log n)$, it follows that
\begin{align*}
\norm{\ket{+}_{\reg{B_1 \dots B_t}} - \Lambda_{\reg{B_1 \dots B_t}} \ket{+}_{\reg{B_1 \dots B_t}}}_2 = \negl(n) \,,
\end{align*}
because the probability of observing a collision (i.e.~the probability of getting the same computational basis string in any two registers in $\reg{B_1 \dots B_t}$) when measuring $\ket{+}_{\reg{B_1 \dots B_t}}$ is negligible in $n$. This implies that 
\begin{align*}
\norm{\ket{A(V)} - \ket{A(V)_{\Lambda}}} = \negl(n) \,,
\end{align*}
where we defined 
\begin{align*}
\ket{A(V)_{\Lambda}} \deq A_t (U_{\reg{A B_t}} \otimes I) A_{t-1} (U_{\reg{A B_{t-1}}} \otimes I) \cdots A_1 (U_{\reg{A B_1}} \otimes I) (\ket{A_0}_{\reg{A R}} \ot \Lambda_{\reg{B_1 \dots B_t}} \ket{+}_{\reg{B_1 \dots B_t}}) \,.
\end{align*}
Consequently, it suffices to show that 
\begin{align}
\norm{\E_{V \sim (PF)_{s(n)}} \proj{A(V)_{\Lambda}} - \E_{V \sim \Haar} \proj{A(V)_{\Lambda}}}_1 = \negl(n) \,,\label{eqn:dis_security}
\end{align}
which then implies \cref{eqn:fri_security}.

\paragraph{Distinct state as output of gate teleportation.}

The next step is to write the state $\ket{A(V)_{\Lambda}}$ as the output of the gate teleportation map from \cref{lem:general_gate_teleportation}.
For this, we first observe that we can rewrite
\begin{multline}
\ket{A(V)_{\Lambda}} = \sum_{(y_1, \dots, y_t) \in \distinct(s(n), t)} A_t (U_{\reg{A B_t}} \proj{y_t}_{\reg{B_t}} \otimes I) A_{t-1} (U_{\reg{A B_{t-1}}} \proj{y_{t-1}}_{\reg{B_{t-1}}} \otimes I) \\ \cdots A_1 (U_{\reg{A B_1}} \proj{y_1}_{\reg{B_1}} \otimes I) (\ket{A_0}_{\reg{A R}} \ot \ket{+}_{\reg{B_1 \dots B_t}}) \,. \label{eqn:dis_yi}
\end{multline}
To make this look like an output state from \cref{lem:general_gate_teleportation}, in each step we can insert identities on all systems except the one that $\proj{y_i}$ is acting on.
More formally, we have that 
\begin{multline*}
\ket{A(V)_{\Lambda}} = \sum_{(x_1, \dots, x_t) \in S} A_t (U_{\reg{A B_t}} \otimes I) \proj{x_t} A_{t-1} (U_{\reg{A B_{t-1}}} \otimes I) \proj{x_{t-1}} \\ \cdots A_1 (U_{\reg{A B_1}} \otimes I) \proj{x_1} (\ket{A_0}_{\reg{A R}} \ot \ket{+}_{\reg{B_1 \dots B_t}})
\end{multline*}
for the following subset of strings (where $D = |R| + |A| + |B_1| + \dots + |B_t|$ is the total dimension of the space used by algorithm $\cA$):
\begin{align}
S = \Big\{ (x_1, \dots, x_t) \in [D]^t \;|\;
((x_1)_{[\reg{B_1}]}, \dots, (x_t)_{[\reg{B_t}]}) \in \distinct(s(n), t) \Big\} \,, \label{eqn:def_S}
\end{align}
where $(x_i)_{[\reg{B_i}]}$ denotes the substring of $x_i$ corresponding to register $B_i$.
In other words, $S$ contains tuples of strings; each string $x_i$ can be associated to a basis ket $\ket{x_i}_{\reg{RAB_1 \dots B_t}}$ on $\cA$'s total workspace; and $S$ includes all tuples such that when we only look at the part of $x_i$ that corresponds to register $\reg{B_i}$ (which is where $\proj{y_i}$ acts in \cref{eqn:dis_yi}), all of these substrings are distinct (since all $y_i$ in \cref{eqn:dis_yi} are distinct).
Using the shorthand $W_i = U_{\reg{AB_i}} \ot \id_{\reg{B_{\setminus i}} \ot \id_{\reg R}}$, we can write this more compactly as 
\begin{align*}
\ket{A(V)_{\Lambda}} = \sum_{(x_1, \dots, x_t) \in S} A_t W_t \proj{x_t} A_{t-1} W_{t-1} \proj{x_{t-1}} \cdots A_1 W_1 \proj{x_1} (\ket{A_0}_{\reg{A R}} \ot \ket{+}_{\reg{B_1 \dots B_t}})
\end{align*}

It then follows from \cref{lem:general_gate_teleportation} that 
\begin{align}
\proj{A(V)_{\Lambda}} = \cE \left( \Big( \id_{\reg{C_1}} \ot (W_1)_{\reg{C'_1}} \ot \dots  \id_{\reg{C_t}} \ot (W_t)_{\reg{C'_t}} \Big) \proj{\Omega_S} \Big( \id_{\reg{C_1}} \ot (W_1)_{\reg{C'_1}} \ot \dots  \id_{\reg{C_t}} \ot (W_t)_{\reg{C'_t}} \Big)^\dagger \right) \,, \label{eqn:output_of_tele}
\end{align}
where $\cE$ is the gate teleportation channel defined in \cref{lem:general_gate_teleportation} and $\reg{C'_i} \equiv \reg{C_i} \equiv \reg{AR B_1 \dots B_t}$ are copies of the total workspace.

\paragraph{Using the relative-error property on the distinct subspace.}
Observe that in the state 
\begin{align*}
\Big( \id_{\reg{C_1}} \ot (W_1)_{\reg{C'_1}} \ot \dots  \id_{\reg{C_t}} \ot (W_t)_{\reg{C'_t}} \Big) \proj{\Omega_S} \Big( \id_{\reg{C_1}} \ot (W_1)_{\reg{C'_1}} \ot \dots  \id_{\reg{C_t}} \ot (W_t)_{\reg{C'_t}} \Big)^\dagger
\end{align*}
from \cref{eqn:output_of_tele}, the different $U_{\reg{A B_i}}$ (which appear, tensored with identity, in $W_i$) act on distinct strings.
This is ensured by the fact that in the definition of the set $S$ (\cref{eqn:def_S}), we have the condition that the substrings $(x_i)_{[\reg{B_i}]}$ on the systems $\reg{B_i}$ (which is part of the input system to $U_{\reg{A B_i}}$) are distinct.
It therefore follows from \cref{lem:rel_error_distinct} that 
\begin{align*}
&\E_{U \sim PF} \Big( \id_{\reg{C_1}} \ot (W_1)_{\reg{C'_1}} \ot \dots  \id_{\reg{C_t}} \ot (W_t)_{\reg{C'_t}} \Big) \proj{\Omega_S} \Big( \id_{\reg{C_1}} \ot (W_1)_{\reg{C'_1}} \ot \dots  \id_{\reg{C_t}} \ot (W_t)_{\reg{C'_t}} \Big)^\dagger \\
\leq (1+O(t^2/d)) & \E_{U \sim \Haar} \Big( \id_{\reg{C_1}} \ot (W_1)_{\reg{C'_1}} \ot \dots  \id_{\reg{C_t}} \ot (W_t)_{\reg{C'_t}} \Big) \proj{\Omega_S} \Big( \id_{\reg{C_1}} \ot (W_1)_{\reg{C'_1}} \ot \dots  \id_{\reg{C_t}} \ot (W_t)_{\reg{C'_t}} \Big)^\dagger \,,
\end{align*}
where $W_i = U_{\reg{AB_i}} \ot I$ as before.

We can insert this into \cref{eqn:output_of_tele} and use the fact that the gate teleportation channel $\cE$ is manifestly completely positive to find that 
\begin{align}
\E_{V \sim (PF)_{s(n)}} \proj{A(V)_{\Lambda}} \leq (1 + O(t^2/d)) \E_{V \sim \Haar} \proj{A(V)_{\Lambda}} \,. \label{eqn:opineq_distinct}
\end{align}
To see how this implies \cref{eqn:dis_security}, note that using the variational definition of trace distance, there exists an $M$ such that $0 \leq M \leq \id$ and\footnote{We do not need absolute value signs because the expression in parentheses is the difference between two quantum states and therefore traceless. Consequently we can always replace $M \mapsto \id - M$ to switch the sign of the trace expression.} 
\begin{align*}
\text{l.h.s.~of \cref{eqn:dis_security}} &= \Tr[M \left( \E_{V \sim (PF)_{s(n)}} \proj{A(V)_{\Lambda}} - \E_{V \sim \Haar} \proj{A(V)_{\Lambda}} \right)] \\
&\leq O(t^2/d) \Tr[M \E_{V \sim \Haar} \proj{A(V)_{\Lambda}}] \\
&\leq O(t^2/d) \,.
\end{align*}
The first inequality follows by inserting \cref{eqn:opineq_distinct} and remembering that $M$ is positive semi-definite, so the map $X \mapsto \Tr[M X]$ is operator-monotone.
The second inequality uses the fact that the trace expression is at most 1, which holds because $M \leq \id$ and $\E_{V \sim \Haar} \proj{A(V)_{\Lambda}}$ is a quantum state.
This proves \cref{eqn:dis_security} and completes the proof.
\end{proof}

\subsection{Towards PRUs with adaptive security} \label{sec:towards-adaptive}

We briefly comment on possible ways to extend the proof of \cref{thm:pri_adaptive} to adaptively secure PRUs.
The reason why the current proof of \cref{thm:pri_adaptive} only works for isometries is the following.
We can analyse the state $\ket{A(V)_{\Lambda}}$ using the relative error property of the $PF$ ensemble (\cref{lem:rel_error_distinct}).
This part of the analysis is general and does not require fixing part of the input state to the unitary to $\ket{+}$.
The part of the analysis that limits us to isometries is relating the final state $\ket{A(V)}$ of an adaptive algorithm to the state $\ket{A(V)_{\Lambda}}$, where all the queries to $V$ (or rather its unitary extension $U = PF$) are restricted to distinct strings.
If we fix part of the input to the $\ket{+}$-state as we do in \cref{thm:pri_adaptive}, this fixed part of the input already ensures that the unitaries $U = PF$ are only queried on distinct strings (except with negligible weight).
This is the case because for $s(n) = \omega(\log n)$, the ``collision probability'' (i.e.~the probability of getting the same computational basis string) when measuring $\ket{+}^{\ot s(n)}$ is negligible in $n$.

To extend this to PRUs, we need to be able to ensure that no adaptive query algorithm queries the $PF$-ensemble twice with non-negligible weight on the same computational basis string.
The natural approach to this is to prepend the $PF$ ensemble with another ensemble of unitaries that sufficiently scrambles the input state to make sure that the $PF$ ensemble is not queried twice on the same computational basis string.
In the non-adaptive case, this role was played by a random Clifford unitary.
Unfortunately, it is unclear whether random Clifford unitaries still work for this purpose in the adaptive case.

More formally, we make the following conjecture.
\begin{conjecture} \label{conj:scrambling}
There exists an ensemble of efficient\footnote{In fact, it suffices if the ensemble is computationally indistinguishable from an ensemble of computationally efficient unitaries. Then we can replace the inefficient ensemble by the efficient one as in the proof of \cref{thm:pru_security}.} $n$-qubits unitaries $W \sim \cW$ such that for all $t = \poly(n)$ and all initial states $\ket{B_0}$ and sequences of unitaries $B_1, \dots, B_t$ on $n+\poly(n)$ qubits,
\begin{align*}
\norm{\E_{W \in \cW} \proj{B(W)} - \E_{W \in \cW} \proj{B(W)_\Lambda}}_1 = \negl(n) \,,
\end{align*}
where 
\begin{align*}
\ket{B(W)} &= B_t (W \ot \id_m) B_{t-1} (W \ot \id_m) \cdots  B_1 (W \ot \id_m) \ket{B_0} \,, \\
\ket{B(W)_\Lambda} &= \sum_{(x_1, \dots, x_t) \in \distinct(n,t)} B_t (\proj{x_t} W \ot \id_m) B_{t-1} (\proj{x_{t-1}} W \ot \id_m) \cdots  B_1 (\proj{x_{1}} W \ot \id_m) \ket{B_0} \,.
\end{align*}
\end{conjecture}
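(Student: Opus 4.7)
My plan is to prove the conjecture by instantiating $\cW$ with an ensemble possessing strong design properties---for instance the $PFC$ ensemble from \cref{def:pfc-ensemble} or its pseudorandom variant---and analysing the collision probability of sequential computational-basis measurements one would hypothetically perform on register $A$ after each invocation of $W$. The high-level intuition is that for a Haar random $W$, the outcome of such a measurement is approximately uniform on $\bit^n$, so over $t$ steps the probability of any two outcomes coinciding is $O(t^2/2^n)$; the $(2t)$-design (or PRU) property then transfers this bound to an efficient ensemble.

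First, I would introduce $t$ ancillary $n$-qubit registers $E_1,\dots,E_t$ and define a dilated state $\ket{\tilde B(W)}$ by inserting a coherent copy isometry $V_i:\ket{y}_A\ket{0}_{E_i}\mapsto\ket{y}_A\ket{y}_{E_i}$ between each application of $W$ and the subsequent $B_i$. Expanding yields $\ket{\tilde B(W)} = \sum_{\mathbf x}\ket{\mathbf x}_E\ot\ket{\psi_{\mathbf x}}$ with $\ket{\psi_{\mathbf x}} = B_t\proj{x_t}WB_{t-1}\proj{x_{t-1}}W\cdots B_1\proj{x_1}W\ket{B_0}$, so that $\ket{B(W)} = \sum_{\mathbf x}\ket{\psi_{\mathbf x}}$ and $\ket{B(W)_\Lambda} = \sum_{\mathbf x\in\distinct(2^n,t)}\ket{\psi_{\mathbf x}}$. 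Using $\|\proj{\phi_1}-\proj{\phi_2}\|_1\leq 2\|\ket{\phi_1}-\ket{\phi_2}\|$ for subnormalised pure states together with convexity of the trace norm, the conjecture reduces to showing that $\E_{W\sim\cW}\|\sum_{\mathbf x\notin\distinct(2^n,t)}\ket{\psi_{\mathbf x}}\|$ is negligible in $n$.

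Next, since this quantity is a degree-$(2t)$ polynomial in the entries of $W$, I would use the assumption that $\cW$ is an approximate $(2t)$-design---which can be achieved information-theoretically via \cref{cor:t-designs-explicit} for $t=\poly(n)$, or computationally via our PRU construction---to replace $\E_{W\sim\cW}$ by $\E_{U\sim\Haar}$ up to additive error $\negl(n)$. For Haar random $U$, invariance of the measure implies that each hypothetical measurement outcome is close to uniform conditioned on the history, so a union bound over the $\binom{t}{2}$ pairs of steps gives $\sum_{\mathbf x\notin\distinct(2^n,t)}\|\ket{\psi_{\mathbf x}}\|^2 = O(t^2/2^n)$.

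The main obstacle is bridging the gap between this ``incoherent'' collision probability $\sum_{\mathbf x\notin\distinct}\|\ket{\psi_{\mathbf x}}\|^2$ and the quantity the conjecture actually demands, $\|\sum_{\mathbf x\notin\distinct}\ket{\psi_{\mathbf x}}\|^2$. A naive Cauchy--Schwarz bound acquires a prefactor equal to the number of non-distinct tuples, which is exponentially large and destroys the estimate. A promising route is a compressed-oracle-style argument (in the spirit of Zhandry's analysis for random functions) that coherently tracks a ``database'' of computational-basis query--response pairs and shows that the amplitude on collision databases remains negligible, automatically controlling the cross-terms. Alternatively, one could perform a direct Weingarten-calculus computation of the relevant $2t$-th Haar moment, exploiting the fact that the dominant permutation pairing in such a sum corresponds precisely to the distinct-string subspace already analysed in \cref{thm:pfc-ensemble}. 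I expect reconciling the coherent and incoherent pictures to be the central technical difficulty.
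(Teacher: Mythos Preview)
The paper does not prove this statement: \cref{conj:scrambling} is explicitly left as an open problem (``We leave it as an interesting open problem to prove \cref{conj:scrambling}''), so there is no proof in the paper to compare your proposal against. Your write-up should therefore be read as an attempt to resolve an open question, and on that score it does not succeed.

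There are two substantive gaps. First, the reduction from $\cW$ to the Haar measure is circular in the way you propose it. You argue that the relevant quantity is a degree-$(t,t)$ polynomial in the entries of $W,\bar W$ and therefore controlled by a $(2t)$-design. For an \emph{exact} $t$-design this is fine, but no efficient ensemble is known to be an exact $t$-design for $t=\poly(n)$. For a \emph{diamond-error} approximate design (which is what \cref{cor:t-designs-explicit} gives), you only control expressions of the non-adaptive form $\Tr[M\,U^{\ot t}\rho\,U^{\ot t,\dagger}]$; the interleaved structure $B_tWB_{t-1}W\cdots$ is precisely the adaptive pattern that the conjecture is meant to tame. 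Converting a diamond-error design to one that handles such adaptive polynomials requires either relative error (incurring a $d^{2t}$ blow-up via \cref{lem:diamond-to-rel}) or the gate-teleportation trick---both of which are exactly the mechanisms the paper already uses and which stall at the distinct-subspace step. So invoking a design to pass to Haar assumes what you are trying to prove.

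Second, even granting the Haar case, you correctly identify but do not close the coherent/incoherent gap. Your argument bounds $\sum_{\mathbf x\notin\distinct}\|\ket{\psi_{\mathbf x}}\|^2$, the probability that \emph{hypothetical} intermediate measurements collide; the conjecture asks for $\|\sum_{\mathbf x\notin\distinct}\ket{\psi_{\mathbf x}}\|$, which involves exponentially many cross-terms $\braket{\psi_{\mathbf x}|\psi_{\mathbf x'}}$. Neither of your proposed fixes is fleshed out: a compressed-oracle argument for Haar \emph{unitaries} (as opposed to random functions) is itself an open line of research, and a direct Weingarten computation of these cross-terms runs into the same combinatorial explosion that makes the adaptive case hard in the first place. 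Moreover, your claim that ``each hypothetical measurement outcome is close to uniform conditioned on the history'' is not obviously correct even for Haar $W$: consecutive queries use the \emph{same} $W$, so conditioning on $x_1$ constrains $W$ in a way that could in principle bias $x_2$; establishing this rigorously already requires a higher-moment calculation. In short, your outline reproduces the informal intuition the paper gives for why the conjecture is plausible, but does not supply the missing technical ingredient.
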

Assuming \cref{conj:scrambling}, it follows from the proof of \cref{thm:pri_adaptive} that the random unitary $PFW$, with $P$ and $F$ as before and $W \sim \cW$, is a PRU with adaptive security:
we can use the conjecture to perform the ``reduction to distinct string inputs'' in the proof of \cref{thm:pri_adaptive} (with $B_i = A_i (P F \ot I)$) and the rest of the proof goes through unchanged.
We leave it as an interesting open problem to prove \cref{conj:scrambling}.


\bibliographystyle{alpha}
\bibliography{main}

\end{document}